\documentclass{amsart}

\usepackage[foot]{amsaddr}
\usepackage{amsfonts, amstext, amsmath, amssymb, stmaryrd, mathrsfs, calc, amsthm, xspace, mathtools}  
\usepackage{bm}         
\usepackage{a4wide}     
\usepackage{enumerate}  
\usepackage{macros}     

\usepackage{graphicx}   



\title[Reachability in Two-Dimensional Vector Addition Systems with
  States]{Reachability in Two-Dimensional Vector Addition Systems with
  States is PSPACE-complete}

\author[M. Blondin]{Michael Blondin$^{1,2,\ast}$}
\address{$^1$DIRO, Universit\'{e} de Montr\'{e}al, Canada}
\email{blondimi@iro.umontreal.ca}
\thanks{$^\ast$Supported by the Fonds qu\'{e}b\'{e}cois
  de la recherche sur la nature et les technologies and by the French
  Centre nationale de la recherche scientifique.}

\author[A. Finkel]{Alain Finkel$^{2,\dagger}$}
\address{$^2$Laboratoire Sp\'ecification et V\'erification (LSV),
  ENS de Cachan \& CNRS, France}
\email{finkel@lsv.ens-cachan.fr}
\thanks{$^\dagger$Supported by the French Agence nationale
  de la recherche, \textsc{ReacHard} (grant ANR-11-BS02-001).}

\author[S. G\"oller]{Stefan G\"{o}ller$^{2,\ddagger,\mathsection}$}
\email{goeller@lsv.ens-cachan.fr}
\thanks{$^\ddagger$Parts of this work were carried out while the
  author was at Technische Universit\"{a}t M\"{u}nchen, Germany.}

\author[C. Haase]{Christoph Haase$^{2,\dagger,\mathsection}$}
\email{haase@lsv.ens-cachan.fr}
\thanks{$^\mathsection$Supported by Labex Digicosme, Univ. Paris-Saclay, 
  project VERICONISS}

\author[P. McKenzie]{Pierre McKenzie$^{1,2,\mathparagraph}$}
\email{mckenzie@iro.umontreal.ca}
\thanks{$^\mathparagraph$Supported by the Natural Sciences and
  Engineering Research Council of Canada and by the ``Chaire Digiteo,
  ENS Cachan - {\'E}cole Polytechnique''.}

\begin{document}

\begin{abstract}
  Determining the complexity of the reachability problem for vector
  addition systems with states (VASS) is a long-standing open problem
  in computer science. Long known to be decidable, the problem to this
  day lacks any complexity upper bound whatsoever. In this paper,
  reachability for two-dimensional VASS is
  shown \PSPACE-complete. This improves on a previously known doubly
  exponential time bound established by Howell, Rosier, Huynh and Yen
  in 1986. The coverability and boundedness problems are also noted to
  be \PSPACE-complete. In addition, some complexity results are given
  for the reachability problem in two-dimensional VASS and in integer
  VASS when numbers are encoded in unary.
\end{abstract}

\maketitle 

\section{Introduction}

{\em Petri nets} have a long history. Since their introduction
\cite{pe62} by Carl Adam Petri in 1962, thousands of papers on Petri
nets have been published. Nowadays, Petri nets find a variety of
applications, ranging, for instance, from modeling of biological,
chemical and business processes to the formal verification of
concurrent programs, see
\emph{e.g.}~\cite{HGD08,RLM96,vdA98,GS92,BCR01}. For the analysis of
algorithmic properties of Petri nets, in the contemporary literature
they are often equivalently viewed as \emph{vector addition systems
  with states (VASS)}, and we will adopt this view in the remainder of
this paper. A VASS comprises a finite-state controller with a finite
number of counters ranging over the natural numbers. The number of
counters is usually referred to as the dimension of the VASS, and we
write $d$-VASS when we talk about VASS in dimension $d$. When taking a
transition, a VASS can add or subtract an integer from a counter,
provided that the resulting counter values are greater than or equal
to zero; otherwise the transition is blocked. A configuration of a
VASS is a tuple consisting of a control state and an assignment of the
counters to natural numbers. The central decision problem for VASS is
\emph{reachability}: given two configurations, is there a path
connecting them in the infinite graph induced by the VASS?

Even clarifying the decidability status of the reachability problem
required tremendous efforts, and it actually took until 1981 for it to
be shown decidable. This was achieved by Mayr~\cite{ma81}, who built
upon an earlier partial proof by Sacerdote and
Tenney~\cite{sate77}. Mayr's argument was then polished and simplified
by Kosaraju~\cite{ko82} in 1982, and Kosaraju's argument was in turn
simplified ten years later by Lambert~\cite{la92}. Only recently
beginning in 2009, Leroux developed, in a series of papers, a
fundamentally different approach to the decidability of the
reachability problem~\cite{Ler09,Ler11,Ler12}. But to this day, no
explicit upper bound on the complexity of the general reachability
problem for VASS is known. A primitive recursive upper bound claim
made in 1998~\cite{bo98} was dismissed in~\cite{ja08}.

Milestones in the work on the computational complexity of the
reachability problem for VASS include Lipton's proof
of \EXPSPACE-hardness~\cite{Lipt76}. This lower bound is independent
of the encoding of numbers, it does however require an unbounded
number of counters. Deciding reachability of VASS in dimension one
assuming unary encoding of numbers is easily seen to be \NL-complete:
the lower bound is inherited from graph reachability and the upper
bound follows from a simple pumping argument.
When numbers are encoded in binary, reachability in VASS in dimension
one is known to be \NP-complete~\cite{HKOW09}. A substantial
contribution towards showing the decidability of the general
reachability problem was made by Hopcroft and Pansiot in 1979, who
showed that reachability in VASS in dimension two is
decidable~\cite{HP79}. To this end, they developed an intricate
algorithm that implicitly exploits the fact that the reachability set
of a VASS in dimension two is semi-linear. Moreover, they could show
that their method breaks down for VASS in any greater dimension, as
the authors exhibited a VASS in dimension three with a reachability
set that is not semi-linear. Yet, aspects of computational complexity
were completely left unanswered in~\cite{HP79}. In 1986, Howell,
Rosier, Huynh and Yen~\cite{DBLP:journals/tcs/HowellRHY86} analyzed
Hopcroft and Pansiot's algorithm and showed that it runs in
nondeterministic doubly-exponential time, independently of whether
numbers are presented in unary or binary. They could improve this
nondeterministic doubly-exponential time upper bound to a
deterministic doubly-exponential one and also identify a family of
VASS in dimension two on which Hopcroft and Pansiot's algorithm
requires doubly-exponential time.
In summary, since 1986 it has been state-of-the-art that reachability
in VASS in dimension two is in 2-\etime, and \NL-hard and \NP-hard,
depending on whether numbers are encoded in unary or binary. Apart
from \EXPSPACE-hardness and decidability, no complexity-theoretic 
upper bound is known for the
complexity of reachability in VASS in any dimension greater than two.

\begin{figure}[t]
  \includegraphics{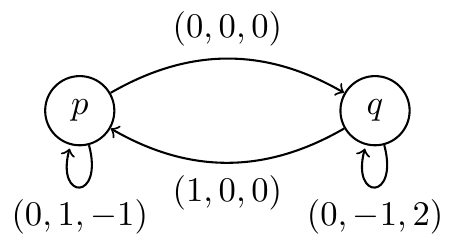}
    
    
  \caption{\label{fig:ex:3vass} Example from~\cite{HP79} of a 3-VASS
    whose reachability set starting in configuration $p(0,0,1)$ is not
    semi-linear.}
\end{figure}

The main contribution of this paper is to show that reachability in
VASS in dimension two is \PSPACE-complete when numbers are encoded in
binary. The \PSPACE\ lower bound follows as an easy consequence of a
recent result by Fearnley and Jurdzi{\'n}ski who
showed \PSPACE-completeness of reachability in bounded one-counter
automata~\cite{FJ13}. Our \PSPACE\ upper bound is obtained from
showing that the length of a run witnessing reachability can be
exponentially bounded in the size of the input, and consequently the
existence of such a run can be decided by a \PSPACE-algorithm. The
difficult and main part of this paper is, of course, to establish the
exponential upper bound on the length of witnessing runs. Our starting
point is a careful analysis of an argument developed by Leroux and
Sutre in~\cite{LS04} for the purpose of showing that reachability
relations of VASS in dimension two can be captured by bounded regular
languages, i.e., speaking in the terminology of~\cite{LS04}, 2-VASS
can be {\em flattened}. More precisely, this means that for any 2-VASS
there is a finite set $S$ of regular languages over the set of
transitions, viewed as an alphabet, each of the form
$u_0v_1^*u_1\cdots v_k^*u_k$ such that for any two reachable
configurations there exists a witnessing run in the language defined
by $S$. The paper of Leroux and Sutre reports that from any VASS in
dimension two it is possible to construct such a bounded language.
This immediately implies that the reachability relation of 2-VASS is
semi-linear. In dimension three, the reachability relation is no
longer semi-linear and hence such bounded languages cannot exist; the
classical example by Hopcroft and Pansiot of a 3-VASS that does not
posses a semi-linear reachability set is depicted in
Fig.~\ref{fig:ex:3vass}. The paper~\cite{LS04} has not appeared as a
fully refereed publication and some proof details are omitted in
it. Thus, while we follow closely the proof strategy presented
in~\cite{LS04}, we provide a complete proof that 2-VASS can be
flattened, and in doing so develop new arguments in order to allow for
a tight analysis of our constructions with the overall goal of
establishing the \PSPACE\ upper bound. In summary, we make the
following contributions:
\begin{itemize}
\item we show \PSPACE-completeness of reachability in VASS in
  dimension two,
\item for showing the former we provide a complete and rigorous proof
  that VASS in dimension two can be flattened by bounded languages
  that have small presentations, and
\item we remark that reachability in VASS in dimension two with numbers
  encoded in unary is \NL-hard and in \NP.
\end{itemize}

The structure of this paper is as follows. In Section~\ref{prelims},
we introduce our notation, give relevant definitions and formally
define vector addition systems with states.  Section~\ref{sec:results}
gives an overview of our main results. In
Section~\ref{sec:2vass:flatness}, we prove our main technical result,
namely that for any 2-VASS the global reachability relation can be
characterized by small bounded languages; the latter are also known as
linear path schemes in the literature. Section~\ref{geometry} is
devoted to proving our main theorem, namely that reachability is
$\PSPACE$-complete. We also discuss some further corollaries and
implications of our results there. Finally, we conclude in
Section~\ref{sec:conclusion}, where we discuss open problems and
directions for future work.

\section{Preliminaries}\label{prelims}

In this section, we provide definitions relevant to this paper and
introduce vector addition systems.

\medskip

\noindent
{\bf General notation. } By $\N = \{0,1,2,\ldots\}$, $-\N =
\{0,-1,-2,\ldots\}$ and $\Z$ we denote the sets of non-negative
integers, non-positive integers and integers, respectively.  By $\Q$
and $\Q_{\ge 0}$ we denote the set of rationals and non-negative
rationals, respectively. We define $[i,j] \defeq \{i, i+1, \ldots,
j\}$ for any $i,j \in \Z$.  For each $k\in\Z$ we write $[k,\infty)$ to
  denote $\{z\in\Z : z\geq k\}$.  A {\em quadrant} is one of the four
  sets $\N^2, -\N \times \N, \N \times -\N$ and $-\N \times -\N$.
  Given two vectors $\vec{u} = (u_1,\ldots,u_d), \vec{v} =
  (v_1,\ldots,v_d) \in \Z^d$, we denote by $\vec{u} + \vec{v} \defeq
  (u_1+v_1, \ldots, u_d+v_d)$ the sum of their components. Given two
  sets $U,V \subseteq \Z^d$, we let $U+V \defeq \{\vec{u} + \vec{v} :
  \vec{u} \in U, \vec{v} \in V\}$. The {\em norm} of a vector
  $\vec{u}=(u_1,\ldots,u_d)$ is defined as
  $\norm{\vec{u}}\defeq\max\{|u_i| : i\in[1,d]\}$.  The {\em norm} of
  a matrix $A=(a_{ij})\in\Z^{m\times n}$ is defined as $\norm{A}\defeq
  n\cdot \max\{|a_{ij}|:i\in[1,m], j\in[1,n]\}$. For any word $w =
  a_1\cdots a_n \in \Sigma^n$ over some alphabet $\Sigma$, $w[i,j]$
  denotes $a_i a_{i+1} \cdots a_j$ for all $i,j\in[1,n]$.

\medskip

\noindent
{\bf Graphs, Parikh Images and Linear Path Schemes. } For each set
$\Sigma$, a {\em $\Sigma$-labeled directed graph} is a pair $G=(U,E)$,
where $U$ is a set of {\em vertices} and $E \subseteq U \times \Sigma
\times U$ is a set of {\em edges}. We say $G$ is {\em finite} if $U$
and $E$ are finite. Let $\pi = (u_1,a_1,u_1') \cdots (u_k,a_k,u_k')
\in T^k$.  The {\em Parikh image} $\Par_\pi$ of $\pi$ is the mapping
from $\Sigma$ to $\N$ such that $\Par_\pi(a) = |\{i \in [1,k] : a_i =
a\}|$ for each $a \in \Sigma$. If $X \subseteq E^*$, then $\Par_X$
denotes the set of Parikh images of $X$, \ie $\Par_X = \{\Par_\pi :
\pi\in X\}$.  We say $\pi$ is a {\em path (from $u_1$ to $u_k'$)} if
$u_{i}' = u_{i+1}$ for all $i \in [1,k-1]$.  A path $\pi$ is a {\em
  cycle} if $k \geq 1$ and $u_1 = u_k'$, and {\em cycle-free} if no
infix of $\pi$ is a cycle. A cycle $\pi$ is called {\em simple} if
$\pi$ is the only infix of $\pi$ that is a cycle. A {\em linear path
  scheme} (from $u \in U$ to $u' \in U$) is a regular expression
(whose language will be referred to implicitly) of the form
\begin{align*}
  \rho = \alpha_0 \beta_1^* \alpha_1 \cdots \beta_k^* \alpha_k,
\end{align*} 
where $\alpha_0 \beta_1 \alpha_1 \cdots \beta_k \alpha_k$ is a path
(from $u$ to $u'$) and each $\beta_i$ is a cycle.  We define its {\em
  length} as $|\rho| \defeq |\alpha_0 \beta_1 \alpha_1 \cdots \beta_k
\alpha_k|$. We call $\beta_1, \ldots, \beta_k$ the \emph{cycles of
  $\rho$}.  Note that every path is a linear path scheme. The general
structure of a linear path scheme is illustrated in
Figure~\ref{fig:lps}.

\medskip

\noindent
{\bf Vector Addition Systems with States. } A \emph{vector addition
  system with states (VASS)} in dimension $d$ ($d$-VASS for short) is
a finite $\Z^d$-labeled directed graph $V=(Q,T)$, where $Q$ will be
referred to as the {\em states} of $V$, and where $T$ will be referred
to as {\em transitions} of $V$. The {\em size of $V$} is defined as
$|V| \defeq |Q| + |T| \cdot d \cdot \lceil \log_2 \norm{T}\rceil$,
where $\norm{T}$ denotes the absolute value of the largest number that
appears in $T$, i.e. $\norm{T}\defeq \max\{\norm{\vec{z}} :
(p,\vec{z},q) \in T\}$. We say that $V$ is encoded in \emph{binary}
when we use this definition of $|V|$, which we will use as standard
encoding in this paper. Alternatively, when we set $|V| \defeq |Q| +
|T| \cdot d \cdot \norm{T}$ we say that $V$ is encoded in
\emph{unary}.

Subsequently, $Q \times \Z^d$ denotes the set of {\em configurations}
of $V$. Note that in the literature, the set of configurations is
usually $Q\times \N^d$, however in this paper we will often deal with
VASS whose counters can take integer values. For the sake of
readability, we write configurations $(q,(z_1,\ldots,z_d))$ and
$(q,\vec{z})$ as $q(z_1,\ldots,z_d)$ and $q(\vec{z})$, respectively.

For every subset $\A\subseteq\Z^d$, $p(\vec{u}), q(\vec{v}) \in Q
\times\A$ and every transition $t=(p,\vec{z},q)$, we write $p(\vec{u})
\xrightarrow{t}_{\A} q(\vec{v})$ whenever $\vec{v} = \vec{u} +
\vec{z}$. We extend $\xrightarrow{t}_{\A}$ to sequences of transitions
$\pi \in T^*$ as follows: $\xrightarrow{\pi}_{\A}$ is the smallest
relation satisfying the following conditions for all configurations
$p(\vec{u})$, $q(\vec{v})$, $r(\vec{w})\in Q\times \A$ and all $t \in
T$,
\begin{itemize}
\item $p(\vec{u}) \xrightarrow{\varepsilon}_{\A} p(\vec{u})$
  and
\item if $p(\vec{u}) \xrightarrow{\pi}_{\A} q(\vec{v})$ and
  $q(\vec{v}) \xrightarrow{t}_{\A} r(\vec{w})$, then $p(\vec{u})
  \xrightarrow{\pi \cdot t}_{\A} r(\vec{w})$.
\end{itemize}
We extend $\xrightarrow{t}_\A$ to languages $L \subseteq T^*$ in the
natural way, $\xrightarrow{L}_\A \defeq \bigcup\{\xrightarrow{\pi}_\A
: \pi \in L\}$.  We write $\xrightarrow{*}_{\A}$ to denote
$\xrightarrow{T^*}_\A$.  An {\em $\A$-run} from $q_0(\vec{v}_0)\in
Q\times\A$ to $q_k(\vec{v}_k)\in Q\times\A$ that is induced by a path
$\pi=t_1\cdots t_k$ is a sequence of configurations
$q_0(\vec{v}_0)\xrightarrow{t_1}_{\A}q_1(\vec{v}_1)\cdots\xrightarrow{t_k}_{\A}q_k(\vec{v}_k)$
that we sometimes just abbreviate by
$q_0(\vec{v}_0)\xrightarrow{\pi}_{\A}q_k(\vec{v}_k)$.  When $\A=\N^d$
we also refer to an $\A$-run as a {\em run}.

In the remainder of this paper, we call $\xrightarrow{*}_{\N^d}$ the
{\em reachability relation}, and $\xrightarrow{*}_{\Z^d}$ the {\em
  $\Z$-reachability relation}.  Let $\pi = (p_1,\vec{z}_1,p_1) \cdots
(p_k,\vec{z}_k,p_k) \in T^k$ for some $k \geq 0$. The
\emph{displacement of $\pi$} is $\delta(\pi) \defeq \sum_{i=1}^k
\vec{z}_i$, and the definition naturally extends to languages $L
\subseteq T^*$ as $\delta(L) \defeq \bigcup\{\delta(\pi) : \pi \in
L\}$.  We say that a linear path scheme $\rho$ over $V$ {\em captures}
a linear path scheme $\rho'$ if $\delta(\rho')\subseteq\delta(\rho)$.
Note in particular that if $\Par_{\rho'}\subseteq\Par_\rho$, then
$\delta(\rho')\subseteq\delta(\rho)$. Similarly as in~\cite{LS04}, we
say that a linear path scheme $\alpha_0 \beta_1^* \alpha_1 \cdots
\beta_k^* \alpha_k$ is {\em zigzag-free} if $\{\delta(\beta_1),
\ldots, \delta(\beta_k)\} \subseteq Z$ for some quadrant $Z$.

\begin{figure}[t]
  \includegraphics{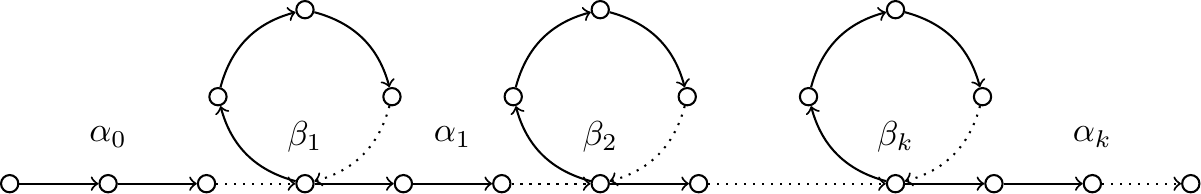}
  \caption{\label{fig:lps} Illustration of the structure of a linear
    path scheme $\rho = \alpha_0 \beta_1^* \alpha_1 \cdots \beta_k^*
    \sigma_k$.}
\end{figure}

\section{Main Results}\label{sec:results}

In this paper, our main interest is in the reachability problem for
$2$-VASS, formally defined as follows: \\

\problemx{2-VASS Reachability} {A $2$-VASS $V=(Q,T)$ and
  configurations $p(\vec{u})$ and $q(\vec{v})$ from $Q \times
  \mathbb{N}^2$.}  {Is there a run from $p(\vec{u})$ to $q(\vec{v})$, i.e. does $p(\vec{u}) \xrightarrow{*}_{\N^2}
  q(\vec{v})$ hold?}\

In order to determine the complexity of this problem, we show that the
reachability relation of any 2-VASS can be defined by a finite union
of linear path schemes. In particular, we are able to show strong
bounds on their lengths and their number of cycles. For example,
consider the 2-VASS $V$ depicted in Fig.~\ref{fig:ex:2vass}.
\begin{figure}[h]
  \includegraphics{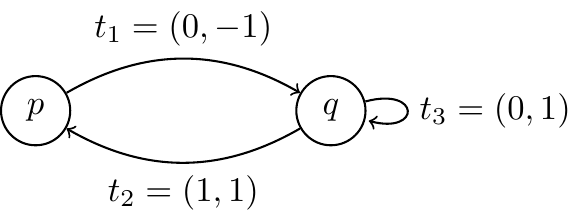}
    
    
  \caption{\label{fig:ex:2vass} Example of a 2-VASS.}
\end{figure}
Since $V$ contains nested loops, e.g. $(t_1 t_3^* t_2)^*$, we cannot
directly read off a characterization of its reachability set by a
finite union of linear path schemes. However, by carefully unraveling
loops we obtain the reachability set from the union of the subsequent
linear path schemes, and in particular this means that $V$ can be
flattened:
\begin{align*}
  p(u_1, u_2) \xrightarrow{*}_{\N^2} q(v_1, v_2) & \iff
  p(u_1, u_2) \xrightarrow{\mathmakebox[2.2cm]{\substack{t_1 t_3^* \;
        \cup \; \\ t_1 t_3^* t_2 (t_1 t_2)^* t_1}}}_{\N^2}
  q(v_1, v_2) & (\text{cf. Fig.~\ref{fig:ex:flat:2vass:pq}})\\
  p(u_1, u_2) \xrightarrow{*}_{\N^2} p(v_1, v_2) & \iff
  p(u_1, u_2)  \xrightarrow{\mathmakebox[2.2cm]{t_1 t_3^* t_2 (t_1
      t_2)^* \; \cup \; \varepsilon}}_{\N^2}  p(v_1, v_2)\\
  q(u_1, u_2) \xrightarrow{*}_{\N^2} p(v_1, v_2) & \iff
  q(u_1, u_2)  \xrightarrow{\mathmakebox[2.2cm]{(t_2 t_1)^* t_3^*
      t_2}}_{\N^2}  p(v_1, v_2)\\
  q(u_1, u_2) \xrightarrow{*}_{\N^2} q(v_1, v_2) & \iff
  q(u_1, u_2) \xrightarrow{\mathmakebox[2.2cm]{(t_2 t_1)^*
      t_3^*}}_{\N^2}  q(v_1, v_2)\qquad .
\end{align*}
We will show that such a flattening exists for any 2-VASS. More
precisely, our main technical result states that the global
reachability relation of any 2-VASS $V=(Q,T)$ can be defined via a
union of linear path schemes whose lengths can be polynomially bounded
in $|Q|+\norm{T}$, and \emph{a fortiori} are at most exponential in
$|V|$, and whose number of cycles is quadratic in $|Q|$:

\begin{theorem}\label{main1}
  Let $V=(Q,T)$ be a 2-VASS. There is a finite set $S$ of linear path
  schemes such that\footnote{The expanded technical meaning of this
    statement is that there are constants $c_1$ and $c_2$ such that
    for every 2-VASS $V=(Q,T)$ there exists a finite set $S$ of linear
    path schemes with the properties that $p(\vec{u})
    \xrightarrow{*}_{\N^2} q(\vec{v})$ if, and only if, $p(\vec{u})
    \xrightarrow{S}_{\N^2} q(\vec{v})$ and that each $\rho$ in $S$ has
    length at most $(|Q| + \norm{T})^{c_1}$ and has at most $c_2|Q|^2$
    cycles.  The more familiar statements of this theorem and of
    lemmas of a similar nature in the rest of the paper were chosen to
    avoid clutter and to downplay the role of the precise constants.
  }
  \begin{itemize}
  \item $p(\vec{u}) \xrightarrow{*}_{\N^2} q(\vec{v})$ if, and only
    if, $p(\vec{u}) \xrightarrow{S}_{\N^2} q(\vec{v})$,
  \item $|\rho| \leq (|Q| + \norm{T})^{O(1)}$ for every $\rho \in S$,
    and
  \item each $\rho \in S$ has at most $O(|Q|^2)$ cycles.
  \end{itemize}
\end{theorem}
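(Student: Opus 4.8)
The goal is to turn the classical flattening argument of Leroux and Sutre into one that yields polynomial-length linear path schemes (LPSs) with quadratically many cycles. I would organise the proof around three nested reductions, each of which replaces runs by ``simpler'' runs at a controlled cost.

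\medskip

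\noindent\textbf{Step 1: from $\N^2$-runs to runs confined to a bounded region.}
The plan is to first observe that a run of a $2$-VASS either stays inside a polynomially-bounded box $[0,B]^2$ with $B = (|Q|+\norm{T})^{O(1)}$, or at some point one counter exceeds $B$. If counter~$1$, say, is large and stays large for the rest of the run, then that counter behaves like a counter over $\Z$ and can be ignored; one is left with a \emph{one-counter} VASS over $\N$ for counter~$2$, for which sharp linear-path-scheme bounds (essentially the unary/bounded one-counter machinery behind the Fearnley--Jurdzi\'nski and Haase et al.\ results cited in the introduction) are already available. The general case is obtained by cutting the run at the maximal ``excursions'' where one coordinate is bounded while the other may be unbounded, and splicing together the LPSs for the bounded-region part and for the two one-counter parts.

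\medskip

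\noindent\textbf{Step 2: from bounded-region runs to short LPSs, via $\Z$-reachability and the zigzag-free normal form.}
Inside a bounded region, or more generally along a ``$\Z$-phase'' where the sign constraints are slack, I would replace the $\N^2$-reachability question by $\Z^2$-reachability: over $\Z^2$ a run exists iff the displacement equation $\vec{v}-\vec{u} \in \delta(\text{set of cycles})$ is solvable, which is a linear-algebra / integer-programming statement. Using that the linear system has polynomially many rows and coefficients of magnitude $\norm{T}$, one extracts by Carath\'eodory-type bounds a solution using only $O(1)$ (indeed, at most $2$, one per coordinate) ``directions'' of cycles with polynomially bounded multiplicities; this is precisely where the \emph{zigzag-free} notion from the excerpt is forced — all the cycles used lie in a single quadrant — and where the bound of $O(|Q|^2)$ cycles comes from (one candidate simple cycle through each pair of states, or a similar counting). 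The short prefixes/suffixes $\alpha_i$ connecting cycle occurrences are taken cycle-free, hence of length $< |Q|$, and the $\beta_i$ are simple cycles, hence also of length $< |Q|$; iterating them polynomially many times keeps $|\rho|$ polynomial in $|Q|+\norm{T}$.

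\medskip

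\noindent\textbf{Step 3: assembling the finite set $S$ and bounding everything.}
Finally I would package the case distinction of Step~1 (which counter, if any, is the ``large'' one; the $O(|Q|^2)$ choices of the splice states; the $O(1)$ choices of quadrant for the zigzag-free core) into a finite index set, producing one LPS per branch, and verify that the concatenation of the at most three phase-LPSs still has polynomial length and $O(|Q|^2)$ cycles in total. Correctness ($p(\vec u)\xrightarrow{*}_{\N^2}q(\vec v)$ iff $p(\vec u)\xrightarrow{S}_{\N^2}q(\vec v)$) holds in one direction trivially (every LPS run is a run) and in the other by the decomposition above. I expect the \textbf{main obstacle} to be Step~2 done \emph{uniformly}: the Leroux--Sutre argument proves existence of \emph{some} flattening, but getting the length polynomial rather than merely finite requires a quantitatively tight Carath\'eodory/Steinitz bound on how many cycle directions and iterations are needed to realise a given displacement while respecting the intermediate nonnegativity constraints — in particular handling the interaction between the two counters when the run is neither fully bounded nor cleanly reducible to one counter, which is exactly the regime where naive unfolding of nested loops (as in the $(t_1t_3^*t_2)^*$ example) blows up and must be avoided by the zigzag-free reorganisation.
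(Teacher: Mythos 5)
Your Step~2 is essentially the paper's core machinery (Proposition~\ref{P zreach} together with Lemmas~\ref{L Dimension Two} and~\ref{lem:zig-zag-free-lps}): $\Z$-reachability is captured by short linear path schemes, and intersecting the resulting linear displacement sets with a quadrant yields zigzag-free schemes with at most two cycles, which are then valid over $\N^2$ once the endpoints are above a polynomial threshold. Your use of one-counter normal forms for the regime where one coordinate is controlled matches Section~\ref{sec:bounded-component} (though note the paper bounds the \emph{small} counter and encodes it into the states, rather than treating the \emph{large} counter over $\Z$), and you correctly flag the quantitative Carath\'eodory-type step as a main difficulty.

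The genuine gap is in Steps~1 and~3: you assume the run splits into a bounded-box phase and a couple of one-counter phases, ``at most three phase-LPSs''. But a run can alternate between the low region and the region where both counters are large an unbounded (a priori exponential) number of times, and each alternation costs cycles; without controlling the number of alternations you obtain neither the polynomial length bound nor the $O(|Q|^2)$ cycle bound. The paper's resolution --- the part it explicitly calls the technically most challenging --- is twofold. First, Proposition~\ref{P upper right}(a) shows that any run from $q(\vec{u})$ to $q(\vec{v})$ with the \emph{same} control state $q$ and both endpoints at least $D\le(|Q|+\norm{T})^{O(1)}$ can be replaced by a two-cycle zigzag-free scheme \emph{regardless of how low the original run dips in between}; this is strictly stronger than ``the sign constraints are slack along a $\Z$-phase'', since the original run need not stay high at all (the same-state hypothesis is also what makes $\vec{b}\in P$ in Lemma~\ref{L Dimension Two} and hence makes the quadrant decomposition applicable). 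Second, the gluing in Section~\ref{sec:glue} cuts the run at configurations lying in the boundary band $\L\cap\O$ and jumps from the \emph{first} to the \emph{last} occurrence of each state there (the functions $x$ and $\ell$), so that the run factors into at most $|Q|$ same-state type-(1) segments interleaved with type-(2)/(3) segments, giving at most $(|Q|+1)\cdot 2\cdot|Q|=O(|Q|^2)$ cycles in total. Your proposal contains neither the same-state collapse nor the first-to-last pigeonhole, and without them the splicing in your Step~3 does not terminate with the claimed bounds.
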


Having established Theorem~\ref{main1}, we can show that proving the
existence of a path between two reachable configurations in a 2-VASS
reduces to checking the existence of a solution for suitably
constructed systems of linear Diophantine inequalities that depend on
$S$ and the properties listed in Theorem~\ref{main1}. The absence of
nested cycles in linear path schemes in $S$ is crucial to this
reduction. By application of standard bounds from integer linear
programming, this in turn enables us to bound the length of paths
witnessing reachability, and to prove the main theorem of this paper
in Section~\ref{geometry}:
\begin{theorem}\label{main2}
  {\sc 2-VASS Reachability} is $\PSPACE$-complete.
\end{theorem}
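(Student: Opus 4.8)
\medskip
\noindent\emph{Proof plan.} The plan is to establish the two bounds separately, the upper bound being the place where Theorem~\ref{main1} is used. For $\PSPACE$-hardness I would reduce from the reachability problem for bounded one-counter automata, which Fearnley and Jurdzi{\'n}ski~\cite{FJ13} proved $\PSPACE$-complete. A bounded one-counter automaton is essentially a $1$-VASS whose counter $x$ must additionally stay below a bound $b \in \N$ given in binary; I would simulate it by a $2$-VASS that stores $x$ in its first counter and the slack $b - x$ in its second, translating an update of $x$ by $z$ into a transition with effect $(z,-z)$. Then $x + z \ge 0$ is enforced by non-negativity of the first counter and $x + z \le b$ by non-negativity of the second, so starting the second counter at $b - x_0$ yields a logspace reduction to {\sc 2-VASS Reachability}.

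For membership in $\PSPACE$, the plan is to use Theorem~\ref{main1} to turn reachability into an integer-programming feasibility question of small dimension, and thereby bound the length of witnessing runs. Fix an instance $(V, p(\vec u), q(\vec v))$ with $V = (Q,T)$ and write $n$ for its total size, so that $|Q| \le n$ and $\norm{T}, \norm{\vec u}, \norm{\vec v} \le 2^{O(n)}$. By Theorem~\ref{main1}, $p(\vec u) \xrightarrow{*}_{\N^2} q(\vec v)$ holds iff $p(\vec u) \xrightarrow{\rho}_{\N^2} q(\vec v)$ for some linear path scheme $\rho = \alpha_0\beta_1^*\alpha_1\cdots\beta_k^*\alpha_k$ from a finite set $S$, where $k = O(|Q|^2)$ and $|\rho| \le (|Q|+\norm{T})^{O(1)} \le 2^{O(n)}$. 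A run along $\rho$ is specified by multiplicities $\vec e = (e_1,\ldots,e_k) \in \N^k$, reading $\alpha_0\beta_1^{e_1}\alpha_1\cdots\beta_k^{e_k}\alpha_k$. First I would note that the configuration reached after each block is an affine function of $\vec e$, so ``the run ends in $q(\vec v)$'' is one linear equation in $\vec e$. Next --- and this is the step that uses that $\rho$ has no nested cycles --- I would show that non-negativity of the whole run is captured by only $O(k)$ linear inequalities in $\vec e$: non-negativity along a fixed factor $\alpha_i$ started in configuration $\vec c$ is equivalent to $\vec c + \vec m \ge \vec 0$, where $\vec m$ is the componentwise minimum of the prefix displacements of $\alpha_i$ (a constant); and non-negativity along $\beta_i^{e_i}$ started in $\vec c$, since $\beta_i$ has a fixed per-iteration displacement $\delta(\beta_i)$, is equivalent to the two conditions $\vec c + \vec m' \ge \vec 0$ and $\vec c + \vec m' + (e_i - 1)\,\delta(\beta_i) \ge \vec 0$ with $\vec m'$ again constant. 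This yields a system of $O(k)$ linear Diophantine (in)equalities in $O(k)$ unknowns whose coefficients, being displacements of factors of $\rho$ or entries of $\vec u, \vec v$, have magnitude $2^{O(n)}$, i.e.\ bit-length $O(n)$.

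Then I would invoke standard small-solution bounds for integer linear programming: a feasible system of that shape has a solution $\vec e$ with $\log\norm{\vec e}$ bounded by a polynomial in the number of variables times the bit-length of the coefficients, hence $n^{O(1)}$ --- here it is essential that the number of variables $k$ is polynomial, which is exactly why Theorem~\ref{main1} controls the number of cycles. Consequently, whenever $p(\vec u) \xrightarrow{*}_{\N^2} q(\vec v)$ there is a witnessing run of length $\le |\rho|\cdot\norm{\vec e} \le 2^{n^{O(1)}}$, along which every counter value stays bounded by $\norm{\vec u} + 2^{n^{O(1)}}\cdot\norm{T} \le 2^{n^{O(1)}}$, hence is representable with $n^{O(1)}$ bits. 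A nondeterministic algorithm can then simply guess such a run transition by transition, storing only the current configuration and a binary step counter, and reject once the step bound is exceeded; this uses polynomial space. By Savitch's theorem, nondeterministic polynomial space coincides with $\PSPACE$, so {\sc 2-VASS Reachability} is in $\PSPACE$, and together with the hardness above it is $\PSPACE$-complete.

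The genuinely hard part is of course Theorem~\ref{main1}, which is assumed here. Granting it, the main subtlety is the one flagged above: the path scheme $\rho$ is exponentially long, so a naive encoding of non-negativity would give exponentially many constraints and only a doubly-exponential run-length bound; collapsing the constraints along each $\alpha_i$ and each $\beta_i^{e_i}$ to a constant number of affine inequalities --- which is possible precisely because there are no nested cycles --- is what keeps the integer program, and hence the run length, singly exponential, and retaining that $k = O(|Q|^2)$ is the other ingredient needed for this.
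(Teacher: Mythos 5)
Your proposal is correct and follows essentially the same route as the paper: the identical lower-bound reduction from bounded one-counter automata via the encoding $q(x)\mapsto q(x,b-x)$, and the identical upper bound obtained by converting a linear path scheme from Theorem~\ref{main1} into a system of linear Diophantine (in)equalities with $O(|Q|^2)$ variables --- collapsing the non-negativity constraints along each cycle to its first and last traversal, exactly as in the paper's Lemmas~\ref{lem:ineq:cycles} and~\ref{lem:lps-reachability-sldi} --- and then applying small-solution bounds to get an exponential run-length bound that can be checked in nondeterministic polynomial space. The one detail the paper handles explicitly that you elide is the case $e_i=0$ (your two inequalities for $\beta_i^{e_i}$ are only sound when $e_i\ge 1$), which the paper resolves by case-splitting over which cycles are taken at least once via the sign function $\chi$.
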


\begin{figure}[t]
  \includegraphics{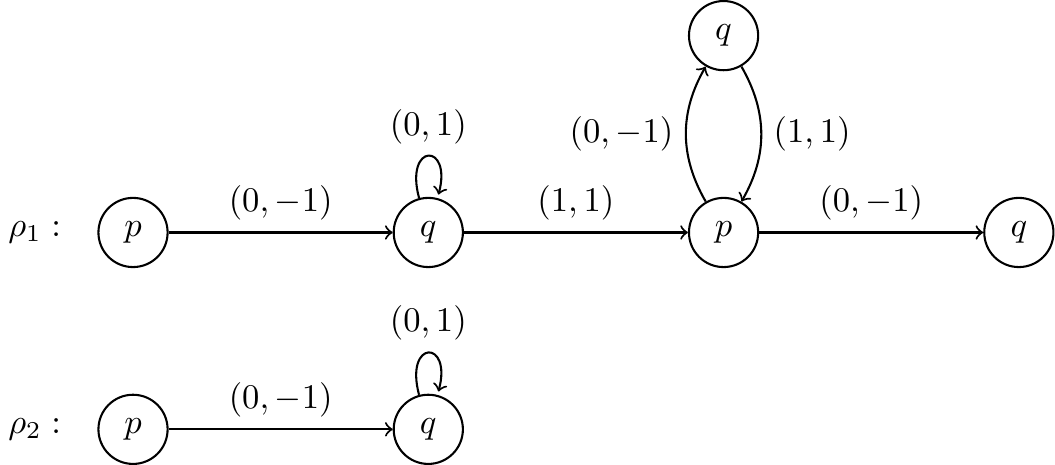}
    

    


  \caption{\label{fig:ex:flat:2vass:pq} Illustration of a set $S =
    \{\rho_1, \rho_2\}$ of linear path schemes defining the
    reachability relation from $p$ to $q$ of the 2-VASS $V$ depicted
    in Fig.~\ref{fig:ex:2vass}. Here, $\rho_1 = t_1 t_3^* t_2 (t_1
    t_2)^* t_1$, $\rho_2 = t_1 t_3^*$, and $p(u_1, u_2)
    \xrightarrow{*}_{\N^2} q(v_1, v_2)$ if, and only if, $p(u_1, u_2)
    \xrightarrow{S}_{\N^2} q(v_1, v_2)$.}
\end{figure}

\section{Proof of Theorem 1}\label{sec:2vass:flatness}

In this section, we prove Theorem~\ref{main1} and show that runs of a
2-VASS $V=(Q,T)$ are captured by a finite union of linear path schemes
each of which have length at most $(|Q|+\norm{T})^{O(1)}$ and at most
$O(|Q|^2)$ cycles.  In order to construct this finite set of linear
path schemes, we consider the following three types of runs
$p(u_1,u_2) \xrightarrow{\pi}_{\N^2} q(v_1,v_2)$, depicted in
Fig.~\ref{fig:types:paths}:
\begin{enumerate}
  \item Both counter values of $p(u_1,u_2)$ and of $q(v_1,v_2)$ are
    sufficiently large and $p=q$, but intermediate configurations on
    the run $p(u_1,u_2) \xrightarrow{\pi}_{\N^2} q(v_1,v_2)$ may have
    arbitrarily small counter values.
  \item For all configurations of the run $p(u_1,u_2)
    \xrightarrow{\pi}_{\N^2} q(v_1,v_2)$ both counter values are
    sufficiently large.
  \item For all configurations of the run $p(u_1,u_2)
    \xrightarrow{\pi}_{\N^2} q(v_1,v_2)$ at least one counter value is
    not too large.
\end{enumerate}
In Sections~\ref{sec:zreach}, \ref{sec:ultimately-flatness} and
\ref{sec:bounded-component}, we will show how to construct linear path
schemes for these three types of runs. Then, in
Section~\ref{sec:glue}, we prove Theorem \ref{main1} by showing that
any run can be decomposed as finitely many runs of these types.

In some more detail, the first step is to show in
Section~\ref{sec:zreach} that Parikh images of finite labeled graphs
can be captured by linear path schemes of polynomial size. This will
allow us to prove that $\Z$-reachability, i.e.\ runs in which counter
values may drop below zero, can be captured by linear path schemes of
polynomial size. We then give in Section~\ref{sec:ultimately-flatness}
an effective decomposition of certain linear sets in dimension two
into semi-linear sets with special properties, and use this
decomposition in order to derive together with the results in
Section~\ref{sec:zreach} linear path schemes of size
$(|Q|+\norm{T})^{O(1)}$ with a constant number of cycles for runs of
type~(1). Linear path schemes for runs of type~(2) will then be seen
to follow from the type~(1) case.

For runs of type~(3), in Section~\ref{sec:bounded-component} we
construct linear path schemes for $1$-VASS and show that runs of a
$2$-VASS that stay within an ``L-shaped band'' are, essentially, runs
of a $1$-VASS.  Our analysis of such runs of type~(3) is a simple
consequence of certain normal forms of shortest runs in one-counter
automata, which $1$-VASS are a subclass of, by Valiant and Paterson
\cite{VP75}.

\begin{figure}[t]
  \begin{center}
    \begin{tabular}{cc}
      \includegraphics{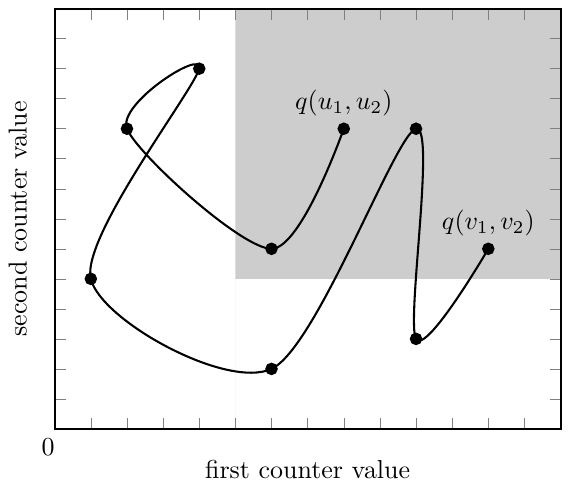}

          
        
          
          
      &
      \includegraphics{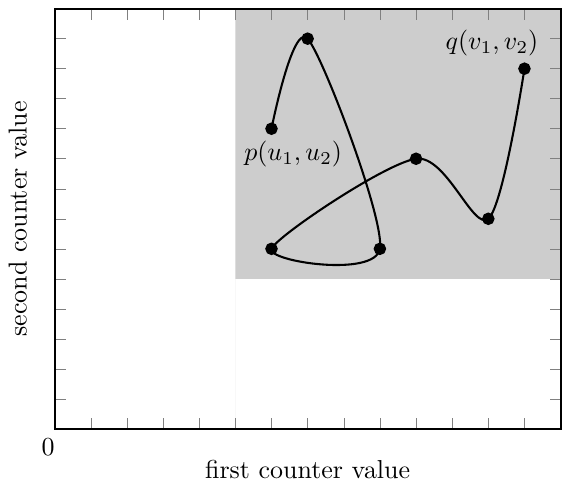}
          
        
          
          
      \\[5pt]
      
      \multicolumn{2}{c}{
      \includegraphics{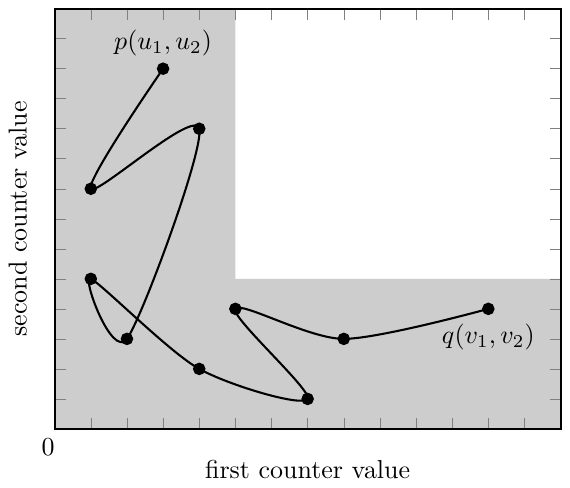}
          
          
        
          
          
      } \\
    \end{tabular}
  \end{center}
  \caption{Example of the three types of runs. The region depicted in
    each case is the positive quadrant in the Cartesian plane. (1)
    top-left: run from $q$ to $q$ starting and ending sufficiently
    high; (2) top right: run staying sufficiently high; (3) bottom:
    run within an L-shaped band, i.e., running high on at most one
    component at a time.\label{fig:types:paths}}
\end{figure}

\medskip

\noindent
{\bf Similarities and differences in comparison with \cite{LS04}.\ }
Our proof strategy of considering the three kinds of runs described
above shares some similarities with~\cite{LS04}, but in particular
requires to explicate all implicit assumptions made in the conference
paper~\cite{LS04}. There, the bounds on what is referred to as
``large'' and what is referred to as ``not large'' or ``small'' in the
runs of type~(1), (2) and (3) are not explicitly calculated. Our
proofs for obtaining rather tight bounds require new insights. We
capture runs of type (1) by linear path schemes of size
$(|Q|+\norm{T})^{O(1)}$, whereas in \cite{LS04} the linear path
schemes were of size at least exponential in $|Q|$. To prove the
former, we establish a new upper bound on the presentation size of
Parikh images of finite automata in Lemma~\ref{L parikh} below, which
is a result of independent interest. The difference between our runs
of type~(2) and the ones analyzed in \cite{LS04} is that our runs have
to stay in the ``outside region'' entirely, whereas in \cite{LS04} the
set of displacements of paths from $q$ to $q'$ is analyzed. Runs of
type~(3) are treated as special cases of runs of type~(2) in
\cite{LS04}, whereas we invoke a result by Valiant and Paterson on
normal forms of minimal runs in one-counter automata. Our final proof
of Theorem \ref{main1} shows that each run can be factorized into
segments of runs of types~(1), (2) and (3) and requires a more careful
treatment than in \cite{LS04}. At every step, we have to ensure that
the number of cycles of the linear path schemes we construct stays
polynomial in the number of control states $Q$.  This aspect is
neglected in~\cite{LS04} as it is of no interest for the goal
of~\cite{LS04}, however, for us it is by far the technically most
challenging part and one of the cornerstones of our PSPACE upper
bound.

\subsection{Parikh images of finite directed graphs and $\Z$-reachability of $d$-VASS}\label{sec:zreach}

The main result of this section is the following proposition.

\begin{proposition}\label{P zreach}
  Let $V = (Q, T)$ be a $d$-VASS. There exists a finite set $S$ of
  linear path schemes such that
  \begin{itemize}
  \item[(i)] $p(\vec{u}) \xrightarrow{*}_{\Z^d} q(\vec{v})$ if, and only
    if, $p(\vec{u}) \xrightarrow{S}_{\Z^d} q(\vec{v})$,
  \item[(ii)] $|\rho|\leq 2\cdot |Q|\cdot|T|$ for each $\rho \in S$,
    and
  \item[(iii)] each $\rho\in S$ has at most $|T|$ cycles.
  \end{itemize}
\end{proposition}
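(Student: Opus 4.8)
The plan is to exploit that over $\Z^d$ there are no sign constraints, so a run $p(\vec{u}) \xrightarrow{\pi}_{\Z^d} q(\vec{v})$ exists precisely when $\pi$ is a path from $p$ to $q$ in $V$ with $\delta(\pi) = \vec{v} - \vec{u}$. Since $\delta(\pi) = \sum_{t=(r,\vec{z},r')\in T} \Par_\pi(t)\cdot\vec{z}$ is a linear image of $\Par_\pi$, two paths from $p$ to $q$ with equal Parikh images are interchangeable for $\Z$-reachability. Hence it suffices, for each ordered pair $(p,q) \in Q \times Q$, to produce a finite set $S_{p,q}$ of linear path schemes from $p$ to $q$, each of length $\leq 2|Q|\,|T|$ with at most $|T|$ cycles, such that $\{\Par_\pi : \pi \text{ a path from } p \text{ to } q\} = \bigcup_{\rho\in S_{p,q}}\Par_\rho$ (matching the full Parikh images, which is more than needed but convenient). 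Then $S \defeq \bigcup_{(p,q)} S_{p,q}$ works: a linear path scheme from $p$ to $q$ only contributes $\Z$-runs from $p$ to $q$, one inclusion in~(i) is immediate because every word accepted by such a scheme is a path from $p$ to $q$, and the other follows from the displayed Parikh identity; finiteness of $S$ follows from the length bound.

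So the heart of the matter is a purely combinatorial statement about the finite directed graph underlying $V$ (this is where Lemma~\ref{L parikh} enters), which I would prove as follows. Fix a path $\pi$ from $p$ to $q$ and let $E_\pi \subseteq T$ be its set of transitions, so $|E_\pi| \leq |T|$. Removing cyclic infixes one at a time turns $\pi$ into a cycle-free path $\pi_0$ from $p$ to $q$, hence $|\pi_0| < |Q|$, with $\Par_{\pi_0} \leq \Par_\pi$ componentwise; and $g \defeq \Par_\pi - \Par_{\pi_0}$ is a non-negative integer circulation supported on $E_\pi$. A standard flow-decomposition of $g$ --- repeatedly subtract $(\min_{t\in\gamma} g(t))\cdot\Par_\gamma$ for a simple cycle $\gamma$ inside the (source/sink-free) support of the current circulation --- yields
\[
  \Par_\pi \;=\; \Par_{\pi_0} + \sum_{i=1}^{k} c_i\,\Par_{\gamma_i}, \qquad c_i \geq 1,
\]
where each $\gamma_i$ is a simple cycle contained in $E_\pi$ (so $|\gamma_i| \leq |Q|$) and $k \leq |E_\pi| \leq |T|$, since each subtraction zeroes at least one transition.

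It then remains to realise the right-hand side as $\Par_\rho$ for a linear path scheme $\rho$ from $p$ to $q$ with at most $k$ cycles and length at most $2|Q|\,|T|$. I would do this by threading a single path $\tau$ from $p$ to $q$ inside $E_\pi$ that contains one copy of each $\gamma_i$ as pairwise non-overlapping cyclic infixes: this is possible because $E_\pi$ is the support of the single path $\pi$, so every state incident to $E_\pi$ is reachable from $p$ and co-reachable from $q$ within $E_\pi$, which allows routing from $p$ through a rerooting of $\gamma_1$, then $\gamma_2$, \dots, then to $q$ using cycle-free connectors; starring the inserted copies produces $\rho = \alpha_0\beta_1^*\alpha_1\cdots\beta_k^*\alpha_k$ with $\beta_i = \gamma_i$, and a careful choice of connectors (e.g.\ rerooting each $\gamma_{i+1}$ to begin where the previous cycle ends whenever possible) keeps $|\rho| \leq 2|Q|\,|T|$. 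Choosing the multiplicity of each $\beta_i$ appropriately recovers exactly $\Par_\pi$; since the connectors themselves may absorb some copies of the $\gamma_i$, this last point requires arguing that the Parikh image of the connector skeleton equals $\Par_{\pi_0}$ plus an element of the monoid generated by $\{\Par_{\gamma_1},\dots,\Par_{\gamma_k}\}$ dominated by $\sum_i c_i\Par_{\gamma_i}$, using the flow decomposition above. Collecting all such $\rho$ into $S_{p,q}$ over all paths $\pi$, and invoking the reduction of the first paragraph, gives Proposition~\ref{P zreach}. I expect the main obstacle to be precisely this threading construction together with the bookkeeping that simultaneously keeps the number of cycles at $|T|$, the length at $2|Q|\,|T|$, and the Parikh image exactly on target; the remaining steps are routine.
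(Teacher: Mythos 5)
Your reduction of $\Z$-reachability to a statement about Parikh images of paths is exactly the paper's (this is how Proposition~\ref{P zreach} is derived from Lemma~\ref{L parikh}), and your decomposition $\Par_\pi = \Par_{\pi_0} + \sum_i c_i\Par_{\gamma_i}$ with $k\le |T|$ simple cycles is the right shape. The gap is the step you yourself flag as ``the main obstacle'': realising this decomposition as a linear path scheme. Because you take the skeleton $\pi_0$ to be a \emph{cycle-free} path from $p$ to $q$, it may miss the states at which the cycles $\gamma_i$ are rooted, so you must insert connectors; the connectors consume edges, and you then need the Parikh image of the resulting skeleton $\alpha_0\cdots\alpha_k$ to equal $\Par_{\pi_0}$ plus a combination $\sum_i d_i\Par_{\gamma_i}$ with $0\le d_i\le c_i$. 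Nothing in your sketch guarantees this: an out-and-back connector is a closed walk in the support of the residual circulation, and its Parikh image decomposes into \emph{some} simple cycles of that support, but not necessarily into the particular cycles $\gamma_1,\dots,\gamma_k$ selected by your flow decomposition, nor with multiplicities bounded by the $c_i$ (the support of a circulation generally contains far more simple cycles than any one decomposition uses). ``Rerooting each $\gamma_{i+1}$ to begin where the previous cycle ends whenever possible'' is not an argument; it is precisely the claim that needs proof, and it also threatens the length bound, since connectors add up to another $(|T|+1)(|Q|-1)$ transitions on top of $|Q|+|T|\cdot|Q|$.

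The paper closes this gap by choosing a different skeleton. Instead of a cycle-free path, it writes $\pi=e_1\pi_1\cdots e_k\pi_k$ where each $e_j$ is the first transition of $\pi$ introducing a new state, and removes simple cycles only from inside the blocks $\pi_j$. The resulting skeleton $\rho_1$ has length at most $|Q|\cdot|T|$ (not $<|Q|$) but \emph{visits every state visited by $\pi$}, and by construction $\Par_\pi=\Par_{\rho_1}+\sigma_1$ with $\sigma_1$ flow-preserving. Simple cycles are then extracted from $\sigma_1$ one at a time (via a choice function on the support, subtracting $c\cdot\Par(\beta)$ for the minimal multiplicity $c$, which zeroes an edge and keeps the residue flow-preserving) and each is starred directly at a state already present on the skeleton --- no connectors, hence no accounting problem, and the bounds $2|Q|\cdot|T|$ and $|T|$ cycles fall out immediately. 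If you want to salvage your version, you should replace $\pi_0$ by such a ``first-visit'' skeleton; as written, the realisation step is an unproven claim sitting exactly where the technical content of the lemma lives.
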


In order to prove Proposition \ref{P zreach}, we will prove suitable
bounds on the representation size of the Parikh images of paths of a
$\Sigma$-labeled finite graphs (or equivalently, nondeterministic
finite automata) in terms of linear path schemes.

\begin{lemma}\label{L parikh}
  Let $G=(U,E)$ be a finite $\Sigma$-labeled graph. There exists a
  finite set $S$ of linear path schemes such that
  \begin{itemize}
  \item[(i)] $\{\Par_\pi : \pi \text{ is a path}\} =
    \bigcup\{\Par_\rho : \rho \in S\}$,
  \item[(ii)] $|\rho|\leq 2\cdot|U|\cdot|E|$ for each $\rho\in S$, and
  \item[(iii)] each $\rho \in S$ has at most $|E|$ cycles.
  \end{itemize}
\end{lemma}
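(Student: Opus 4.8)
The plan is to argue by induction on the number of edges of $G=(U,E)$, keeping track simultaneously of the bounds in (ii) and (iii). First I would dispose of the degenerate cases: if $E=\emptyset$ then the only paths are the single vertices, each contributing the zero Parikh image, so $S$ can be taken to consist of the trivial (empty) linear path scheme at each vertex. The key reduction step is to show that the set of all path-Parikh-images can be captured by ``peeling off one edge at a time''. Concretely, fix an edge $e=(u,a,u')\in E$. Every path $\pi$ in $G$ either (a) does not use $e$ at all, in which case $\pi$ is a path in the smaller graph $G'=(U,E\setminus\{e\})$, or (b) uses $e$ at least once, and can then be factored at the occurrences of $e$ as $\pi = \pi_0\, e\, \pi_1\, e\, \cdots\, e\, \pi_m$ with $m\ge 1$, where each $\pi_i$ is a (possibly empty) path in $G'$, $\pi_0$ ends at $u$, $\pi_m$ starts at $u'$, and each intermediate $\pi_i$ for $0<i<m$ is a path in $G'$ from $u'$ to $u$. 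The Parikh image of such a $\pi$ is $\Par_{\pi_0}+\cdots+\Par_{\pi_m}+m\cdot\mathbf{1}_a$ (identifying the edge $e$ with the letter $a$; more precisely one should track edges, but since only the letters matter for $\Par$, this is harmless — I would phrase the induction on an edge-labelled graph and only project to letters at the very end, or equivalently take $\Sigma=E$).

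The heart of the argument is to realize that the "middle part" $e\,\pi_1\,e\,\pi_2\cdots e\,\pi_{m-1}\,e$ — a $u'$-to-$u$ walk in $G'$ interleaved with copies of $e$ — has a Parikh image that is a sum of a single $u'$-to-$u$ path Parikh image in $G'$ (obtained by noting that a walk decomposes into a cycle-free path plus a nonnegative integer combination of simple cycles along it) together with multiples of $a$ and multiples of simple-cycle Parikh images. Rather than fight this combinatorics directly, the cleaner route is the following: apply the induction hypothesis to $G'$ to obtain a set $S'$ of linear path schemes, with $|S'|$ path schemes each of length $\le 2|U|\,|E\setminus\{e\}|$ and at most $|E|-1$ cycles, capturing all path-Parikh-images of $G'$. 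For case (b), pick $\rho_0,\rho_m\in S'$ handling the prefix from the relevant start vertex to $u$ and the suffix from $u'$ to the relevant end vertex, and pick $\rho\in S'$ handling $u'$-to-$u$ paths; then the Parikh image of $\pi$ lies in $\Par_{\rho_0} + \Par_{e} + \big(\Par_{e}+\Par_{\rho}\big)^{(*)} + \Par_{\rho_m}$, i.e. in the Parikh image of the linear path scheme $\rho_0\, \gamma^* \,\rho_m$ where $\gamma$ is a cycle (through $u$ and $u'$, using $e$ once and following $\rho$'s underlying path once) — but one must be slightly careful because concatenating $\rho_0\gamma^*\rho_m$ as linear path schemes threatens to blow up the cycle count. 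This is where I expect the main obstacle to lie: controlling the number of cycles so that it stays $\le|E|$, and the length so that it stays $\le 2|U|\,|E|$, rather than multiplying.

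To beat the blow-up I would not naively concatenate three linear path schemes from $S'$; instead I would prove the lemma in the sharper form that is actually needed for the induction to close, namely: for every pair $(s,t)\in U\times U$ there is a set of linear path schemes, with the stated per-scheme bounds, capturing exactly the Parikh images of $s$-to-$t$ paths. Then the inductive step for the edge $e=(u,a,u')$ takes an $s$-to-$u$ scheme $\rho_0=\alpha_0\beta_1^*\cdots\beta_j^*\alpha_j$ from $S'_{(s,u)}$, a $u'$-to-$t$ scheme $\rho_m$ from $S'_{(u',t)}$, and a $u'$-to-$u$ scheme $\rho$ from $S'_{(u',u)}$, and forms $\rho_0\,(e\cdot\rho^{\flat})^*\,\rho_m\cdot e$ (appropriately arranged so the underlying sequence is a genuine path), where $\rho^{\flat}$ denotes the underlying path of $\rho$; the total cycle count is then at most $(\text{cycles of }\rho_0) + 1 + (\text{cycles of }\rho_m)$ plus the cycles of $\rho$ absorbed into the starred block — and here I would observe that a finer invariant, proven alongside, is that the number of cycles of each scheme for $s$-to-$t$ paths is bounded by the number of edges \emph{not on its underlying cycle-free skeleton}, which after removing $e$ grows by at most one when $e$ is reinserted, and that the length bound $2|U|\,|E|$ follows because the cycle-free skeleton has length $<|U|$ and there are at most $|E|$ distinct simple cycles appearing, each of length $\le|U|$; summing, $|U|+|E|\cdot|U|\le 2|U|\,|E|$ when $|E|\ge 1$. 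Assembling case (a) (just keep $S'$ verbatim, the bounds only improve) with case (b) yields the desired $S$, and projecting the edge-alphabet down to $\Sigma$ at the end preserves all Parikh-image equalities and all size bounds.
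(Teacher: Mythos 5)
There is a genuine gap, and it sits exactly where you suspected, but it is more serious than a bookkeeping issue about cycle counts. In your case (b) the middle portion of $\pi = \pi_0\, e\, \pi_1\, e \cdots e\, \pi_m$ contributes $\Par_{\pi_1}+\cdots+\Par_{\pi_{m-1}}+m\cdot\Par_e$, where the $\pi_i$ are $m-1$ \emph{pairwise unrelated} $u'$-to-$u$ paths in $G'$ whose Parikh images may all differ. A starred block $(e\cdot\rho^{\flat})^*$ contributes only $n\cdot(\Par_e+\Par_{\rho^{\flat}})$ for a single fixed vector, so it cannot represent an unbounded sum of varying Parikh images; and replacing it by a concatenation of $m-1$ schemes from $S'_{(u',u)}$ is not allowed since $m$ is unbounded and the cycle count must stay at most $|E|$. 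The ``finer invariant'' you invoke to rescue this — that at most $|E|$ distinct simple cycles ever appear, and that each can be anchored on the skeleton — is precisely the hard content of the lemma, asserted rather than proved (a graph can have exponentially many distinct simple cycles, and a cycle-free $s$-to-$t$ skeleton of length $<|U|$ need not pass through the vertices where those cycles would have to be attached). So the edge-peeling induction as described does not close.

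The paper avoids the decomposition-by-occurrences-of-$e$ entirely and instead uses a flow argument on the whole path $\pi$ at once: it first extracts a ``skeleton'' path $\rho_1$ of length at most $|U|\cdot|E|$ that visits \emph{every} state occurring on $\pi$ (this is why the skeleton cannot be taken cycle-free of length $<|U|$), so that the residual $\sigma_1=\Par_\pi-\Par_{\rho_1}$ is a flow-preserving edge multiset. It then repeatedly picks, via a choice function on the support of the residual and the pigeonhole principle, a simple cycle $\beta$ whose minimum residual multiplicity $c$ is attained on some edge, subtracts $c\cdot\Par_\beta$ (zeroing at least one edge, hence at most $|E|$ rounds, which is where the bound on the number of cycles really comes from), and inserts $\beta^*$ into the scheme at a state that $\rho_1$ is guaranteed to visit. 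If you want to salvage your write-up, the piece to add is exactly this: a proof that the leftover after removing a state-covering skeleton is a circulation, and that circulations over $E$ decompose into at most $|E|$ simple cycles with multiplicities, each anchored at a vertex of the skeleton.
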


\begin{proof}
  We first provide some additional definitions. Let
  $\sigma,\sigma':E\rightarrow\N$ be mappings and let $X$ be a set of
  such mappings. We define $\sigma+\sigma'\in\N^E$ as
  $(\sigma+\sigma')(e) \defeq \sigma(e)+\sigma'(e)$ for each $e \in E$
  and $X+\sigma\defeq\{\tau+\sigma:\tau\in X\}$.  For each $u\in U$,
  let $\text{in}(u)\defeq\{(u',a,u'')\in E: u''=u\}$ and and
  $\text{out}(u)\defeq\{(u',a,u'')\in E\mid u'=u\}$ denote the set of
  incoming and outgoing edges of $u$, respectively. We say that
  $\sigma$ is \emph{flow-preserving} if for every $u \in U$ we have
$$\sum_{e\in \text{in}(u)}
  \sigma(e) \qquad=\qquad \sum_{e\in \text{out}(u)}\sigma(e)\quad.$$

\noindent
  We will show the following claim: \newline
  
  {\em Claim.} Let $\pi \in E^*$ be a path. There exists some $h\geq
  1$, a sequence of linear path schemes $\rho_1, \ldots, \rho_h \subseteq E^*$, and a
  sequence $\sigma_1, \ldots, \sigma_h \in \N^E$ such that
  \begin{enumerate}[(a)]
    \item $\rho_1$ is a path of length at most $|U|\cdot|E|$ that visits each state of
$\pi$ at least once,
    \item $\sigma_1$ is flow-preserving, and
    \item $\Par_\pi = \Par_{\rho_1} + \sigma_1$, 
  \end{enumerate}
  and for every $1 < i \leq h$,
  \begin{enumerate}[(1)]
  \item $\rho_i$ is a linear path scheme that can be obtained from $\rho_1$ by
    inserting $i-1$ simple cycles (of the form $\beta^*$),
 \item $\sigma_i$ is flow-preserving,
  \item $\Par_{\rho_{i-1}} + \sigma_{i-1} \subseteq
    \Par_{\rho_i} + \sigma_i$,
  \item $\sigma_{i-1}(e)\geq\sigma_i(e)$ for all $e\in E$ and 
there exists some $e\in E \text{ s.t. } \sigma_{i-1}(e) > \sigma_i(e) = 0$, and
  \item $\sigma_h(e) = 0$ for all $e\in E$.
  \end{enumerate}

First observe that due to (4) we have $h \leq |E|$, and due to (1) we
have $|\rho_i|\leq|\rho_1|+|U|\cdot(i-1)$. Therefore $|\rho_h| \leq
|\rho_1| + |U|\cdot|E| \leq 2 \cdot |U|\cdot|E|$, where the last
inequality is due to (a). Moreover $\rho_h$ has at most $|E|$ cycles
due to (1) and $h \leq |E|$.

Before proving the claim, let us first see how it proves the lemma.
We define $$S \quad \defeq \quad \{\rho:\rho \text{ is a linear path scheme, }
|\rho| \leq 2\cdot|U|\cdot|E|\text{ and $\rho$ has at most $|E|$
  cycles}\}\quad.
$$ Trivially, (ii) and (iii) are satisfied. To establish (i) let us
fix an arbitrary path $\pi$. We apply the above Claim and obtain a linear
path scheme $\rho_h\in S$ for $\pi$. It suffices to show $\Par_\pi \in
\Par_{\rho_h}$ which holds due to
  $$ \Par_\pi \stackrel{\text{(c)}}{=} \Par(\rho_1) +
  \sigma_1 \stackrel{\text{(3)}}{\subseteq} \quad \cdots
  \quad \stackrel{\text{(3)}}{\subseteq} \Par_{\rho_h} +
  \sigma_h \stackrel{\text{(5)}}{=} 
 \Par_{\rho_h}.
  $$ 

  We now prove the claim. Let $\pi$ be a path and let us first define
  $\rho_1$ and $\sigma_1$ such that (a),(b) and (c) are satisfied.
  The path $\pi$ can be decomposed as $\pi = e_1 \pi_1 \cdots e_k
  \pi_k$ where $k \leq |U|$ and each $e_j = (u, a, u')$ is the first
  transition such that $u$ or $u'$ appears in $\pi$.  We define
  $\rho_1$ and $\sigma_1$ as the result of the following iterative
  process: We initially set $\rho_1$ to $\pi$ and set $\sigma_1(e)=0$
  for all $e\in E$; then we successively remove a simple cycle $\beta$
  from some $\pi_j$, and add $\Par(\beta)$ to $\sigma_1$.  We repeat
  this process until no longer possible. The resulting $\rho_1$ is a
  path of length at most $|U|\cdot|E|$.  Moreover, $\sigma_1$ is
  flow-preserving since we successively removed cycles only, and
  clearly $\Par(\pi) = \Par(\rho_1) + \sigma_1$, by construction. Thus
  (a), (b) and (c) hold.

  Let us prove (1) to (5) by induction on $1 < i \leq h$. We only
  prove the induction step, the base case can be proven
  analogously. Let $E' \defeq \{e \in E : \sigma_{i-1}(e)>0\}$. If $E'
  = \emptyset$, then (5) holds and we are done. Thus, we assume that
  $E' \not= \emptyset$. Let us fix a choice function $\chi : E'
  \rightarrow E'$ satisfying $$ \chi(u_1,a,u_2)=(u_1',a,u_2') \qquad
  \Longrightarrow \qquad u_2 = u_1'.$$ Note that $\chi$ exists since
  $\sigma_{i-1}$ is flow-preserving by induction hypothesis. By the
  pigeonhole principle there exist some $e\in E'$ and some $\ell\geq
  0$ such that $\beta\defeq e\chi(e)\chi^2(e)\cdots\chi^\ell(e)$ is a
  simple cycle and $c\defeq\sigma_{i-1}(e)\leq\sigma_{i-1}(\chi^j(e))$
  for all $j\in[1,\ell]$. We define $\sigma_i \defeq \sigma_{i-1} -
  \Par(\beta^c)$ and observe that $\sigma_i$ is flow-preserving due to
  minimality of $c$; thus (2) and (4) are shown. Let $\beta$ be a
  cycle from $u$ to $u$. By (1) of induction hypothesis the linear
  path scheme $\rho_{i-1}$ can be obtained from $\rho_1$ by inserting
  $(i-2)$ simple cycles and can hence be factorized as $\rho_{i-1} =
  \alpha\gamma$, where $\alpha$ is a linear path scheme from some
  state to $u$. We set $\rho_i\defeq\alpha\beta^*\gamma$ and hence (1)
  holds. Furthermore, (3) holds due to $\Par_{\rho_{i-1}} +
  \sigma_{i-1} = \Par_{\rho_{i-1}} + \Par_{\beta^c} + \sigma_i
  \subseteq \Par_{\rho_i} + \sigma_i$.
\end{proof}

We are now prepared to prove Proposition \ref{P zreach}.

\begin{proof}[Proof of Proposition \ref{P zreach}]
  We have $T\subseteq Q \times \Sigma \times Q$ for some finite subset
  $\Sigma \subseteq \Z^d$. Let $S$ be the finite union of linear path schemes from
  Lemma \ref{L parikh}, then (2) and (3) are clear. For (1) we have
  the following equivalences: 
  \begin{eqnarray*}
    p(\vec{u})\xrightarrow{*}_{\Z^d}q(\vec{v})
    &\Longleftrightarrow&
    \exists\text{ path }\pi\text{ from $p$ to $q$ in $V$ s.t. }
    \vec{v}-\vec{u}=\sum_{\vec{z}\in\Sigma}\Par_\pi(\vec{z})\cdot\vec{z} \\[5pt]
    &\stackrel{\text{Lemma \ref{L parikh}~(i)}}{\Longleftrightarrow}& 
    \text{$\exists \rho\in S$ 
      from $p$ to $q$, $\exists f\in\Par_\rho$ s.t. }
    \vec{v}-\vec{u}\in\sum_{\vec{z}\in\Sigma}f(\vec{z})\cdot\vec{z} \\[5pt]
    &\Longleftrightarrow&
    \text{$\exists\rho\in S$ from $p$ to $q$ s.t. }\vec{v}-\vec{u}\in\delta(\rho) \\[5pt]
    &\Longleftrightarrow&
    p(\vec{u})\xrightarrow{S}_{\Z^d}q(\vec{v})
  \end{eqnarray*}
\end{proof}

\subsection{Starting and ending in ``sufficiently large'' configurations}
\label{sec:ultimately-flatness}

The goal of this section is to prove that, given a 2-VASS $V$, there
exists a sufficiently small bound $D$ such that the reachability
relation between any two configurations $q(u_1,v_1)$ and $q(u_2,v_2)$
for which $u_1,u_1,u_2,v_2 \geq D$ can be captured by a finite set of
small linear path schemes (in the sense of
Theorem~\ref{main1}). In~\cite{LS04}, this property is referred to as
\emph{ultimately flat}.  As a consequence of this result, we can show
that the reachability relation between arbitrary configurations for
which there exists a run on which both counter values on all
configurations stay above $D$ can be captured by a finite union of
small linear path schemes as well.
\begin{proposition}
  \label{P upper right}
  Let $V=(Q,T)$ be a 2-VASS. There exist $D\le (\abs{Q} +
  \norm{T})^{O(1)}$ and sets of linear path schemes $R,X$ such that
  for $\O\defeq [D, \infty)^2$ and $\vec{u},\vec{v}\in \O$,
    \begin{enumerate}[(a)]
    \item \begin{itemize}
    \item $q(\vec{u}) \xrightarrow{*}_{\N^2} q(\vec{v})$ if, and only
      if, $q(\vec{u}) \xrightarrow{R}_{\N^2} q(\vec{v}) $, and
    \item $|\rho| \leq (|Q| + \norm{T})^{O(1)}$ and $\rho$ has at most two cycles
      for every $\rho\in R$.
    \end{itemize}
    \item \begin{itemize}
    \item $p(\vec{u}) \xrightarrow{*}_\O q(\vec{v})$ if, and only if,
      $p(\vec{u}) \xrightarrow{X}_{\N^2} q(\vec{v})$, and
    \item $|\rho| \leq (|Q| + \norm{T})^{O(1)}$ and $\rho$ has at most
      $2\cdot|Q|$ cycles for every $\rho \in X$.
    \end{itemize}
    \end{enumerate}
\end{proposition}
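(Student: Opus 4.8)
The plan is to prove part~(a) first and then derive part~(b) as a relatively soft consequence. For part~(a), the starting point is the observation that if a run $q(\vec u) \xrightarrow{\pi}_{\N^2} q(\vec v)$ both starts and ends in $\O = [D,\infty)^2$ for a sufficiently large $D$, then — as long as $D$ dominates $|Q|\cdot\norm{T}$ and the other threshold quantities occurring below — the entire run can be ``shifted'' into the full integer lattice without affecting reachability: the prefix of $\pi$ before the run first drops low, and the suffix after it last climbs back, are each short (at most $|Q|\cdot D$ steps, say, by a pumping argument on states in the band), and the middle portion can be handled by $\Z^2$-reachability from Proposition~\ref{P zreach}. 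Concretely I would: (i) invoke Proposition~\ref{P zreach} to get linear path schemes for $\xrightarrow{*}_{\Z^2}$ of length $\le 2|Q||T|$ with $\le |T|$ cycles; (ii) observe that the displacement set $\delta(\rho)$ of each such scheme $\rho = \alpha_0\beta_1^*\cdots\beta_k^*\alpha_k$ is a linear set $\delta(\alpha_0\cdots\alpha_k) + \N\delta(\beta_1) + \cdots + \N\delta(\beta_k)$ whose base point and periods have norm polynomial in $|Q|+\norm{T}$; and (iii) apply the ``decomposition of certain linear sets in dimension two into semi-linear sets with special properties'' promised in the introduction to Section~\ref{sec:ultimately-flatness}, which replaces each such linear set by a finite union of \emph{zigzag-free} linear sets, each using at most two periods lying in a common quadrant, with base points and periods of polynomially bounded norm. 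The key geometric point is that a $\Z^2$-run whose displacement lies in a zigzag-free linear set with periods in, say, $\N^2$, can be realized as an $\N^2$-run provided it starts high enough: one reorders the iterations of the (at most two) cycles so that the positive contributions come first, and since each cycle has polynomially bounded displacement and the run starts in $[D,\infty)^2$ with $D$ polynomial, the counters never go negative. This yields $R$ with the stated bounds — length $(|Q|+\norm{T})^{O(1)}$ and at most two cycles per scheme (the two periods of the zigzag-free linear set).

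For part~(b), I would bootstrap from part~(a). A run $p(\vec u)\xrightarrow{\pi}_\O q(\vec v)$ that stays entirely inside $\O$ visits some state more than once if it is long; splitting at the first repetition we write $\pi$ as a path $\sigma$ that visits each state at most once — hence has length $< |Q|$ — with simple cycles $\beta$ (each of length $\le |Q|$) attached at the revisited states. The subtlety is that an $\O$-cycle from a state $r$ back to $r$ need not be realizable from an arbitrary configuration in $\O$; but by part~(a) applied to the sub-VASS, or more directly by noting that an $\O$-cycle is in particular an $\N^2$-run between two configurations both in $\O$, the set of displacements of $\O$-cycles at $r$ is captured by the schemes $R$ from part~(a) (one per choice of endpoint state, here $r=r$), each with $\le 2$ cycles. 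Threading $\sigma$ through the states and inserting, at each of the $< |Q|$ revisited states, a constant number of cycles coming from the corresponding $R$-schemes, gives linear path schemes of length $(|Q|+\norm{T})^{O(1)}$ with at most $2\cdot|Q|$ cycles. One must be a little careful that the order of insertion respects reachability inside $\O$ — this is where the zigzag-freeness from part~(a) again helps, since it lets the cycles be pumped independently once the configuration is high enough on the relevant component.

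The main obstacle I anticipate is the linear-algebraic heart of part~(a): showing that an arbitrary linear set in $\Z^2$ of polynomially bounded description decomposes into \emph{polynomially many} zigzag-free linear sets of polynomially bounded description, and — crucially for the ``at most two cycles'' bound — that each piece needs only two periods. The naive decomposition (intersect the period cone with each quadrant) gives the quadrant property but not obviously the bounded number of periods or the bounded norms, and controlling the base points so that the zigzag-free realization is possible from a threshold $D$ that is only \emph{polynomial} in $|Q|+\norm{T}$ (rather than exponential, as in \cite{LS04}) is exactly the place where the authors flag that new arguments are required. I would expect the proof to reduce this to a statement about $2$-generated submonoids of $\N^2$ and about Carathéodory-type bounds in dimension two, using that in the plane any finitely generated submonoid of a quadrant is covered by submonoids generated by two of the generators together with a bounded ``gap'' set — and then one tracks norms through this covering.
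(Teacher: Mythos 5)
Your proposal follows essentially the same route as the paper: for (a), compose the $\Z^2$-reachability schemes of Proposition~\ref{P zreach} with the decomposition of their displacement sets into zigzag-free pieces with at most two periods (Lemmas~\ref{L Dimension Two} and~\ref{lem:zig-zag-free-lps}), then use the fact that a zigzag-free scheme deviates from a coordinatewise-monotone trajectory by at most $|\sigma|\cdot\norm{T}\le D$, so starting and ending in $[D,\infty)^2$ forces the $\Z^2$-run to be an $\N^2$-run; for (b), factor the $\O$-run into at most $|Q|$ short connecting segments and cycles with endpoints in $\O$, and replace each cycle by a two-cycle scheme from $R$. The only inaccuracies are cosmetic and not load-bearing — one does not reorder cycle iterations (the monotonicity-up-to-bounded-noise argument makes reordering unnecessary), the cycles split off in (b) are arbitrary $\O$-cycles rather than simple ones of length $\le|Q|$, and no care about staying inside $\O$ after substitution is needed since (b) only asserts $\xrightarrow{X}_{\N^2}$.
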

The proof of this proposition requires two intermediate steps. First,
in Lemma~\ref{L Dimension Two} below we prove an effective
decomposition of certain linear sets in dimension two into semi-linear
sets with nice properties. Similar decompositions have been the
cornerstone of the results by Hopcroft and Pansiot~\cite{HP79} and
Leroux and Sutre~\cite{LS04}. The contribution of Lemma~\ref{L
  Dimension Two} is to establish a new proof from which we can obtain
sufficiently small bounds on this decomposition. Next, in
Lemma~\ref{lem:zig-zag-free-lps} we show how this decomposition can be
applied in order to capture reachability instances by linear path
schemes with two cycles whose displacements all point into the same
quadrant. This in turn enables us to prove Part~(a) of
Proposition~\ref{P upper right}, from which we can then prove
Part~(b).

Let us recall some definitions concerning semi-linear sets. Let $P =
\{\vec{p}_1,\ldots, \vec{p}_n\} \subseteq \Z^m$ and $\D \subseteq
\Q$. The {\em $\D$-cone generated by $P$} is defined as
\begin{align*}
  \cone_\D(P) \defeq \left\{\sum_{i \in [1,n]} \lambda_i \cdot
  \vec{p}_i : \lambda_i \in \D, \lambda_i \geq 0\right\}.
\end{align*}
A {\em linear set} $L(\vec{b}; P)$ is given by a {\em base vector}
$\vec{b} \in \Z^d$ and a finite set of {\em period vectors} $P
\subseteq \Z^d$, where $L(\vec{b}; P) \defeq \vec{b}+\cone_\N(P)$. A
          {\em semi-linear set} is a finite union of linear sets. The
          {\em norm $\norm{P}$} of a finite set $P \subseteq \Z^d$ is
          defined as $\norm{P}\defeq \max\{\norm{\vec{p}} : \vec{p}\in
          P\}$. Recall that $\vec{u},\vec{v} \in \Z^d$ are {\em
            linearly dependent} if $\vec{0} = \lambda_1 \cdot \vec{u}
          + \lambda_2 \cdot \vec{v}$ for some $\lambda_1,\lambda_2 \in
          \Q \setminus \{0\}$, and {\em linearly independent}
          otherwise.

We now show the following statement: the intersection of a linear set
$L(\vec{b}; P) \subseteq \Z^2$, such that $\vec{b} \in P$, with 
some quadrant $Z$ is equal to a semi-linear set
$\bigcup_{i \in I} L(\vec{c}_i; P_i)$ such that $\vec{c}_i \in
L(\vec{b}; P)$ and each $P_i$ contains only two ``small'' vectors from
$(P\cup L(\vec{b}; P)) \cap Z$.

\begin{lemma}{\label{L Dimension Two}}
  Let $\vec{b} \in \Z^2$, let $P \subseteq \Z^2$ be finite with
  $\vec{b} \in P$ and let $Z$ be a quadrant. Then $L(\vec{b}; P) \cap Z = \bigcup_{i
    \in I} L(\vec{c}_i; P_i)$ such that for each $i \in I$ we have
  \begin{itemize}
  \item $|P_i| \leq 2$,
  \item $P_i \subseteq (P \cup L(\vec{b}; P)) \cap Z$,
    and
  \item there exists $e\leq\norm{P}^{O(1)}$ such that
    $\{\vec{c}_i\}\cup(P_i\cap
    L(\vec{b};P))\subseteq\vec{b}+\cone_{[0,e]}(P)$.
\end{itemize}
\end{lemma}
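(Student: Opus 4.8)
We want to intersect a linear set $L(\vec b;P)\subseteq\Z^2$ with a quadrant $Z$, where crucially $\vec b\in P$. The idea is to work in the $\Q_{\ge0}$-cone generated by $P$, exploit that in dimension two a finitely generated cone is either a line, a half-line, or a planar cone bounded by two extreme rays, and then pull this back to $\N$-combinations using the hypothesis $\vec b\in P$ to absorb the "rounding error" into the base vector.

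\begin{proof}[Proof sketch]
First I would reduce to the sub-cone that actually meets $Z$: replacing $P$ by $P'\defeq P\cap\cone_{\Q_{\ge0}}(P\cap Z)$ if necessary, one checks that $L(\vec b;P)\cap Z$ only depends on those period vectors lying in the closed cone spanned by $P\cap Z$, up to a base shift; since $\vec b\in P$, every vector of $L(\vec b;P)$ is itself an $\N$-combination of $P$, so intersecting with $Z$ forces the combination to stay in the relevant sub-cone. Next, consider $C\defeq\cone_{\Q_{\ge0}}(P\cap Z)$, which lives in the plane. Caratheodory in dimension two tells us $C=\bigcup_{j} \cone_{\Q_{\ge0}}(\{\vec p_j,\vec q_j\})$ where $\{\vec p_j,\vec q_j\}$ ranges over all pairs from $P\cap Z$ — in fact a triangulation into at most $|P|$ two-generated cones suffices, and each generator is one of the original period vectors, hence has norm at most $\norm P$. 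This already delivers $|P_i|\le 2$ and $P_i\subseteq (P\cup L(\vec b;P))\cap Z$ at the level of rational cones; the work is to descend to $\N$-combinations and to control the base vectors $\vec c_i$ and the coefficient $e$.

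For the descent, fix one two-generated sub-cone $\cone_{\Q_{\ge0}}(\{\vec p,\vec q\})$ with $\vec p,\vec q\in P\cap Z$. A point $\vec x\in L(\vec b;P)\cap Z$ lying in this sub-cone is of the form $\vec x=\vec b+\sum_{\vec p'\in P}n_{\vec p'}\vec p'$ with $n_{\vec p'}\in\N$, and simultaneously $\vec x=\lambda\vec p+\mu\vec q$ with $\lambda,\mu\in\Q_{\ge0}$. The standard way to make this integral is: the $\Z$-module lattice generated by $\vec p,\vec q$ has index $|\det(\vec p\mid \vec q)|\le 2\norm P^2$ in the lattice $\Z^2$ (or in the sublattice they span, if $\vec p,\vec q$ are dependent, which I would treat as a degenerate case where $C$ is a half-line and only one generator is needed). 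So the "fundamental domain" $\{\alpha\vec p+\beta\vec q:0\le\alpha,\beta<1\}$ contains only $O(\norm P^2)$ lattice points, each reachable from $\vec b$ by an $\N$-combination of $P$ whose coefficients are $\le e$ for some $e\le\norm P^{O(1)}$ — here is where $\vec b\in P$ is used again, since the small correction vector $\vec x-(\text{integer part of }\lambda)\vec p-(\text{integer part of }\mu)\vec q$ must itself be written as a base plus periods, and absorbing it forces us to add a bounded multiple of $\vec b$. Taking the (finitely many) resulting correction points as the base vectors $\vec c_i$ and taking $P_i=\{\vec p,\vec q\}$ gives $L(\vec c_i;P_i)$, and their union over all sub-cones and all corrections equals $L(\vec b;P)\cap Z$. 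The third bullet, $\{\vec c_i\}\cup(P_i\cap L(\vec b;P))\subseteq\vec b+\cone_{[0,e]}(P)$, then holds by this construction: $\vec c_i$ is $\vec b$ plus a bounded $\N$-combination of $P$, and any $\vec p\in P_i$ that happens to lie in $L(\vec b;P)$ is, by the same index argument applied to the single point $\vec p$, of the form $\vec b+\cone_{[0,e]}(P)$ after enlarging $e$ by a constant factor.

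\textbf{Main obstacle.} The genuinely delicate point is \emph{not} the geometric decomposition of the cone — that is classical two-dimensional convexity — but obtaining the \emph{polynomial} bound $e\le\norm P^{O(1)}$ on the coefficients needed to express the correction vectors and the base vectors $\vec c_i$ as $\N$-combinations of $P$ through $\vec b$. Naively one only gets that the correction vectors lie in a fundamental parallelogram of area $O(\norm P^2)$, but expressing a point of $\N^2\cap(\vec b+\cone_\N(P))$ whose coordinates are $O(\norm P^2)$ as $\vec b+\sum n_{\vec p'}\vec p'$ could a priori require large $n_{\vec p'}$ if $P$ has vectors pointing in nearly opposite directions. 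This is exactly where $\vec b\in P$ is essential: it lets us add copies of $\vec b$ freely to "fuel" the combination, so that any target in the cone close to $\vec b$ (within distance $O(\norm P^2)$) is reachable with each coefficient bounded by $\norm P^{O(1)}$, by a short inductive argument peeling off one period vector at a time while keeping the running point inside a ball of radius $\norm P^{O(1)}$ around a bounded multiple of $\vec b$. Pinning down the exact exponent in this fueling argument, and checking it composes correctly with the lattice-index bound, is where the real care is required.
\end{proof}
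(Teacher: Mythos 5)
Your opening reduction is where the argument breaks down. You replace $P$ by $P\cap\cone_{\Q_{\geq 0}}(P\cap Z)$ and then triangulate $\cone_{\Q_{\geq 0}}(P\cap Z)$ into two-generated cones whose generators come from $P\cap Z$, so every $P_i$ you ever produce is a subset of $P\cap Z$. But $L(\vec{b};P)\cap Z$ is in general not contained in any finite union of translates of $\cone_{\Q_{\geq 0}}(P\cap Z)$: a point of $Z$ can be an $\N$-combination of period vectors \emph{none} of which lies in $Z$, nor in the cone spanned by $P\cap Z$. The claim that ``intersecting with $Z$ forces the combination to stay in the relevant sub-cone'' is false. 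Concretely, take $Z=\N^2$, $\vec{b}=(1,1)$ and $P=\{(1,1),(2,-1),(-1,2)\}$. Then $P\cap\N^2=\{(1,1)\}$, so your cone $C$ is the diagonal ray and every set you construct has the form $L(\vec{c}_i;\{(1,1)\})$, which meets the line $y=1$ in at most one point; yet $L(\vec{b};P)\cap\N^2$ contains $(1,1)+2t\cdot(2,-1)+t\cdot(-1,2)=(1+3t,\,1)$ for every $t\in\N$, an infinite horizontal family. This is precisely why the lemma's second bullet allows $P_i\subseteq(P\cup L(\vec{b};P))\cap Z$ rather than $P_i\subseteq P\cap Z$: the actual difficulty (Case~1 with angle exceeding $180^\circ$ and all of Case~2 in the paper's proof) is to \emph{manufacture} new period vectors such as $(\sigma,0),(0,\alpha)\in L(\vec{b};P)\cap Z$ as small positive combinations of period vectors lying \emph{outside} $Z$ (here $(3,0)=2(2,-1)+(-1,2)\in L(\vec{b};P)\cap\N^2$ is such a period), and then to show that large coefficients on the outside-$Z$ periods can be traded against these manufactured periods with controlled remainders. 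Your proposal never performs this step, so the claimed equality $L(\vec{b};P)\cap Z=\bigcup_{i}L(\vec{c}_i;P_i)$ fails.

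There is a second, independent gap in the descent from $\Q_{\geq 0}$- to $\N$-combinations. The correction vector $\vec{x}-\lfloor\lambda\rfloor\vec{p}-\lfloor\mu\rfloor\vec{q}$ indeed has norm $\norm{P}^{O(1)}$, but nothing guarantees it belongs to $L(\vec{b};P)$ at all, let alone to $\vec{b}+\cone_{[0,e]}(P)$: subtracting integer multiples of $\vec{p},\vec{q}$ from the rational representation of $\vec{x}$ says nothing about the $\N$-combination witnessing $\vec{x}\in L(\vec{b};P)$, whose coefficients on $\vec{p}$ and $\vec{q}$ need bear no relation to $\lfloor\lambda\rfloor,\lfloor\mu\rfloor$. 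The paper avoids this by working directly with the coefficients of the witnessing combination: it first reduces all but two of them below $\norm{P}^2$ using a relation $\gamma_1\vec{p}_1=\gamma_2\vec{p}_2+\gamma_3\vec{p}_3$, and then, for the remaining large pair $\lambda\vec{u}+\zeta\vec{v}$, performs an explicit exchange of $\beta$ copies of $\vec{u}$ and $\gamma$ copies of $\vec{v}$ against one copy of a manufactured axis-aligned period, verifying at each step that the resulting coefficients stay non-negative. Your ``fueling with copies of $\vec{b}$'' sketch does not substitute for that exchange argument, and as stated it would anyway only address the base vectors, not the missing periods.
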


\begin{proof}
  Subsequently, we assume that $Z=\N^2$ and that $P$ only contains pairwise linearly
  independent vectors, the general case can be obtained as an adaption
  of our argument. Let $P=\{\vec{p}_1,\ldots, \vec{p}_n\}$ and
  $\vec{r}\in L(\vec{b};P) \cap \N^2$, by definition $\vec{r} =
  \vec{b} + \lambda_1 \vec{p}_1 + \cdots + \lambda_n \vec{p}_n$ for
  some $\lambda_i\in\N$. Denote $B\defeq \norm{P}$ and suppose there
  are more than two $\lambda_i$ greater than $B^2$, say
  $\lambda_1,\lambda_2,\lambda_3> B^2$. An easy calculation shows that
  there exist $\gamma_1 \in [1,B^2]$  and
  $\gamma_2,\gamma_3\in [-B^2,B^2]$ such that $\gamma_1 \vec{p}_1 =
  \gamma_2 \vec{p}_2 + \gamma_3 \vec{p}_3$. We can thus always
  decrease all but two $\lambda_i$ below $B^2$.
 Hence we can write
  $L(\vec{b};P)$ as the following semi-linear set whose base vectors
  are sufficiently small and whose period vectors have cardinality at
  most two, where  $W\defeq\{\sum_{i=1}^n\lambda_i\vec{p}_i : \lambda_i\in [0,B^2]\}$:
  \begin{align*}
    L(\vec{b};P)  \quad=\quad \bigcup_{P'\subseteq P, \abs{P'}\le 2}\;\bigcup_{\vec{w}\in W} 
    L(\vec{b} + \vec{w};P').
  \end{align*}
  Consequently, $\vec{r} = \vec{z} + \lambda\cdot \vec{u} + \zeta
  \cdot \vec{v}$ where $\vec{z}\defeq \vec{b} + \vec{w}\in L(\vec{b};P)$ for some
  $\vec{w}\in W$. Since $\norm{\vec{w}}\le \norm{P}^{O(1)}$ for every
  $\vec{w}\in W$, we have $\norm{\vec{z}}\le \norm{P}^{O(1)}$.

Our goal is to show that $\vec{r}$ lies in a linear set
  fulfilling the properties required in the lemma. If $\{\vec{u},
  \vec{v}\} \subseteq \N^2$, then we are done since then we have
  $\vec{u},\vec{v}\in P\cap\N^2$.
  The cases when $\lambda=0$ or $\zeta=0$ are trivial, hence we
  subsequently assume $\lambda,\zeta > 0$.  
  We thus consider the remaining
  cases separately up to symmetry. Let $\vec{u}=(u_1,u_2)$ and
  $\vec{v}=(v_1,v_2)$. Note that due to
  $\vec{r}=\vec{z}+\lambda\cdot\vec{u}+\zeta\cdot\vec{v}\in\N^2$ we
  must have $\lambda\cdot \vec{u} + \zeta\cdot \vec{v}\in
  [-\norm{\vec{z}},\infty) \times [-\norm{\vec{z}},\infty)$.

\medskip

\noindent
  {\em Case 1:} $\vec{u} \in \N^2$ and $\vec{v}\not\in \N^2$. We only treat the case when the clockwise angle between
  $\vec{u}$ and $\vec{v}$ exceeds $180^\circ$, illustrated in
  Figure~\ref{fig:case1}, i.e. when $\vec{v} \in -\N_{>0} \times \Z$
  and $u_2/u_1 < v_2/v_1$. The case when the clockwise angle is below
  $180^\circ$ can be treated symmetrically, i.e. when $\vec{v} \in \Z
  \times -\N_{>0}$ and $u_2/u_1 > v_2/v_1$. It cannot be exactly
  $180^\circ$ since $\vec{u}$ and $\vec{v}$ are linearly independent
  by assumption.

    \begin{figure}[t]
      \begin{center}
        \includegraphics{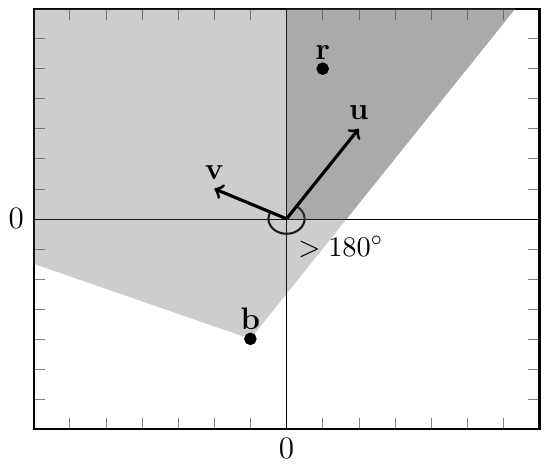}
      \end{center}

      \caption{Example of Case 1 with the angle exceeding
        $180^\circ$. The filled area corresponds to $\vec{b} +
        \cone_{\Q_{\geq 0}}(\{\vec{u}, \vec{v}\})$
        and the darker filled area corresponds to $\vec{b} +
        \cone_{\Q}(\{\vec{u}, \vec{v}\}) \, \cap\ (\Q_{\geq 0}
        \times \Q_{\geq 0})$.
      }
      \label{fig:case1}
    \end{figure}

  Our first step is to show the existence of some $\alpha\in \N$ such
  that $(0,\alpha) = \beta \cdot \vec{u} + \gamma \cdot \vec{v} \in
  L(\vec{b}; P)$ for some $\beta, \gamma \in [1,B^4]$. Due to
  the linear independence of $\vec{u}$ and $\vec{v}$, there exist
  $\eta \in [1,B^2]$ and $\chi, \theta \in [-B^2,B^2]$ such that $\eta
  \cdot \vec{b} = \chi \cdot \vec{u} + \theta \cdot \vec{v}$. Since
  the clockwise angle between $\vec{u}$ and $\vec{v}$ exceeds
  $180^\circ$ there are positive $\alpha',\beta',\gamma' \in [1,B^2]$
  such that $(0,\alpha') = \beta' \cdot \vec{u} + \gamma' \cdot
  \vec{v}$. Thus, we can choose $\alpha,\beta$ and $\gamma$ as
  follows:
  \begin{eqnarray}
    (0,\underbrace{\alpha'\cdot B^2}_{\alpha}) & = &
    \underbrace{B^2\cdot\beta'}_{\beta} \cdot \vec{u} +
    \underbrace{B^2\cdot\gamma'}_{\gamma} \cdot \vec{v} \label{E Up}
    \\ & = & \beta \cdot \vec{u} + \gamma \cdot \vec{v} - \eta \cdot
    \vec{b} + \eta \cdot \vec{b} \nonumber \\ & = &
    (\underbrace{\beta-\chi}_{\geq 0}) \cdot \vec{u} +
    (\underbrace{\gamma-\theta}_{\geq 0}) \cdot \vec{v} + \eta \cdot
    \vec{b} \stackrel{\vec{b} \in P,\eta \geq 1}{\in} L(\vec{b};
    P)\nonumber .
  \end{eqnarray}
  As an intermediate step, we show that
  \begin{eqnarray*}
    \label{E Linear} \gamma \cdot \lambda - \beta \cdot \zeta > -\norm{P}^{O(1)}.
  \end{eqnarray*} 
  To this end we rewrite $\lambda \cdot \vec{u} + \zeta \cdot \vec{v}$
  as
  \begin{eqnarray}
    \lambda \cdot \vec{u} + \zeta \cdot \vec{v} &
    = & \left(\left\lfloor\frac{\lambda}{\beta}\right\rfloor
    \cdot \beta + (\lambda \bmod \beta)\right) \cdot
    \vec{u} + \zeta \cdot \vec{v}\nonumber \\
    & \stackrel{(\ref{E Up})}{=} &
    \left\lfloor\frac{\lambda}{\beta} \right\rfloor
    \left((0,\alpha) - \gamma \cdot \vec{v}\right) + (\lambda
    \bmod \beta) \cdot \vec{u} + \zeta \cdot
    \vec{v}\nonumber \\
    & = & \underbrace{\left(\zeta -
      \left\lfloor\frac{\lambda}{\beta}\right\rfloor \cdot
      \gamma \right)}_{\kappa} \cdot \vec{v} +
    \underbrace{(\lambda \bmod \beta) \cdot
      \vec{u}}_{\text{has norm at most $B^5$}} +
    \left\lfloor\frac{\lambda}{\beta}\right\rfloor
    \cdot(0,\alpha) \label{E Form}.
  \end{eqnarray}
  Recall that $v_1<0$. Since $\lambda\cdot \vec{u} + \zeta\cdot
  \vec{v}\in [-\norm{\vec{z}},\infty) \times \Z$, applying (\ref{E
      Form}) we derive
  \begin{align*}
    \kappa \cdot v_1 + (\lambda \bmod \beta)\cdot \norm{\vec{u}} \ge
    - \norm{\vec{z}}\quad \Longrightarrow\quad \kappa \le H ~~\text{for some } H\le \norm{P}^{O(1)}.
  \end{align*} 
We now obtain
  \begin{eqnarray}
    \zeta - \frac{\lambda + \beta}{\beta}\cdot \gamma < \kappa \le H&\quad\Longrightarrow\quad&\zeta<H+\frac{\lambda+\beta}{\beta}\cdot\gamma
\nonumber\\
& \Longrightarrow&
    \label{E Lambda Ineq}
\lambda > \frac{(\zeta - H) \cdot \beta}{\gamma} - \beta.
  \end{eqnarray} 
  
\noindent
  In order to obtain $\vec{r} \in L(\vec{c}; \{\vec{x}, \vec{y}\})$
  for suitable $\vec{c}, \vec{x}, \vec{y}$, we make a case
  distinction. Let $H'\defeq H+2\cdot\gamma$.

    \begin{itemize}
    \item $\zeta \leq H'$: We choose $\vec{c} \defeq \vec{z} + \zeta
      \cdot \vec{v}$, $\vec{x} = \vec{y} \defeq {\vec{u}}$ and observe that
$\norm{\vec{c}}\leq\norm{P}^{O(1)}$.

    \item $\zeta > H'$: Since $\vec{r} = \vec{z} + \lambda \cdot
      \vec{u} + \zeta \cdot \vec{v}$,  it suffices to show that $\lambda \cdot \vec{u} +
      \zeta \cdot \vec{v}$ can be written as $\varrho \cdot \vec{u} +
      \psi \cdot (0,\alpha) + \omega \cdot \vec{v}$ with $\omega \leq
      \norm{P}^{O(1)}$. To this end, we first rewrite $\lambda \cdot
      \vec{u} + \zeta \cdot \vec{v}$ as
      \begin{eqnarray}
        \lambda \cdot \vec{u} + (\zeta-H'+H') \cdot
        \vec{v} & = & \lambda \cdot \vec{u} +
        \left\lfloor\frac{\zeta-H'}{\gamma}\right\rfloor \cdot
        \gamma \cdot \vec{v} + \underbrace{(((\zeta-H') \bmod
          \gamma)+H')}_{\omega} \cdot \vec{v}\nonumber \\
      & \stackrel{(\ref{E Up})}{=} &
        \underbrace{\left(\lambda -
          \left\lfloor\frac{\zeta-H'}{\gamma}\right\rfloor
          \cdot \beta\right)}_{\varrho} \cdot \vec{u} +
        \underbrace{\left(\left\lfloor\frac{\zeta-H'}{\gamma}\right\rfloor\right)}_{\psi}
        \cdot (0,\alpha) + \omega \cdot \vec{v}.\nonumber
      \end{eqnarray}
\noindent
      Since $\psi \geq 0$ and $\omega \leq (B^4+H') \leq
      \norm{P}^{O(1)}$, it remains to prove that $\varrho \geq 0$:
      \begin{eqnarray*}
        \varrho\ =\  
        \lambda -
        \left\lfloor\frac{\zeta-H'}{\gamma}\right\rfloor \cdot
        \beta & =& 
\lambda- \left\lfloor\frac{\zeta-H-2\cdot\gamma}{\gamma}\right\rfloor \cdot
        \beta 
\\
&>& \lambda -
        \frac{\zeta-H-\gamma}{\gamma} \cdot \beta\nonumber\\
      &  \stackrel{(\ref{E Lambda Ineq})}{>}&\frac{(\zeta-H) \cdot
          \beta}{\gamma} - \beta
        -\frac{\zeta-H-\gamma}{\gamma} \cdot \beta\nonumber\\
& = & 0.\nonumber
      \end{eqnarray*}
      Consequently, we set $\vec{c} = \vec{z} + \omega \cdot \vec{v}$,
      $\vec{x} = \vec{u}$ and $\vec{y} = (0, \alpha)$.
    \end{itemize}

    {\em Case 2: $\vec{u}, \vec{v} \not\in \N^2$}. The case
    $\cone_{\Q}(\{\vec{u}, \vec{v}\}) \, \cap \, \N^2= \{ \vec{0}\}$
    is trivial. Hence we assume $\cone_{\Q}(\{\vec{u}, \vec{v}\}) \,
    \cap \, \N^2 \neq \{ \vec{0}\}$ and it is easily seen that this
    implies $\N^2\subseteq \cone_{\Q}(\{\vec{u}, \vec{v}\})$.  Without
    loss of generality we assume that $u_1,v_2 < 0$ and $u_2,v_1 > 0$
    and consequently have $u_1/u_2 > v_1 / v_2$,
    Figure~\ref{fig:case2} illustrates this case. In particular, for
    all $\lambda',\zeta'\in \Z$ we have
    \begin{align}
      \label{eqn:always-non-negative}
      \lambda'\cdot \vec{u} + \zeta'\cdot \vec{v} \in \N^2\qquad \text{ implies
      } \qquad \lambda',\zeta'\in \N\quad .
    \end{align}
\noindent
    Analogously to Case 1 there exist
    $\sigma,\tau,\xi,\alpha,\beta,\gamma \in [1,B^4]$ with
    \begin{eqnarray}
      \label{eqn:scaled-unit-vectors}
      (\sigma,0) = \tau\cdot\vec{u} + \xi\cdot\vec{v} \in L(\vec{b};
      P) \qquad \text{and} \qquad (0,\alpha) = \beta\cdot\vec{u} +
      \gamma\cdot\vec{v} \in L(\vec{b}; P)\quad .\label{xy axis}
    \end{eqnarray}

    Similar to Case 1, it is sufficient to rewrite $(\ell_1,\ell_2)
    \defeq \lambda \cdot \vec{u} + \zeta \cdot \vec{v}$ as $\varrho
    \cdot (\sigma,0) + \psi \cdot (0,\alpha) + \vec{w}'$, where
    $\varrho, \psi \in \N$, $\vec{w}' \in \cone_\N\{\vec{u},
    \vec{v}\}$ and $\norm{\vec{w}'} \leq \norm{P}^{O(1)}$. We observe
    that $\min(\ell_1,\ell_2)\ge -\norm{\vec{z}}$ and
    $\max(\ell_1,\ell_2) \ge 0$ and make a case distinction.
\medskip

\noindent
    {\em Case 2(a): $\ell_1, \ell_2 \geq 0$.} We have $(\ell_1,\ell_2)
    = (h_1+r_1,h_2+r_2)$, where $h_1 =
    \left\lfloor\frac{\ell_1}{\sigma}\right\rfloor \cdot \sigma$, $h_2
    = \left\lfloor\frac{\ell_2}{\alpha}\right\rfloor \cdot \alpha$,
    $r_1 = (\ell_1\text{ mod }\sigma)$ and $r_2 = (\ell_2 \text{ mod
    }\alpha)$. Due to
    \begin{eqnarray*}
      (h_1,h_2) =
      \left\lfloor\frac{\ell_1}{\sigma}\right\rfloor\cdot(\sigma,0) +
      \left\lfloor\frac{\ell_2}{\alpha}\right\rfloor\cdot(0,\alpha)
      \stackrel{(\ref{xy axis})}{=} \underbrace{ \left(
        \left\lfloor\frac{\ell_1}{\sigma}\right\rfloor \cdot \tau +
        \left\lfloor\frac{\ell_2}{\alpha}\right\rfloor \cdot \beta
        \right)}_{\theta} \cdot \vec{u} + \underbrace{ \left(
        \left\lfloor \frac{\ell_1}{\sigma} \right\rfloor \cdot \xi +
        \left\lfloor \frac{\ell_2}{\alpha} \right\rfloor \cdot
        \gamma\right)}_{\mu}\cdot\vec{v},
    \end{eqnarray*}
    we set $\varrho \defeq
    \left\lfloor\frac{\ell_1}{\sigma}\right\rfloor$ and $\psi \defeq
    \left\lfloor\frac{\ell_2}{\alpha}\right\rfloor$. We argue that we
    can take $\vec{w}' \defeq (r_1,r_2)\in\N^2$. Since
    $\norm{\vec{w}'} \leq B^4$ it remains to show that $\vec{w} \in
    \cone_\N(\{\vec{u}, \vec{v}\})$. To see the latter, we have $(r_1,
    r_2) = (\ell_1, \ell_2) - (h_1, h_2) = (\lambda - \theta)\cdot
    \vec{u} + (\zeta - \mu) \cdot \vec{v}$, and $\lambda - \theta,
    \zeta - \mu \in \Z$. But then (\ref{eqn:always-non-negative})
    yields $\lambda - \theta, \zeta - \mu \in \N$, as required.
\medskip

\noindent
    {\em Case 2(b): $\ell_1<0,\ell_2\geq 0$.} First, we show that in
    all linear combinations $\ell_1 = \lambda \cdot u_1 + \zeta \cdot
    v_1$ such that $-\norm{\vec{z}}\le \ell_1 < 0$, $\lambda$ and
    $\zeta$ only differ by a linear factor. Indeed, we have
    \begin{align}
      \label{eqn:linear-factor}
      \ell_1 = \lambda \cdot u_1 + \zeta \cdot v_1\quad \iff\quad
      \lambda = -\frac{v_1}{u_1} \cdot \zeta + \frac{\ell_1}{u_1} \quad\iff\quad
      \zeta  = -\frac{u_1}{v_1}\cdot \lambda + \frac{\ell_1}{v_1}\quad.
    \end{align}
By subtracting and adding $\alpha\cdot k\cdot v_1\cdot u_1$, we get
    \begin{align}
      \label{eqn:x-constant}
   \forall k\in\N:\quad  \ell_1 & = (\lambda - \alpha \cdot k\cdot v_1) \cdot u_1 + (\zeta + 
      \alpha \cdot k\cdot u_1)\cdot v_1.
    \end{align}
    On the other hand, for $k> 0$ we have
    \begin{align}
      \label{eqn:y-drop}
      (\lambda - \alpha \cdot k \cdot v_1) \cdot u_2 + 
      (\zeta + \alpha \cdot k \cdot u_1 )\cdot v_2
      & = \lambda\cdot u_2 + \zeta \cdot v_2 + \alpha \cdot k \cdot 
      (u_1\cdot v_2 - v_1\cdot u_2) \\ & \notag < 
      \lambda \cdot u_2 + \zeta \cdot v_2
    \end{align}
    where the latter inequality follows from $u_1/u_2 > v_1 / v_2$.
    Let us define
$$
k_0\quad\defeq\quad\min\left\{\ \max\{k\in\N: \lambda -\alpha \cdot k\cdot v_1\geq 0\},\ \max\{k\in\N:
\zeta + \alpha \cdot k \cdot u_1\ge 0\}\ \right\}\quad.
$$
Moreover, let $\lambda'\defeq \lambda -\alpha \cdot k_0
    \cdot v_1\geq 0$ and $\zeta'\defeq \zeta + \alpha \cdot k_0 \cdot
    u_1\ge 0$. 
Clearly $\min\{\lambda',\zeta'\}\leq\norm{P}^{O(1)}$ by the choice of $k_0$
and hence $\lambda',\zeta'\leq\norm{P}^{O(1)}$ by (\ref{eqn:linear-factor}).
Moreover, from (\ref{eqn:x-constant}) we have that $\ell_1 =
    \lambda' \cdot u_1 + \zeta'\cdot v_1$.
 Finally, as required, we have
    \begin{eqnarray*}
      \begin{vmatrix} \ell_1\\ \ell_2 \end{vmatrix}
        \quad  &=&\lambda\cdot\vec{u}+\zeta\cdot\vec{v}\\
&\stackrel{(\ref{eqn:y-drop})}{=}&
      \lambda' \cdot \vec{u} + \zeta' \cdot
      \vec{v} + \underbrace{(- k_0\cdot (u_1\cdot v_2 - v_1 \cdot u_2))}_{\geq 0} \cdot
      \begin{vmatrix} 0\\ \alpha \end{vmatrix}\quad .
    \end{eqnarray*}
\medskip
\noindent
    {\em Case 2(c): $\ell_1\geq0,\ell_2<0$.} This case is symmetric to
    Case 2(b) and therefore omitted.

    \begin{figure}[t]
      \begin{center}
        \includegraphics{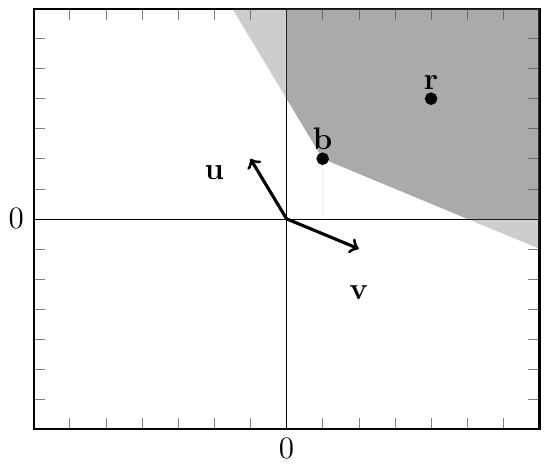}

            



            
      \end{center}

      \caption{Example of Case 2. The filled area corresponds to
        $\vec{b} + \cone_{\Q_{\geq 0}}(\{\vec{u}, \vec{v}\})$
        and the darker filled area corresponds to $\vec{b} +
        \cone_{\Q}(\{\vec{u}, \vec{v}\}) \, \cap\ (\Q_{\geq 0} \times
        \Q_{\geq 0})$. }
      \label{fig:case2}
    \end{figure}
\end{proof}

Let us give an intuitive idea of how we can prove Proposition~\ref{P
  upper right}~(a) by an application of Lemma~\ref{L Dimension
  Two}. Suppose we are given a run starting in $q(u_1,u_2)$ and ending
in $q(v_1,v_2)$ such that $u_1\le v_1$ and $u_2\le v_2$. From
Proposition~\ref{P zreach} we know that the $\Z$-reachability relation
can be captured by a union of linear path schemes. Since we start and
end in the \emph{same} state, any such linear path scheme can
equivalently be viewed as a linear set $L(\vec{b};P)$ such that
$\vec{b}\in P$. An application of Lemma~\ref{L Dimension Two} then
allows us to decompose such a linear set into a semi-linear set whose
period vectors all point into the same $\N^2$ direction. The crucial
point is that any linear set in this semi-linear set can again be
translated back into a linear path scheme with at most two cycles
whose displacements point to $\N^2$. Consequently, any path obtained
from such a linear path scheme does not, informally speaking, drift
away too much, and if $u_1$ and $u_2$ are sufficiently large then
$\N$-reachability and $\Z$-reachability coincide.

In order to make our intuition formal, we introduce some further
additional notation. Interpreting Lemma~\ref{L
  Dimension Two} in terms of linear path schemes allows us to
establish the following lemma.

\begin{lemma}
  \label{lem:zig-zag-free-lps}
  Let $q \in Q$. For every linear path scheme $\rho$ from $q$ to $q$,
  there exists a finite set $R_\rho$ of zigzag-free linear path
  schemes such that
  \begin{enumerate}[(i)]
  \item $\delta(\rho) \subseteq \delta(R_\rho)$,
  \item $|\sigma| \leq (|\rho| + \norm{T})^{O(1)}$ for each
    $\sigma \in R_\rho$, and
  \item each $\sigma \in R_\rho$ has at most two cycles.
  \end{enumerate}
\end{lemma}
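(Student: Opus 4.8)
The plan is to translate Lemma~\ref{L Dimension Two} from the language of linear sets into the language of linear path schemes. The starting observation is that a linear path scheme $\rho = \alpha_0\beta_1^*\alpha_1\cdots\beta_k^*\alpha_k$ that runs from $q$ to $q$ has displacement set $\delta(\rho) = \delta(\alpha_0\cdots\alpha_k) + \cone_\N\{\delta(\beta_1),\dots,\delta(\beta_k)\}$, so if we write $\vec{b}_0 \defeq \delta(\alpha_0\alpha_1\cdots\alpha_k)$ and add to the period set a zero-contribution cycle so that a base vector also occurs as a period — concretely, view $\delta(\rho)$ as contained in the linear set $L(\vec{b};P)$ where $P = \{\delta(\beta_1),\dots,\delta(\beta_k)\} \cup \{\vec{b}\}$ with $\vec{b}$ chosen to make $\vec{b}\in P$ hold (e.g.\ absorb $\vec{b}_0$ into the picture by a suitable translation, or simply note $\delta(\rho) \subseteq L(\vec{b}_0; P \cup \{\vec b_0 - \vec b_0\})$ — the cleanest route is to observe that since the path traverses each $\beta_i$ at least the construction can be set up so some base vector lies in $P$). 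The norm of $P$ is bounded by $|\rho|\cdot\norm{T}$, which is $(|\rho|+\norm{T})^{O(1)}$.

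Next I would apply Lemma~\ref{L Dimension Two} with the quadrant $Z$ ranging over all four quadrants, obtaining for each quadrant a decomposition $L(\vec{b};P)\cap Z = \bigcup_{i\in I} L(\vec{c}_i;P_i)$ with $|P_i|\le 2$, $P_i \subseteq (P\cup L(\vec b;P))\cap Z$, and $\vec c_i$ (together with the periods of $P_i$ that lie in $L(\vec b;P)$) expressible as $\vec b + \cone_{[0,e]}(P)$ for some $e\le\norm{P}^{O(1)}$. Since $\bigcup_Z (L(\vec b;P)\cap Z) = L(\vec b;P) \supseteq \delta(\rho)$, this gives a semi-linear cover of $\delta(\rho)$ by linear sets each with at most two period vectors pointing into a common quadrant. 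Now each such linear set $L(\vec c_i;P_i)$ must be realized back as a zigzag-free linear path scheme $\sigma$ from $q$ to $q$: for each period vector $\vec p \in P_i$, if $\vec p = \delta(\beta_j)$ for one of the original cycles we reuse $\beta_j$; if $\vec p \in L(\vec b;P)$, then by the third bullet of Lemma~\ref{L Dimension Two} it is a $[0,e]$-combination of the $\delta(\beta_j)$, so it is the displacement of a cycle obtained by concatenating each $\beta_j$ at most $e$ times (all these cycles pass through $q$, so they concatenate into a single cycle through $q$). Similarly the base $\vec c_i$ is the displacement of a path from $q$ to $q$ built by concatenating $\alpha_0\cdots\alpha_k$ with at most $e$ copies of each $\beta_j$. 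Since $P_i\subseteq Z$, the resulting $\sigma$ is zigzag-free and has at most two cycles; its length is bounded by $|\alpha_0\cdots\alpha_k| + e\cdot k\cdot\max_j|\beta_j| \le (|\rho|+\norm T)^{O(1)}$, and $\delta(L(\vec c_i;P_i)) = \delta(\sigma)$, so collecting all these $\sigma$ into $R_\rho$ gives (i), (ii), (iii).

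The main obstacle I anticipate is the bookkeeping in the realization step: one has to be careful that when a period vector or the base vector is expressed as a $\cone_{[0,e]}(P)$-combination, the resulting concatenation of cycles/path segments is genuinely a cycle (resp.\ path) through $q$ in the graph, not merely a multiset of edges with the right displacement — this works precisely because every $\beta_j$ is already a cycle based at $q$ (after possibly rotating it), so arbitrary concatenations of the $\beta_j$'s and of $\alpha_0\cdots\alpha_k$ remain well-formed closed walks at $q$. A secondary subtlety is handling the case $k=0$ (a path with no cycles, i.e.\ $P = \{\vec b\}$ after adding the base vector), where the lemma is trivial, and making sure the additive shift $\vec b_0 = \delta(\alpha_0\cdots\alpha_k)$ versus the required condition $\vec b\in P$ is reconciled — this is the reason Lemma~\ref{L Dimension Two} was phrased with the hypothesis $\vec b\in P$, and one sets things up so that the path contributes a period that can be taken as the base. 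Finally, one should double-check that the constant in the exponent absorbs the four choices of quadrant and the $e\le\norm P^{O(1)}$ factor, which it does since all the bounds are polynomial.
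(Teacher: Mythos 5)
Your overall strategy is the same as the paper's: decompose $\delta(\rho)$ quadrant by quadrant via Lemma~\ref{L Dimension Two} and translate each resulting linear set $L(\vec{c}_i;P_i)$ back into a zigzag-free linear path scheme with at most two cycles. However, there is a genuine gap in your realization step. The cycles $\beta_1,\ldots,\beta_k$ of a linear path scheme $\rho = \alpha_0\beta_1^*\alpha_1\cdots\beta_k^*\alpha_k$ from $q$ to $q$ are based at the \emph{intermediate} states reached along the spine $\alpha_0\cdots\alpha_k$, not at $q$; in general none of them passes through $q$, so ``rotating'' does not help, and ``concatenating each $\beta_j$ at most $e$ times'' is not a walk in the graph at all. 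Your parenthetical claim that all these cycles pass through $q$ is false, and it is exactly the point that your own ``main obstacle'' paragraph was supposed to resolve.

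The fix --- and the reason Lemma~\ref{L Dimension Two} is stated with the translate $\vec{b}+\cone_{[0,e]}(P)$ rather than with $\cone_{[0,e]}(P)$ --- is that every vector $\vec{u}\in\{\vec{c}_i\}\cup(P_i\cap L(\vec{b};P))$ is realized as the displacement of a path $\pi_{\vec{u}} = \alpha_0\beta_1^{e_1}\alpha_1\cdots\beta_k^{e_k}\alpha_k$ with all $e_j\le e$, i.e.\ by traversing the whole spine and repeating each $\beta_j$ at its proper base state; this \emph{is} a genuine $q$-to-$q$ path, so $\pi_{\vec{u}}^*$ can be appended as a starred cycle at $q$. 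Likewise, a period $\delta(\beta_j)\in P_i\cap P$ cannot simply be ``reused'' at $q$: the paper inserts $\beta_j^*$ in place inside the spine of $\pi_{\vec{c}_i}$, where $\beta_j$ is actually a cycle. Finally, the hypothesis $\vec{b}\in P$, which you leave vague, is obtained by the concrete trick of replacing $\rho$ with $(\alpha_0\cdots\alpha_k)^*\rho$ (the spine is itself a cycle at $q$, so this is a legal linear path scheme capturing $\rho$ and making $\delta(\alpha_0\cdots\alpha_k)$ one of the period vectors). With these repairs your argument coincides with the paper's proof.
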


\begin{proof}
  Let $\rho = \alpha_0 \beta_1^* \alpha_1 \cdots \beta_k^* \alpha_k$
  be a linear path scheme. Without loss of generality we assume that
  $\delta(\alpha_0 \cdots \alpha_k) \in \{\delta(\beta_i) :
  i\in[1,k]\}$, otherwise we apply the claim to the linear path scheme
  $\rho' \defeq (\alpha_0 \cdots \alpha_k)^* \rho$ which satisfies
  this property and for which we have $\delta(\rho) \subseteq
  \delta(\rho')$. Moreover we assume that $\delta(\beta_i) \not=
  \delta(\beta_j)$ whenever $i \not= j$, since otherwise we can just
  remove $\beta_j$ from $\rho$ which results in a linear path scheme
  with the same displacements as $\rho$. We can
  write $$\delta(\rho)\quad =\quad \bigcup_{Z\ \text{is a quadrant}}
  \delta(\rho) \cap Z\quad.$$ Hence for each quadrant $Z$ it is
  sufficient to construct a set of appropriate zigzag-free linear path
  scheme $R_{\rho,Z}$ such that $\delta(\rho)\cap Z =
  \delta(R_{\rho,Z})$ since we can then just define our set of linear
  path schemes as $R_{\rho} \defeq \bigcup\{R_{\rho,Z}: Z\text{ is a
    quadrant}\}$.  Let $\vec{b} \defeq
  \delta(\alpha_0 \cdots \alpha_k)$ and let $P \defeq
  \{\delta(\beta_i) : i \in [1,k]\}$. By assumption we have $\vec{b}
  \in P$. Note that $\norm{P} \leq \abs{\rho} \cdot \norm{T}$. By
  Lemma~\ref{L Dimension Two} there exists a semi-linear set
  $\bigcup_{i \in I} L(\vec{c}_i; P_i)$ with $\delta(\vec{b}; P) \cap
  Z = \bigcup_{i \in I} L(\vec{c}_i; P_i)$ satisfying for each $i
  \in I$,
  \begin{itemize}
  \item $|P_i| \leq 2$,
  \item $P_i \subseteq (P \cup L(\vec{b}; P)) \cap Z$,
    and
  \item there exists $e\leq\norm{P}^{O(1)}$ such that
    $\{\vec{c}_i\}\cup(P_i\cap
    L(\vec{b};P))\subseteq\vec{b}+\cone_{[0,e]}(P)$.
  \end{itemize}
  Let us fix an arbitrary $i\in I$. By the last item for each $\vec{u} \in
  \{\vec{c}_i\}\cup (P_i \cap L(\vec{b}; P))$ there exists a path
  $\pi_{\vec{u}}$ from $q$ to $q$ of the form $\alpha_0 \beta_1^{e_1}
  \alpha_1 \cdots \beta_k^{e_k} \alpha_k$ for some $0 \leq e_1,
  \ldots, e_k \leq \norm{P}^{O(1)} \leq (|\rho| + \norm{T})^{O(1)}$ 
  with $\vec{u} = \delta(\pi_{\vec{u}})$; thus $|\pi_{\vec{u}}| \leq
  (|\rho| + \norm{T})^{O(1)}$.  Let $\pi_{\vec{c}_i} = \alpha_0
  \beta_1^{e_1} \alpha_1 \cdots \beta_k^{e_k} \alpha_k$ and define the
  linear path scheme $\sigma_i$ to be obtained from $\pi_{\vec{c}_i}$ by inserting
  appropriate cycles $\beta_j^*$ whenever $\delta(\beta_j) \in P_i
  \cap P$. Formally, we define
  \begin{align*}
    \sigma_i \defeq \alpha_0 \beta_1^{e_1} \theta_1 \alpha_1 \cdots
    \beta_k^{e_k} \theta_k \alpha_k, \quad \text{where } \theta_j =
    \begin{cases}
      \beta_j^* & \text{ if } \delta(\beta_j) \in P_i \cap P
      \\ \varepsilon & \text{otherwise}
    \end{cases}
    \quad \text{for every } j \in [1,k]\quad.
  \end{align*}
  Recalling that $P_i\subseteq (P \cup L(\vec{b}; P)) \cap Z$, it is now readily seen that
  \begin{align*}
    \rho_i\ \defeq\ \sigma_i \cdot \prod_{\vec{u} \in P_i \setminus P} \pi_{\vec{u}}^*
  \end{align*}
  is a zigzag-free linear path scheme with at most two cycles whose
  displacements point to $Z$, and satisfying $\delta(\rho_i) =
  L(\vec{c}_i; P_i)$ and $|\rho_i| \leq (|\rho| +
  \norm{T})^{O(1)}$. Finally, we define $R_{\rho,Z} \defeq
  \bigcup_{i \in I} \rho_i$ due to
  \begin{align*}
    \delta(\rho) \cap Z\ =\ L(\vec{b}; P) \cap Z\ =\ \bigcup_{i \in I} L(\vec{c}_i; P_i)\ =\ \bigcup_{i
      \in I} \delta(\rho_i) = \delta(R_{\rho,Z}).
  \end{align*}
\end{proof}

\noindent
We are now fully prepared to give a proof of Proposition~\ref{P upper
  right}.

\begin{proof}[Proof of Proposition~\ref{P upper right}]
Let us fix a 2-VASS $V=(Q,T)$.
\medskip

\noindent
{\em Proof of (a): }
Let $S$ be the finite set of linear path scheme from Proposition~\ref{P zreach}
    such that
    \begin{itemize}
    \item $p(\vec{u}) \xrightarrow{*}_{\Z^d} q(\vec{v})$ if, and only
      if, $p(\vec{u}) \xrightarrow{S}_{\Z^d} q(\vec{v})$,
    \item $|\rho|\leq 2\cdot |Q|\cdot|T|$ for each $\rho \in S$ and
    \item each $\rho\in S$ has at most $|T|$ cycles.
    \end{itemize}
    We apply Lemma~\ref{lem:zig-zag-free-lps} to each $\rho \in S$ and
    define $R \defeq \bigcup_{\rho \in S} R_\rho $. Hence, for each
    $\sigma \in R$ we have $|\sigma| \leq (|T| \cdot |Q| +
    \norm{T})^{O(1)} = (|Q| + \norm{T})^{O(1)}$ by (ii) of
    Lemma~\ref{lem:zig-zag-free-lps}. We set $D$ required in
    Proposition~\ref{P upper right} to $D \defeq \max\{|\sigma| :
    \sigma \in R\} \cdot \norm{T} \leq (|Q| + \norm{T})^{O(1)}$.  The
    monotonicity of zigzag-free linear path schemes now provides the
    key ingredient for proving Proposition~\ref{P upper
      right}~(a). For the rest of the proof let us fix
    $\vec{u},\vec{v}\in[D,\infty)^2$ and some zigzag-free linear path
      scheme $\sigma = \alpha_0 \beta_1^* \alpha_1 \beta_2^* \alpha_2
      \in R$. Suppose $q(\vec{u}) \xrightarrow{\pi}_{\Z^2} q(\vec{v})$
      for some $\pi = \alpha_0 \beta_1^{e_1} \alpha_1 \beta_2^{e_2}
      \alpha_2$, then by definition of $D$ it is clear that
    \begin{align}
      \label{eqn:zig-zag-drop}
      \forall i\in[0,|\pi|]:\qquad \vec{0}\quad \le\quad \vec{u} + \delta(\pi[1,i]) \quad\le\quad \vec{v}
      + \begin{pmatrix} D\\ D
      \end{pmatrix}\quad.
    \end{align} 
    It remains to prove $q(\vec{u}) \xrightarrow{*}_{\N^2} q(\vec{v})$
    if, and only if, $q(\vec{u}) \xrightarrow{\sigma}_{\N^2}
    q(\vec{v})$ for some $\sigma \in R$. The latter follows from the
    following circular sequence of implications and equivalences:
    \begin{eqnarray*}
      q(\vec{u}) \xrightarrow{*}_{\N^2} q(\vec{v}) & \quad
      \Longrightarrow \quad & q(\vec{u}) \xrightarrow{*}_{\Z^2}
      q(\vec{v}) \\
      & \stackrel{\text{Proposition~\ref{P zreach}}}{\Longleftrightarrow} &
      q(\vec{u}) \xrightarrow{\rho}_{\Z^2} q(\vec{v}) \text{ for some
        $\rho \in S$} \\
      & \stackrel{\text{Lemma~\ref{lem:zig-zag-free-lps}(i)}}{\Longrightarrow} & q(\vec{u})
      \xrightarrow{\sigma}_{\Z^2} q(\vec{v}) \text{ for some $\sigma \in
        R_\rho$ for some $\rho \in S$} \\
      & \stackrel{(\ref{eqn:zig-zag-drop})}
                  {\Longrightarrow} & q(\vec{u}) \xrightarrow{\sigma}_{\N^2}
                  q(\vec{v}) \text{ for some $\sigma \in R$}\\
                  & \Longrightarrow & q(\vec{u}) \xrightarrow{*}_{\N^2} q(\vec{v})
    \end{eqnarray*}
\ \\ 
\medskip
\noindent
{\em Proof of (b): }
 Suppose that $p(\vec{u}) \xrightarrow{\pi}_\O q(\vec{v})$.
    Then $\pi$ can be factorized as $\pi = \alpha_0 \beta_1 \alpha_1
    \cdots \beta_k \alpha_k$ such that
    $$ p(\vec{u}) \xrightarrow{\alpha_0}_\O q_1(\vec{u}_1)
    \xrightarrow{\beta_1}_\O q_1(\vec{u}_1') \xrightarrow{\alpha_1}_\O
    q_2(\vec{u}_2)\ \cdots\ q_k(\vec{u}_k) \xrightarrow{\beta_k}_\O
    q_k(\vec{u}_k') \xrightarrow{\alpha_k}_\O q(\vec{v})
    $$ 
    where $|\alpha_0|, |\alpha_1|, \dots, |\alpha_k| \leq |Q|$, each
    $\beta_i$ is a cycle from $q_i$ to $q_i$ for some $q_i \in Q$, and
    $k \leq |Q|$. Since $\vec{u}_i, \vec{u}_i'\in \O$ for all $i \in
    [1,k]$, by (a) we have $ q_i(\vec{u}_i)
    \xrightarrow{\rho_i}_{\N^2} q_i(\vec{u}_i') $ for some linear path
    scheme $\rho_i \in R$. Consequently, we define $X$ as
    \begin{multline*}
      X \defeq \{\alpha_0 \rho_1 \alpha_1 \cdots \rho_k \alpha_k\text{ linear path scheme} :  
k \leq |Q|,  
\alpha_i\in T^*, |\alpha_i| \leq |Q|,  \rho_i \in R)\}
    \end{multline*}
    Let $\rho \in X$, then we have $|\rho| \leq |Q|^2 + |Q| \cdot (|Q| + \norm{T})^{O(1)} = (|Q|
    + \norm{T})^{O(1)}$, and $\rho$ has at most $2\cdot|Q|$
    cycles.

\end{proof}

\subsection{Reachability in 2-VASS with One Bounded Component}
\label{sec:bounded-component}

 The purpose of this section is to establish the following result on
reachability between configurations for which 
there exists a run 
on which for all configurations
 {\em at most one} of the two counter
values exceeds a certain bound.
 We refer to the
bottom picture of Figure \ref{fig:types:paths}.

\begin{proposition}{\label{P Belts}}
  Let $V = (Q,T)$ be a 2-VASS, $D \in \N$ and $\L = ([0,D] \times \N)
  \cup (\N \times [0,D])$. There exists a finite set $Y_\L$ of linear
  path schemes such that
  \begin{itemize}
  \item $p(\vec{u}) \xrightarrow{*}_\L q(\vec{v})$ implies $p(\vec{u})
    \xrightarrow{Y_\L}_{\N^2} q(\vec{v})$,
  \item $|\rho| \leq (|Q| + \norm{T} + D)^{O(1)}$ for every $\rho \in
    Y_\L$; and
  \item each $\rho \in Y_\L$ has at most two cycles.
  \end{itemize}
\end{proposition}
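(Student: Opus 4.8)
The plan is to exploit that a run staying inside the L-shaped band $\L$ keeps at least one of its two counters in $[0,D]$ at every step, so that --- apart from the moments at which the ``active'', i.e.\ unbounded, coordinate changes from the first to the second component or conversely --- such a run is a run of a \emph{one-counter automaton}, that is, a $1$-VASS additionally permitted to test its counter for zero. Accordingly, the first ingredient I would establish is a one-counter analogue of the proposition: for every one-counter automaton $W = (P, T_W)$, the relation $\xrightarrow{*}_\N$ is captured by a finite set of linear path schemes each of length $(|P| + \norm{T_W})^{O(1)}$ and with at most two cycles. This is where the normal form of Valiant and Paterson~\cite{VP75} for shortest runs in one-counter automata enters: it confines the counter profile of a shortest run to polynomially many pieces, the ones at large counter values being iterations of a single cycle, so that after merging and pumping the run is described by a linear path scheme of this form. (Runs of a plain $1$-VASS are the special case without zero-tests.)

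Next I would encode every $\L$-run of $V$ as an $\N$-run of a one-counter automaton $W$ whose control states are $Q \times \{1,2\} \times [0,D]$ together with finitely many auxiliary states: a control state $(q, j, m)$ means that $V$ is in state $q$, that coordinate $3-j$ is the bounded one and currently equals $m$, and that coordinate $j$ --- the active one --- currently equals the value of $W$'s counter. A transition $t = (p, \vec z, q)$ of $V$ whose application keeps coordinate $3-j$ inside $[0,D]$ is simulated by a single transition of $W$ that updates the control state and adds $z_j$ to the counter. A transition $t$ whose application would push coordinate $3-j$ out of $[0,D]$ necessarily forces coordinate $j$ into $[0,D]$, and is simulated by a fixed straight-line \emph{switch gadget} of length $O(D + \norm{T})$: it first adds $z_j$ to the counter (with forced decrements checking the result is non-negative), then uses a zero-test to confirm that the result lies in $[0,D]$ while copying it into the control state, and finally rebuilds the counter as $m + z_{3-j}$, ending in a control state of the opposite mode with the two roles exchanged. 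One checks that the gadget is executable exactly when $t$ can legally be applied with this change of the bounded coordinate; hence every $\L$-run from $p(\vec u)$ to $q(\vec v)$ lifts to an $\N$-run of $W$ between corresponding configurations, and $W$ has size $(|Q| + \norm{T} + D)^{O(1)}$ with counter updates of magnitude at most $\norm{T}$.

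Finally I would apply the one-counter lemma to $W$, obtaining for the endpoints of the lifted run a linear path scheme $\rho_W$ of polynomial length with at most two cycles, and project $\rho_W$ back onto $V$ by replacing each transition of $W$ with the (possibly empty) sequence of transitions of $V$ it encodes, every complete occurrence of a switch gadget collapsing to the single $V$-transition it simulates. Since the switch gadget is cycle-free and contains no control state lying on a cycle, the cycles of $\rho_W$ are never part of a gadget and are traversed integrally along any run; therefore the projection is a linear path scheme over $T$ of length $(|Q| + \norm{T} + D)^{O(1)}$ with at most two cycles, and any $\N$-run of $W$ along $\rho_W$ projects to an $\N^2$-run of $V$ with the same endpoints, because every $W$-configuration along it decodes to a configuration of $V$ lying in $\N^2$. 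Taking $Y_\L$ to be the finite set of all such projections establishes the proposition. The main obstacle is the design and correctness of $W$ and of the switch gadget: one must guarantee that no spurious $W$-run is introduced --- equivalently, that the gadget is executable precisely when the underlying switch step of $V$ is legal in $\N^2$ --- and that the projection creates no new cycle, so that the bound of two cycles, which is the delicate ``number of cycles'' bookkeeping throughout this proof, survives end to end.
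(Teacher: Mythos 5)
Your architecture is sound and it lands on the same one-dimensional machinery as the paper, but the decomposition is genuinely different. The paper never introduces zero tests: it first proves (Lemma~\ref{C Belt1}) that reachability inside a single band $\N\times[0,E]$ or $[0,E]\times\N$ reduces to a $1$-VASS and is captured by one-cycle linear path schemes, and then factorizes a \emph{minimal} $\L$-run at its visits to the corner $H=[0,E]\times[0,E]$ (with $E=D+\norm{T}$). Minimality bounds the number of factors by $|H|$, each factor lives in one band, and---because all intermediate factors start and end in the bounded set $H$---their single cycles can be instantiated with concrete, polynomially bounded exponents; only the first and last factors keep a Kleene star, which is exactly where the bound of two cycles comes from. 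You instead package the whole L-shaped band as one one-counter automaton with zero tests and push the switching logic into a gadget, so your two-cycle bound must come from an LPS characterization of reachability in one-counter automata. Your route buys a more uniform statement (a single automaton, a single lemma); the paper's route buys the ability to stay entirely within $1$-VASS, for which Valiant--Paterson applies exactly as quoted.

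Two points in your write-up need repair. First, the ``one-counter analogue'' (polynomial-length LPS with at most two cycles for one-counter automata with zero tests) does not follow directly from Lemma~\ref{L VP75}, which is stated for $1$-VASS \emph{without} zero tests; the claim is true, but it is a real lemma carrying the whole weight of the two-cycle bound, and you must prove it---for instance by splitting a minimal run at its counter-zero configurations (at most $|P|$ of them by minimality), replacing each interior zero-to-zero segment by a concrete path of polynomial length via the analogue of Lemma~\ref{prop:1vass:flat:pmone}, and applying Valiant--Paterson only to the first and last segments, which alone contribute one cycle each. Second, your justification that pumped cycles avoid switch gadgets---``the gadget contains no control state lying on a cycle''---is false as stated: a cycle of $W$'s control graph can perfectly well traverse a gadget end to end and return to a gadget-internal state. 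What saves you is that a pumped cycle produced by the decomposition above lives strictly between zero-visits and therefore contains no zero test, hence no complete gadget traversal; combined with the rotation trick the paper uses in Lemma~\ref{prop:1vass:flat} for cycles that begin at an auxiliary state of the unary expansion, the projection back to $V$ then remains a linear path scheme with at most two cycles. With those two repairs the argument goes through.
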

In its essence, restricting the set of admissible values of one of the two counters
of a 2-VASS to $[0,D)$ as in Proposition~\ref{P Belts} gives rise to a
  1-VASS.
  This observation enables us to resort to techniques and
  results developed for 1-VASS (in fact even for one-counter automata),
  respectively. In particular, subsequently we make use of the
  following lemma established by Valiant and Paterson.  It shows that
  reachability in a 1-VASS is captured by a finite union of linear path schemes each having
  at most one cycle.

\begin{lemma}[Lemma 2 in~\cite{VP75}]\label{L VP75}
  Let $V=(Q,T)$ be a 1-VASS with unary updates (i.e. $T\subseteq Q\times\{-1,0,1\}\times Q$) and let $p(u)
  \xrightarrow{*}_\N q(v)$ for some configurations $p(u)$ and $q(u)$
  such that $|u-v| \geq |Q|+|Q|^2$. There exist $\alpha, \beta, \gamma
  \in T^*$ and $\pi \in T^*$ such that $p(u) \xrightarrow{\pi}_{\N} p(v)$ and $\pi$ has the the
  following properties,
  \begin{itemize}
  \item $\pi = \alpha\beta^i\gamma$ for some $i>0$,
  \item $\alpha\beta^*\gamma$ is a linear path scheme with one cycle, and
  \item $|\alpha\gamma| < |Q|^2$ and $\beta$ is a cycle with $|\beta|
    \leq |Q|$ and $|\delta(\beta)| \in [1,|Q|]$.
  \end{itemize}
\end{lemma}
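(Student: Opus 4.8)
The plan is to prove the lemma directly, by reshaping an \emph{arbitrary} witnessing run into the stated normal form rather than by consulting~\cite{VP75}. First I would dispose of the sign by symmetry: since $|u-v|\ge|Q|+|Q|^2>0$ we have $u\neq v$, and the case $v<u$ reduces to $v>u$ by passing to the reversed $1$-VASS, obtained by reversing each transition and negating its update (which keeps updates in $\{-1,0,1\}$ and preserves $|Q|$). Reversing a run preserves the multiset of counter values it visits, hence preserves $\N$-validity, so a valid run from $p(u)$ to $q(v)$ becomes a valid run from $q(v)$ to $p(u)$ in the reversed system. I therefore assume $v>u$ and write $d\defeq v-u\ge|Q|+|Q|^2$. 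Now fix any run $p(u)\xrightarrow{w}_{\N}q(v)$ and decompose its underlying path $w$ in the standard way: repeatedly excise simple cycles until a cycle-free path $\pi_0$ from $p$ to $q$ remains, each excised cycle keeping its base state on $\pi_0$. Then $|\pi_0|\le|Q|-1$, every excised simple cycle has length at most $|Q|$ and displacement of absolute value at most $|Q|$, and displacements add: $\delta(\pi_0)+\sum_C\delta(C)=d$. Since $\delta(\pi_0)\ge-(|Q|-1)$, the excised cycles contribute $\sum_C\delta(C)\ge d-(|Q|-1)>|Q|^2\ge1$, so some excised simple cycle $\beta$ has $\delta(\beta)\ge1$; being simple it satisfies $|\beta|\le|Q|$ and $\delta(\beta)\le|\beta|\le|Q|$, giving exactly $|\beta|\le|Q|$ and $|\delta(\beta)|\in[1,|Q|]$. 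Let $s$ be its base state, which lies on $\pi_0=\alpha_0\gamma_0$ with $\alpha_0\colon p\to s$ and $\gamma_0\colon s\to q$.

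Write $\Delta\defeq\delta(\beta)$. I would realize the large climb by iterating $\beta$, letting short fixed parts carry the residue. The crucial point for hitting $v$ \emph{exactly} is that $q(v)$ is \emph{assumed} reachable from $p(u)$, so the only possible obstruction, namely the residue of $d$ modulo $\Delta$, is automatically attainable: the excised cycles other than $\beta$ already have total displacement $\equiv d-\delta(\pi_0)\pmod{\Delta}$. A pigeonhole argument trims this collection to at most $\Delta-1\le|Q|-1$ simple cycles of the same residue (whenever more than $\Delta$ remain, two partial sums agree modulo $\Delta$ and the intervening block, of displacement $\equiv0$, is dropped). Inserting the trimmed auxiliary cycles at their base states on $\pi_0$ distributes them into $\alpha$ and $\gamma$; since each has length at most $|Q|$ and there are at most $|Q|-1$ of them, together with $|\pi_0|\le|Q|-1$ this gives $|\alpha\gamma|\le|Q|^2-1<|Q|^2$. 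By construction $\delta(\alpha\gamma)\equiv d\pmod{\Delta}$, so $i\defeq(d-\delta(\alpha\gamma))/\Delta$ is an integer, and the magnitude bounds give $d-\delta(\alpha\gamma)\ge|Q|+1>0$, whence $i\ge1$. Thus $\pi\defeq\alpha\beta^i\gamma$ has $\delta(\pi)=d$ and $\alpha\beta^*\gamma$ is a linear path scheme with one cycle of the required shape.

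It remains to guarantee that $\pi$ is a genuine $\N$-run, and this is where I expect the real difficulty to lie. Because $\Delta\ge1$, the reservoir built by $\beta^i$ makes the counter climb and stay high: inside $\gamma$ the value is at least $v-|\gamma|\ge v-|Q|^2\ge0$, and across $\beta^i$ the per-copy dip is at most $|\beta|\le|Q|$ while the net gain per copy is $\Delta\ge1$, so the global minimum of the counter is attained within the prefix $\alpha$ and the first copy of $\beta$. The delicate situation is thus when the starting value $u$ is small, so that a short $\alpha$ could drive the counter below $0$. I would resolve this by \emph{not} choosing $\alpha$ freely but extracting it, already valid, from the witnessing run $w$: the prefix of $w$ up to the first visit of $s$ is a valid $\N$-run $p(u)\to s$, and pruning its simple cycles of non-positive displacement only raises later counter values (preserving validity) while shortening it, after which one bounds its length and ensures that the value reached at $s$ exceeds the dip of $\beta$; the residue-fixing cycles are then attached so as not to violate these bounds. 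The reversed instance handles the symmetric roles of the two endpoints.

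The main obstacle is exactly this simultaneous fulfilment of three constraints---hitting the exact target value, staying non-negative throughout, and keeping the fixed parts $\alpha,\gamma$ short---while using a \emph{single} repeated cycle $\beta$. This is precisely what separates the $\N$-statement from the easy $\Z$-analogue, in which non-negativity is vacuous and the decomposition of the first paragraph already delivers the conclusion; here the reachability hypothesis $p(u)\xrightarrow{*}_{\N}q(v)$ must be exploited twice, once to certify the attainable residue and once to supply a valid short approach to $s$.
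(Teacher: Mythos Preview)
The paper does not supply its own proof of this lemma; it is quoted from~\cite{VP75} and used as a black box. So there is no in-paper argument to compare your proposal against, and it must stand on its own.

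Several ingredients of your plan are correct: the existence of a simple cycle $\beta$ with $\delta(\beta)\in[1,|Q|]$ does follow from any simple-cycle decomposition of a witnessing run together with $d\ge|Q|+|Q|^2$, and the pigeonhole reduction of the auxiliary cycles to at most $\Delta-1$ of them while preserving their total residue modulo~$\Delta$ is fine. Two steps, however, do not go through as written.

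First, the assertion that under iterated excision of simple cycles ``each excised cycle keep[s] its base state on $\pi_0$'' is false in general. Take the path $p\to a\to b\to c\to b\to a\to q$ over six distinct edges: the only cycle-free subpath from $p$ to $q$ using these edges is $p\to a\to q$, yet any decomposition of the remaining edges into simple cycles must produce $b\to c\to b$, which is based at $b\notin\pi_0$. Consequently you cannot in general ``insert the trimmed auxiliary cycles at their base states on $\pi_0$'', and you cannot even be sure that the base state $s$ of your chosen $\beta$ lies on $\pi_0$. The construction of $\alpha$ and $\gamma$ with $|\alpha\gamma|<|Q|^2$ therefore breaks down.

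Second, even granting the first point, your treatment of $\N$-validity---which you correctly identify as the crux---is only a sketch, and its two repairs interfere with each other. Taking $\alpha$ as the prefix of $w$ up to the first visit of $s$ and then pruning simple cycles of non-positive displacement does \emph{not} bound $|\alpha|$: arbitrarily many positive-displacement cycles may remain. Any further shortening of $\alpha$ changes $\delta(\alpha\gamma)\bmod\Delta$, so the residue-fixing cycles must be re-chosen; but re-inserting those cycles can again lower the counter and destroy validity. You have isolated exactly the right tension (short fixed parts versus non-negativity versus hitting the exact target), but the proposal does not resolve it. The argument of~\cite{VP75} avoids the decompose-and-reassemble paradigm altogether and instead exploits the unary updates to select distinguished positions inside the given run, so that $\alpha$ and $\gamma$ arise as contiguous pieces of $w$ and are automatically $\N$-valid; the length bound then comes from the choice of positions rather than from a separate shortening step that must be reconciled with validity after the fact.
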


The following lemma states that in a 1-VASS with unary updates
 between any two reachable configurations
with absolute counter difference $D$ there is a run witnessing their
reachability that has length at most $|Q|^{O(1)}+|Q|\cdot D$. It is
obtained as an easy consequence of Lemma~\ref{L VP75}.

\begin{lemma}\label{prop:1vass:flat:pmone}
  Let $V=(Q,T)$ be a 1-VASS with unary updates, i.e. $T\subseteq
  Q\times\{-1,0,1\}\times Q$.  Let $u,v\in \N$ and $D\defeq
  \abs{v-u}$. If $p(u) \xrightarrow{*}_{\N} q(v)$ then there is some
  run $p(u) \xrightarrow{\pi}_{\N} q(v)$ with $|\pi| \leq
  |Q|^{O(1)}+|Q|\cdot D$.
\end{lemma}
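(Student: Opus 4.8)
The plan is to reduce to Lemma~\ref{L VP75} by induction on $D = |v-u|$, peeling off one ``large jump'' at a time. If $D < |Q| + |Q|^2$, then by a standard pumping argument a shortest run from $p(u)$ to $q(v)$ cannot repeat a configuration, so it has length at most (number of distinct configurations reachable along a shortest run), which is bounded by $|Q| \cdot (\text{max counter value along the run})$; since updates are unary and the run is cycle-free in configurations, the counter value stays below roughly $u + |Q|\cdot|Q|^2$, giving a run of length $|Q|^{O(1)} + |Q|\cdot D = |Q|^{O(1)}$. (Alternatively one can cite the well-known fact that reachability in a 1-VASS with unary updates and bounded counter difference has short witnesses; the precise polynomial is irrelevant since it is absorbed in $|Q|^{O(1)}$.)

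For the inductive step, suppose $D \geq |Q| + |Q|^2$ and $p(u) \xrightarrow{*}_{\N} q(v)$. Apply Lemma~\ref{L VP75} to obtain a run $p(u) \xrightarrow{\alpha\beta^i\gamma}_{\N} p(v')$ for a suitable intermediate target; more precisely I would use Lemma~\ref{L VP75} to reach a configuration $q(v'')$ with $|v'' - u|$ still large but where the witnessing run has the shape $\alpha\beta^i\gamma$ with $|\alpha\gamma| < |Q|^2$, $|\beta| \leq |Q|$ and $|\delta(\beta)| \in [1,|Q|]$. The point is that $\beta$ is a cycle at a fixed state with nonzero displacement of absolute value between $1$ and $|Q|$, so iterating it $i$ times covers a counter distance $i\cdot|\delta(\beta)|$ while using only $i\cdot|\beta| \leq i\cdot|Q|$ transitions, hence ``spends'' at most $|Q|$ transitions per unit of counter distance. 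The residual part $\alpha\gamma$ has length $< |Q|^2$. Summing: the prefix of the run that realizes all but a bounded amount of the displacement has length at most $|Q|^2 + |Q|\cdot D$, and what remains is a reachability instance with counter difference $O(|Q|^2)$, handled by the base case with a run of length $|Q|^{O(1)}$. Concatenating yields $|\pi| \leq |Q|^{O(1)} + |Q|\cdot D$.

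The one point requiring care is that Lemma~\ref{L VP75} is stated for runs from $p(u)$ to $p(v)$ with the \emph{same} control state, whereas here the source and target states $p,q$ may differ. I would handle this by first moving from $p(u)$ to $q$ using a short cycle-free path of length $< |Q|$ (whose effect on the counter is bounded by $|Q|$, again absorbed into the error term), and only then invoking Lemma~\ref{L VP75} between two configurations with state $q$; a symmetric adjustment applies at the target end. The main obstacle, then, is purely bookkeeping: ensuring that each of the constantly-many ``short detours'' one inserts to fix up control states and to absorb the $O(|\delta(\beta)|)$ rounding error only contributes $|Q|^{O(1)}$ to the total length, so that the final bound remains $|Q|^{O(1)} + |Q|\cdot D$ rather than, say, picking up an extra multiplicative factor. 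Since each detour is cycle-free and unary, this is routine once set up carefully.
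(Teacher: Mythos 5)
Your main case ($D \geq |Q|+|Q|^2$) is essentially the paper's argument: apply Lemma~\ref{L VP75}, observe that $|\delta(\beta)|\geq 1$ forces $i\leq D+|\alpha\gamma|$, and conclude $|\alpha\beta^i\gamma|\leq |Q|^2+|Q|\cdot(D+|Q|^2)=|Q|^{O(1)}+|Q|\cdot D$ --- no intermediate target, no residual instance, and no induction are needed. Your worry that Lemma~\ref{L VP75} only applies when source and target states coincide rests on a typo in its statement (the conclusion ``$p(u)\xrightarrow{\pi}_{\N} p(v)$'' should read ``$q(v)$''; the hypothesis already involves both $p$ and $q$, and Valiant--Paterson's lemma places no such restriction). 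Your proposed workaround is in any case not sound as stated: a short cycle-free path from $p$ to $q$ in the control graph need not be executable from counter value $u$ without going negative, and need not occur as a prefix of any witnessing run.

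The genuine gap is in your base case $D<|Q|+|Q|^2$. You argue that a shortest run repeats no configuration and then assert that ``since updates are unary and the run is cycle-free in configurations, the counter value stays below roughly $u+|Q|\cdot|Q|^2$.'' Non-repetition of configurations does not bound the counter excursion: already with four states a run can climb to an arbitrary height $M$ and descend again without ever repeating a configuration (use two states alternating on the ascent and two disjoint ones on the descent), and the counting inequality $2(M-\max(u,v))+1\leq |Q|\cdot(M+1)$ places no constraint on $M$ once $|Q|\geq 2$. What is true is that a \emph{shortest} run stays within $|Q|^{O(1)}$ of the interval $[\min(u,v),\max(u,v)]$, but this requires a hill-/valley-cutting argument in the style of Valiant--Paterson, not mere non-repetition; and citing the claim as ``well known'' begs the question, since it is exactly the content of this case. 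The paper sidesteps the issue with a dichotomy: either every configuration $r(w)$ of a minimal run satisfies $|w-v|<2(|Q|+|Q|^2)$, in which case non-repetition \emph{inside this window} of only $|Q|^{O(1)}$ many configurations does bound the length, or some intermediate configuration has $|w-v|=2(|Q|+|Q|^2)$, in which case the run splits there into two pieces each with counter difference at least $|Q|+|Q|^2$ and at most $3(|Q|+|Q|^2)$, to which the large-$D$ case applies. You should adopt one of these two repairs; as written, the base case does not go through.
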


\begin{proof}
  We first consider the case when $D \geq |Q|+|Q|^2$. By Lemma~\ref{L
    VP75}, we have $p(u) \xrightarrow{\alpha\beta^i\gamma}_{\N} q(v)$
  for some $i\ge 0$, where $\alpha\beta^*\gamma$ is a linear path scheme,
  $|\alpha\gamma| < |Q|^2$ and $\beta$ is a cycle with $|\beta| \leq
  |Q|$ and $|\delta(\beta)| \in [1,|Q|]$. Since $|\delta(\beta)| \in
  [1,|Q|]$ we have $i \leq D+|\alpha\gamma|$ and hence
  $|\alpha\beta^i\gamma| < |Q|^2+|Q|\cdot i\leq
  |Q|^{O(1)}+|Q|\cdot D$.

  We now turn to the case in which $D=|u-v|< |Q|+|Q|^2$. By the pigeonhole
  principle, for any run from $p(u)$ to $q(v)$ of minimal length either
  \begin{enumerate}[(i)]
  \item every configuration $r(w)$ on this minimal run satisfies
    $|w-v| < 2\cdot(|Q|+|Q|^2)$, or 
  \item there exists an intermediate configuration $r(w)$ on this minimal run with $|w-v|
    = 2\cdot(|Q|+|Q|^2)$.
    \end{enumerate}
  Clearly, any run of the form (i) is of length strictly less than
  $4\cdot(|Q|+|Q|^2)\cdot|Q|$. Otherwise, for any minimal run $\pi$
  from $p(u)$ to $q(v)$ of the form (ii) there is some configuration
  $r(w)$ along this path with $|w-v| = 2\cdot(|Q|+|Q|^2)$. Note that
  we have $|w-v| \leq |w-u| + |u-v|$ by the triangle inequality. This
  allows to conclude $|Q|+|Q|^2 \leq |w-u| \leq 3\cdot(|Q|+|Q|^2)$ due
  to
  \begin{align*}
    |Q|+|Q|^2\ \leq\  |w-v|-|u-v|\ \leq\ |w-u|\ \leq\ |w-v|+|u-v|\ \leq\
    3\cdot(|Q|+|Q|^2).
  \end{align*} 
  Summarizing, we have $|Q|+|Q|^2\leq |w-u|\leq 3\cdot (|Q|+|Q|^2)$ and 
$|w-v|=2\cdot(|Q|+|Q|^2)$.
Thus, $|\pi|$ is at most the length of two runs each of which has a
counter difference of at least $|Q|+|Q|^2$, namely 
the length of a minimal run from $p(u)$ to $r(w)$ plus the
 length of a minimal run from $r(w)$ to $q(v)$:
  \begin{align*}
    |\pi|\quad \leq\quad (|Q|^2+|Q| \cdot 3\cdot (|Q|+|Q|^2)) + (|Q|^2 +
    |Q|\cdot 2\cdot(|Q|+|Q|^2)) \quad\leq\quad |Q|^{O(1)}
  \end{align*}
\end{proof}

We now combine the Lemmas~\ref{L VP75} and~\ref{prop:1vass:flat:pmone}
in order to show that the reachability relation of a 1-VASS $V=(Q,T)$
(with binary updates) can be captured by a union of linear path
schemes that each have at most one cycle and length polynomially
bounded in $|Q|+\norm{T}$.

\begin{lemma}\label{prop:1vass:flat}
  Let $V=(Q,T)$ be a 1-VASS. There exists a finite set $Y$ of linear path schemes such
  that
  \begin{enumerate}[(i)]
  \item $p(u) \xrightarrow{*}_{\N} q(v)$ if, and only if, $p(u)
    \xrightarrow{Y}_{\N} q(v)$,
  \item $|\rho| \leq (|Q| + \norm{T})^{O(1)}$ for each $\rho \in Y$, and
  \item each $\rho \in Y$ has at most one cycle.
  \end{enumerate}
\end{lemma}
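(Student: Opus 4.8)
The plan is to reduce the binary-update 1-VASS $V=(Q,T)$ to an auxiliary unary-update 1-VASS $V'=(Q',T')$ by subdividing each transition $(p,z,q)\in T$ into a path of $|z|$ unary edges through $|z|-1$ fresh intermediate states, and then to invoke Lemma~\ref{L VP75} and Lemma~\ref{prop:1vass:flat:pmone} on $V'$. Note that $|Q'| \le |Q| + |T|\cdot\norm{T}$, which is only \emph{pseudo}-polynomial in $|V|$; this is the crux of the difficulty, since a naive application of Lemma~\ref{prop:1vass:flat:pmone} to $V'$ would give runs of length polynomial in $\norm{T}$ rather than in $\log\norm{T}$, and hence linear path schemes that are exponentially long. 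So the subdivision trick alone is not enough, and the main obstacle is to control the \emph{length} of the linear path schemes.

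The key idea to overcome this is to distinguish, as in the proof of Lemma~\ref{prop:1vass:flat:pmone}, between runs that stay in a bounded band around the start/target counter value and runs that make a large counter excursion. First I would fix two reachable configurations $p(u)$ and $q(v)$ of $V$ and consider a shortest witnessing run $\pi$. If every configuration on $\pi$ has counter value within, say, $|Q|\cdot\norm{T}$ of $\max(u,v)$ — more precisely, within a window of polynomial width in $|Q|+\norm{T}$ — then $\pi$ only ever visits configurations $r(w)$ with $w$ bounded by a polynomial in $|Q|+\norm{T}+\max(u,v)$, but since $\pi$ is shortest it cannot repeat a configuration, so $|\pi| \le |Q|\cdot(\text{window width})$, which is $(|Q|+\norm{T})^{O(1)}$ times a factor depending on $\max(u,v)$. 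The careful bookkeeping here is to argue that in fact the window can be taken to \emph{not} depend on $u,v$: a shortest run that ever exceeds $\max(u,v)$ by more than the window must contain a ``peak'' sub-run whose counter difference is large, and one applies Lemma~\ref{L VP75} to each maximal such peak portion (and to the portions on either side) to replace it by a single-cycle linear path scheme whose expansion has length polynomial in $|Q|+\norm{T}$ and whose cycle has displacement of absolute value at most $\norm{T}$.

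Concretely, the construction of $Y$ proceeds as follows. For each ordered pair of states $(p,q)$ and each way of factoring a witnessing run into at most a constant number of maximal ``monotone excursions'' separated by short (length $<|Q|^2$, or bounded by $(|Q|+\norm{T})^{O(1)}$ in the binary case) connecting segments, I would record one linear path scheme $\alpha_0\beta_1^*\alpha_1\cdots$; by Lemma~\ref{L VP75} applied to the subdivided automaton $V'$ and then re-collapsing the subdivision edges, each cycle $\beta_i$ is an actual cycle of $V$ of length at most $(|Q|+\norm{T})^{O(1)}$ with $|\delta(\beta_i)|\le\norm{T}$, and each $\alpha_i$ has length $(|Q|+\norm{T})^{O(1)}$. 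One then argues that at most one cycle per linear path scheme is needed for a \emph{single} pair $(p,q)$: since we are in dimension one and the run is monotone between turning points, a shortest run has at most one ``long'' pumpable cycle, the remaining counter movement being absorbable into the short prefix/suffix; this gives item~(iii). Item~(ii) follows from the length bounds just described, and item~(i): the forward direction ($\Rightarrow$) is the content of the decomposition of a shortest run just sketched, while the backward direction ($\Leftarrow$) is immediate since every $\rho\in Y$ is, by construction, a linear path scheme over $V$ and hence every $\N$-run along $\rho$ is an $\N$-run of $V$. The step I expect to be the main obstacle is making the ``window does not depend on $u,v$'' argument fully rigorous while keeping the \emph{number} of cycles down to one — this is precisely the place where the pigeonhole case analysis in the proof of Lemma~\ref{prop:1vass:flat:pmone} has to be iterated and glued, and where one must be careful that gluing two single-cycle linear path schemes does not secretly require two cycles in the combined scheme.
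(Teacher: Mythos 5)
There is a genuine gap, and it stems from a misreading of the bound you are asked to prove. Item~(ii) of the lemma requires $|\rho|\leq(|Q|+\norm{T})^{O(1)}$, i.e.\ length \emph{polynomial in $|Q|+\norm{T}$} --- a pseudo-polynomial bound that is allowed to be exponential in the binary size $|V|$ (the paper says so explicitly when stating Theorem~\ref{main1}: the schemes are ``at most exponential in $|V|$''). You instead set yourself the goal of a bound polynomial in $\log\norm{T}$, declare the direct approach insufficient because it ``would give runs of length polynomial in $\norm{T}$'', and then build an elaborate workaround. But length polynomial in $\norm{T}$ is exactly what is required: after subdividing each transition into $|z|+O(1)$ unary edges one gets $|Q'|\leq(|Q|+\norm{T})^{O(1)}$ states, and applying Lemma~\ref{prop:1vass:flat:pmone} (when $|u-v|$ is small, yielding a plain path with zero cycles) or Lemma~\ref{L VP75} (when $|u-v|$ is large, yielding $\alpha'(\beta')^*\gamma'$ with a single cycle and $|\alpha'\gamma'|+|\beta'|\leq(|Q'|)^{O(1)}$) already meets all three requirements. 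This is precisely the paper's proof. The only genuine technical wrinkle --- which your plan does not address --- is that the Valiant--Paterson cycle $\beta'$ may be anchored at an auxiliary subdivision state $(t,i)\notin Q$, so it cannot be pulled back through the subdivision homomorphism directly; one must first rotate it, rewriting $\alpha'(\beta')^*\gamma'$ as $\alpha_1'(\beta_2'\beta_1')^*\beta_2'\gamma'$ so that the cycle is anchored at a genuine state of $Q$.

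Beyond the misdiagnosis, the replacement argument you sketch is itself problematic. Decomposing a shortest run into several ``maximal monotone excursions'' and gluing one single-cycle scheme per excursion naturally produces as many cycles as there are excursions, not one; you flag this yourself at the end (``one must be careful that gluing two single-cycle linear path schemes does not secretly require two cycles'') but offer no mechanism to resolve it, and the claim that ``a shortest run has at most one long pumpable cycle, the remaining counter movement being absorbable into the short prefix/suffix'' is exactly the content of Lemma~\ref{L VP75} and cannot be re-derived by the informal monotonicity remark you give. In the paper no gluing occurs at all: the dichotomy on $|u-v|$ versus $|Q'|+|Q'|^2$ places every instance into either the zero-cycle or the one-cycle case outright.
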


\begin{proof}
  The idea is to construct from $V=(Q,T)$ a unary 1-VASS $V'=(Q',T')$
  with $T' \subseteq Q' \times \{-1,0,+1\} \times Q'$ that mimics the
  behavior of $V$. We then apply Lemmas~\ref{L VP75}
  and~\ref{prop:1vass:flat:pmone} to $V'$ in order to obtain the set
  of linear path schemes $Y$ for $V$. We mimic every transition $t = (q,z,q') \in T$
  by a sequence of $|z|+2$ transitions in $V'$ of which $|z|$ either
  all increment or decrement the counter. Consequently, we define $$Q'
  \quad\defeq\quad Q \cup \{(t,i) : t = (p,z,q) \in T, i \in [0,|z|]\}$$ and
  \begin{eqnarray*} 
    T' &\ \defeq\ &\phantom{\cup\ } \{(p,0,(t,0)) : t = (p,z,q) \in
    T\} \cup \{((t,|z|),0,q) : t = (p,z,q) \in T\} \\
    && \cup\ \{((t,i),+1,(t,i+1)) : t = (p,z,q) \in T, z>0, i \in
    [0,\abs{z}-1]\} \\
    && \cup\ \{((t,i),-1,(t,i+1)) : t = (p,z,q) \in T, z<0, i \in
    [0,|z|-1]\}\quad.
  \end{eqnarray*} 
  Let us define the homomorphism $h:T\rightarrow T'^+$ such that
  $$ h(t)\ \defeq\
  \begin{cases}
    (q,0,(t,0)) \cdot \left(\prod_{i=1}^{|z|}
    ((t,i-1),+1,(t,i))\right) \cdot ((t,|z|),0,q') & \text{ if $t =
      (q,z,q')$ and $z \geq 0$} \\
    
    (q,0,(t,0)) \cdot \left(\prod_{i=1}^{|z|}
    ((t,i-1),-1,(t,i))\right) \cdot ((t,|z|),0,q') & \text{ if $t =
      (q,z,q')$ and $z < 0$}
  \end{cases}
  $$ for every $t \in T$. The idea behind this definition is that for
  every run $\pi$ in $V$ we have that $h(\pi)$ is the run in $V'$
  that corresponds to $\pi$. The following conditions formalize this
  intuition and are easily verified:
  \begin{enumerate}[(i)]
  \item $|h(t) |\leq \norm{T}+2$ for each $t \in T$,
  \item if $p(u) \xrightarrow{\pi}_{\N} q(v)$ in $V$ then
    $p(u) \xrightarrow{h(\pi)}_{\N} q(v)$ in $V'$, and
  \item if $p,q \in Q$ and $p(u) \xrightarrow{\pi'}_{\N} q(v)$ in $V'$
    then there is a unique $\pi \in T^*$ satisfying $\pi'= h(\pi)$ and
    $p(u) \xrightarrow{\pi}_{\N} q(v)$ in $V$.
  \end{enumerate}
  By (iii) for every $p(u) \xrightarrow{\pi'}_{\N} q(v)$ with $p,q\in Q$ in $V'$ we
  can write $h^{-1}(\pi')$ to denote the unique $\pi$ such that
  $h(\pi) = \pi'$ and $p(u) \xrightarrow{\pi}_{\N} q(v)$ in $V$. In
  this case, we have that $\pi'$ is a cycle in $V'$ if, and only if,
  $h^{-1}(\pi')$ is a cycle in $V$.

  To show the existence of the finite set of linear path schemes satisfying the
  conditions required in the lemma, we show that whenever $p(u)
  \xrightarrow{*}_{\N} q(v)$ in $V$ then there exists a linear path scheme $\rho
  \subseteq T^*$ such that $p(u) \xrightarrow{\rho}_{\N} q(v)$ in $V$
  and $|\rho| \leq (|Q| + \norm{T})^{O(1)}$. Let $D = |Q'|+|Q'|^2$ and
  assume $p(u) \xrightarrow{*}_{\N} q(v)$ in $V$. Hence $p(u)
  \xrightarrow{*}_{\N} q(v)$ in $V'$ by (ii). We make a case
  distinction between $|u-v| \leq D$ and $|u-v| > D$.
  
  {\em Case 1: $|u-v|\leq D$.} By Lemma~\ref{prop:1vass:flat:pmone} we
  have $p(u) \xrightarrow{\pi'}_{\N} q(v)$ in $V'$ for some path
  $\pi'$ with $|\pi'| \leq |Q'|^{O(1)} + |Q'|\cdot D \leq (|Q| + \norm{T} +
  D)^{O(1)}$. Thus, we set $\rho = h^{-1}(\pi')$ and note that $p(u)
  \xrightarrow{\rho}_{\N} q(v)$ in $V$ by (iii), and $|\rho| \leq|
  \pi'| \leq(|Q| + \norm{T} + D)^{O(1)}\leq (|Q|+\norm{T})^{O(1)}$ as
  required.

  {\em Case 2: $|u-v|>D$.}
  By Lemma~\ref{L VP75}, we have $p(u)
  \xrightarrow{\alpha'(\beta')^i\gamma'}_{\N} q(v)$ in $V'$ for some
  $i > 0$ and some linear path scheme $\rho' =\alpha'\beta'^*\gamma'$ from $p$
  to $q$ satisfying $|\alpha'\gamma'| < |Q'|$ and $|\beta'| \leq
  |Q'|$.

  Let $q'\in Q'$ be such that $\beta'$ is a
  cycle from $q'$ to $q'$. If $q'\in Q$ then $\alpha'$ is a path in
  $V'$ from $p$ to $q'$, $\beta'$ is a cycle in $V'$ from $q'$ to $q'$
  and $\gamma'$ is a path in $V'$ from $q'$ to $q$. Thus $\rho \defeq
  \alpha'^{-1}(\beta'^{-1})^*\gamma'^{-1}$ is a linear path scheme in
  $V$ for which we have $p(u) \xrightarrow{\rho}_{\N} q(v)$ in $V$ by
  (iii), and $|\rho| \leq |Q'|+|Q'|^2 \leq (|Q| + \norm{T})^{O(1)}$.

  Otherwise, if $q' \in Q' \setminus Q$, we have $q'=(t,i)$ for some
  $t = (q_1,z,q_2) \in T$ and some $i \in [1,|z|]$. We only consider the
  case $z \geq 0$, the case $z<0$ being symmetric. Since $\beta'$ is a
  cycle from $(t,i)$ to $(t,i)$, it follows from the definition of
  $V'$ that $\alpha' = \alpha_1'\alpha_2'$ and $\beta' =
  \beta_1'\beta_2'$, where $\alpha_2' = \beta_2' = (q_1,0,(t,0)) \cdot
  \prod_{j=1}^i((t,j-1),+1,(t,j))$. We have the following
language equalities, where the last equality follows from
$\alpha_2'=\beta_1'$,
  $$\alpha'(\beta')^*\gamma'\quad=\quad
  \alpha_1'\alpha_2'(\beta_1'\beta_2')^*\gamma'
  \quad
=\quad
  \alpha_1'(\beta_2'\beta_1')^*\beta_2'\gamma'.$$
  Moreover, in $V'$ it holds that $\alpha_1'$ is a path from $p\in Q$
  to $q_1\in Q$, $\beta_2'\beta_1'$ is a cycle from $q_1\in Q$ to
  $q_1$, and $\beta_2'\gamma'$ is a path from $q_1$ to $q\in Q$. Hence
  $\rho \defeq h(\alpha_1')^{-1} (h(\beta_2'\beta_1')^{-1})^*
  h(\beta_2'\gamma')^{-1}$ is a linear path scheme in $V$ with $|\rho|
  \leq |Q'|+|Q'|^2 \leq (|Q| + \norm{T})^{O(1)}$ for which we have
  $p(u) \xrightarrow{\rho}_{\N} q(v)$ in $V$ by (iii).
\end{proof}

We are now in a position where we, informally speaking, can prove the
first half of Proposition~\ref{P Belts}. The following lemma proves
Proposition~\ref{P Belts} when restricting the range of one
counter.

\begin{lemma}{\label{C Belt1}}
  Let $V = (Q,T)$ be a 2-VASS, $D \in \N$ and $\B \in \{(\N \times
  [0,D]), ([0,D]\times \N)\}$. Then there exists a finite set $Y_\B$ of
  linear path schemes such that
  \begin{itemize}
  \item $p(\vec{u}) \xrightarrow{*}_\B q(\vec{v})$ if, and only if,
    $p(\vec{u}) \xrightarrow{Y_\B}_\B q(\vec{v})$;
  \item $|\rho| \leq (|Q| + \norm{T} + D)^{O(1)}$ for each $\rho
    \in Y_\B$; and
  \item each $\rho \in Y_\B$ has at most one cycle.
  \end{itemize}
\end{lemma}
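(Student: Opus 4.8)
The plan is to reduce reachability in the two-dimensional VASS $V = (Q,T)$ within the horizontal (or vertical) band $\B$ to reachability in a one-dimensional VASS, and then invoke Lemma~\ref{prop:1vass:flat}. First I would treat the case $\B = \N \times [0,D]$; the case $\B = [0,D] \times \N$ is symmetric. The key observation is that the second counter is bounded by $D$ along any $\B$-run, so its value can be absorbed into the control state. I would therefore construct a $1$-VASS $V' = (Q', T')$ with $Q' \defeq Q \times [0,D]$, and for every transition $t = (p,(z_1,z_2),q) \in T$ put a transition $((p,c),z_1,(q,c+z_2)) \in T'$ whenever $c, c+z_2 \in [0,D]$. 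Then $p(\vec{u}) \xrightarrow{\pi}_\B q(\vec{v})$ in $V$ if, and only if, $(p,u_2)(u_1) \xrightarrow{\pi'}_\N (q,v_2)(v_1)$ in $V'$, where $\pi'$ is the transition-by-transition image of $\pi$; this is an easy induction on $|\pi|$, using that $\B$-runs keep the first coordinate in $\N$ automatically and force the second coordinate into $[0,D]$ at every step. Note that $|Q'| = |Q|\cdot(D+1)$ and $\norm{T'} \le \norm{T}$, so $|Q'| + \norm{T'} \le (|Q| + \norm{T} + D)^{O(1)}$.

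Next I would apply Lemma~\ref{prop:1vass:flat} to $V'$ to obtain a finite set $Y'$ of linear path schemes over $T'$ such that $\xrightarrow{*}_\N$ in $V'$ is captured by $Y'$, with $|\rho'| \le (|Q'| + \norm{T'})^{O(1)} \le (|Q| + \norm{T} + D)^{O(1)}$ and each $\rho' \in Y'$ having at most one cycle. The final step is to pull these linear path schemes back to $V$. Since each transition of $T'$ is of the form $((p,c),z_1,(q,c+z_2))$, it carries a canonical projection to the transition $t = (p,(z_1,z_2),q) \in T$; extending this projection to words gives a map $g : T'^* \to T^*$. For a linear path scheme $\rho' = \alpha'_0 (\beta')^* \alpha'_1$ over $T'$, I would set $g(\rho') \defeq g(\alpha'_0) g(\beta')^* g(\alpha'_1)$, which is a linear path scheme over $T$ with at most one cycle, of the same length. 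I would then define $Y_\B \defeq \{ g(\rho') : \rho' \in Y' \}$.

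It remains to check that $Y_\B$ works. For the forward direction, if $p(\vec{u}) \xrightarrow{*}_\B q(\vec{v})$ in $V$, the equivalence above gives $(p,u_2)(u_1) \xrightarrow{*}_\N (q,v_2)(v_1)$ in $V'$, hence $(p,u_2)(u_1) \xrightarrow{\rho'}_\N (q,v_2)(v_1)$ for some $\rho' \in Y'$, and any witnessing run $\pi'$ in $V'$ has a projection $g(\pi')$ that, run from $p(\vec{u})$ in $V$ and staying in $\B$ by construction, yields $p(\vec{u}) \xrightarrow{g(\rho')}_\B q(\vec{v})$. For the converse, one has to be a little careful: a word in the language of $g(\rho')$ is $g(\pi')$ for \emph{some} $\pi'$ in the language of $\rho'$ (here the fact that $g$ acts symbol-by-symbol on linear path schemes is what makes this true), but running $g(\pi')$ from $p(\vec{u})$ in $V$ within $\B$ need not land at $q(\vec{v})$ unless the second-coordinate bookkeeping is consistent — this is exactly the point where some transitions of $g(\pi')$ might be blocked in $\B$ or lead to a different target. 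The cleanest fix, and the main subtlety of the proof, is to keep the source and target \emph{states} of $V'$ attached: only take linear path schemes $\rho'$ in $Y'$ that go from $(p,u_2)$ to $(q,v_2)$ for the configurations at hand — but since $Y_\B$ must be a single fixed set independent of $\vec{u},\vec{v}$, I instead include in $Y_\B$ all $g(\rho')$ for $\rho' \in Y'$ and verify directly that $p(\vec{u}) \xrightarrow{g(\rho')}_\B q(\vec{v})$ holds exactly when the corresponding $V'$-run goes between the matching configurations, using the bijection between $\B$-runs of $V$ and $\N$-runs of $V'$ once more. The length and cycle bounds are then immediate from those for $Y'$, completing the proof.
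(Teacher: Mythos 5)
Your proof follows essentially the same route as the paper's: absorb the bounded counter into the control states to obtain a 1-VASS on $|Q|\cdot(D+1)$ states, apply Lemma~\ref{prop:1vass:flat} to it, and project the resulting linear path schemes back to $V$ via the natural symbol-by-symbol homomorphism, which preserves paths, cycles, lengths and the number of cycles. The subtlety you raise for the converse direction is not actually an issue: $p(\vec{u}) \xrightarrow{Y_\B}_\B q(\vec{v})$ implies $p(\vec{u}) \xrightarrow{*}_\B q(\vec{v})$ directly by the definition of $\xrightarrow{L}_\B$ as a union over the words of $L$, irrespective of whether a particular word in the language of $g(\rho')$ is the image of a genuine run of the product 1-VASS.
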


\begin{proof}
  We only consider the case $\B=\N \times [0,D]$, the other case
  follows by symmetry. Starting from $V$ we construct a 1-VASS
  $\overline{V} = (\overline{Q},\overline{T})$  such that the
  following holds:
  \begin{enumerate}[(1)]
  \item $\overline{Q} = \{q_i : q \in Q, i \in [0,D]\}$, and
  \item for each $p,q \in Q$ and each $(u_1, u_2), (v_1, v_2) \in \B$
    we have $p(u_1, u_2) \xrightarrow{*}_\B q(v_1, v_2)$ in $V$ if,
    and only if, $p_{u_2}(u_1) \xrightarrow{*}_{\N} q_{v_2}(v_1)$ in
    $\overline{V}$.
  \end{enumerate}
  To achieve (2) note that we can simply define $\overline{T}$ as
  follows,
  $$\overline{T} = \{(p_n,i,q_{n+j}) : (p,(i,j),q) \in T \text{ and } n,n+j
  \in [0,D]\}.$$ 
  This gives rise to a homomorphism $\phi : \overline{T}^* \rightarrow
  T^*$ with $\phi(p_n,i,q_{n+j}) \defeq (p,(i,j),q)$ for each
  $(p_n,i,q_{n+j}) \in \overline{T}$.  For each path $\overline{\pi}$
  in $\overline{V}$ we have
  \begin{itemize}
  \item[(3)] if $p_{u_2}(u_1) \xrightarrow{\overline{\pi}}_\N
    q_{v_2}(v_1)$ in $\overline{V}$ then $p(u_1,u_2)
    \xrightarrow{\phi(\overline{\pi})}_\B q(v_1,v_2)$ in $V$.
  \end{itemize} 

  It follows immediately from the definition of $\overline{T}$ that
  any linear path scheme $\overline{\rho} = \alpha_0 \beta_1^*
  \alpha_1 \cdots \beta_k^* \alpha_k$ over the 1-VASS $\overline{V}$
  induces the linear path scheme $\phi(\overline{\rho}) =
  \phi(\alpha_0) \phi(\beta_1)^* \phi(\alpha_1) \cdots \phi(\beta_k)^*
  \phi(\alpha_k)$ over the 2-VASS $V$. Furthermore, $\phi$ is
  naturally extended to any set of linear path schemes $\overline{S}$:
  we put $\phi(\overline{S}) \defeq \bigcup\{\phi(\overline{\rho}) :
  \overline{\rho} \in \overline{S}\}$. Applying
  Lemma~\ref{prop:1vass:flat} to $\overline{V}$ yields a set
  $\overline{Y}$ of linear path schemes such that
  \begin{itemize}
  \item[(4)] $p(u) \xrightarrow{*}_{\N} q(v)$ in $\overline{V}$ if,
    and only if, $p(u) \xrightarrow{\overline{\rho}}_{\N} q(v)$ in
    $\overline{V}$ for some $\overline{\rho} \in \overline{Y}$, where
    $\overline{\rho}$ has at most one cycle and $|\overline{\rho}|
    \leq (|\overline{Q}| + \norm{\overline{T}})^{O(1)} \leq (|Q|\cdot
    D + \norm{T})^{O(1)} = (|Q| + \norm{T} + D)^{O(1)}$.
  \end{itemize}

  We define $Y_\B$ required in the lemma as $Y_\B \defeq
  \phi(\overline{Y})$. By definition, $Y_\B$ already fulfills the second
  and third condition required in the lemma. The first condition now
  follows from the following circular sequence of implications. Let
  $p,q \in Q$ and $u_1,u_2,v_1,v_2\in\N$, we have
  \begin{eqnarray*}
    p(u_1,u_2) \xrightarrow{*}_\B q(v_1,v_2) \text{ in $V$} & \qquad
    \stackrel{\text{(2)}}{\Longleftrightarrow} \qquad & p_{u_2}(u_1)
    \xrightarrow{*}_{\N} q_{v_2}(v_1) \text{ in $\overline{V}$} \\
    & \qquad \stackrel{\text{(4)}}{\Longleftrightarrow} \qquad & p_{u_2}(u_1)
    \xrightarrow{\overline{Y}}_{\N} q_{v_2}(v_1) \text{ in $\overline{V}$} \\
    & \qquad \stackrel{\text{(3)}}{\Longrightarrow} \qquad & p(u_1,u_2)
    \xrightarrow{\phi(\overline{Y})}_\B q(v_1,v_2) \text{ in $V$} \\
    & \qquad \stackrel{}{\Longleftrightarrow} \qquad & p(u_1,u_2)
    \xrightarrow{S}_\B q(v_1,v_2) \text{ in $V$} \\
    &\Longrightarrow & p(u_1,u_2) \xrightarrow{*}_\B q(v_1,v_2) \text{
      in $V$}.
  \end{eqnarray*}
\end{proof}

In the remainder of this section, by application of Lemma~\ref{C
  Belt1} we prove Proposition~\ref{P Belts} which, given some $D\in
\N$, states that runs which stay inside the $L$-shaped band $\L=([0,D]
\times \N) \cup (\N \times [0,D])$ can be captured by a union of small
linear path schemes with at most two cycles.

\medskip
\noindent
{\em Proof of Proposition \ref{P Belts}. } Let us define $E \defeq D +
\norm{T}$ and $\L' \defeq ([0,E] \times \N) \cup (\N \times
     [0,E])$. Let $p,q \in Q$ and $\vec{u}, \vec{v} \in\L$, and let
     $p(\vec{u}) \xrightarrow{\pi}_\L q(\vec{v})$ be such that $|\pi|$
     is minimal. In order to prove Proposition~\ref{P Belts}, it
     suffices to provide some linear path scheme $\rho$ such that
     $p(\vec{u}) \xrightarrow{\rho}_{\L'} q(\vec{v})$, $|\rho|
     \leq(|Q| + \norm{T} + D)^{O(1)}$ and $\rho$ has at most two
     cycles. Let $\B_1 \defeq[0,E] \times \N$, $\B_2 \defeq \N \times
     [0,E]$ and let $H \defeq \B_1 \cap \B_2 = [0,E] \times
     [0,E]$. Due to minimality of $\pi$ and by choice
     of $H$ we can factorize $\pi$ as $\pi =
     \pi_1\cdots\pi_k$, where
  $$p_0(\vec{u}_0) \xrightarrow{\pi_1}_\L
  p_1(\vec{u}_1) \cdots \xrightarrow{\pi_k}_\L p_k(\vec{u}_k)$$ 
and

\begin{enumerate}[(i)]
  \item $p_0 = p$, $p_k = q$, $\vec{u}_0 = \vec{u}$, $\vec{u}_k =
    \vec{v}$;
  \item $p_{i-1}(\vec{u}_{i-1}) \xrightarrow{\pi_i}_{\C_i}
    p_i(\vec{u}_i)$, where $\C_i \in \{\B_1,\B_2\}$ for every $i \in [1,k]$;
  \item $\vec{u}_i \in H$ for each $i \in [1,k-1]$; and
  \item $k \leq |H| = (E+1)^2 \leq D^{O(1)}$.
  \end{enumerate} 

By combining~(ii) with Lemma~\ref{C Belt1} we have that for each run
$p_{i-1}(\vec{u}_{i-1}) \xrightarrow{\pi_i}_{\C_i} p_i(\vec{u}_i)$
there exists a linear path scheme $\rho_i =
\alpha_i(\beta_i)^*\gamma_i$ such that $p_{i-1}(\vec{u}_{i-1})
\xrightarrow{\rho_i}_{\C_i} p_i(\vec{u}_i)$ and $|\rho_i| \leq (|Q| +
\norm{T} + E)^{O(1)} = (|Q| + \norm{T} + D)^{O(1)}$. For simplicity,
here we only treat the case where each $\rho_i$ has precisely one
cycle, the cases when some $\rho_i$ contains no cycle can be dealt
with analogously. Note that whenever $i \in [2,k-1]$ we have
$p_{i-1}(\vec{u}_{i-1}), p_i(\vec{u}_i) \in Q \times H$ by (iii).
Since $\rho_i$ has only one cycle, $|\rho_i|\leq
(|Q|+\norm{T}+D)^{O(1)}$ and $\vec{u}_{i-1},\vec{u}_i\in H$ there
exists some $e_i \leq (|Q| + \norm{T} + E)^{O(1)}$ such that
$$p_{i-1}(\vec{u}_{i-1})
\xrightarrow{\alpha_i(\beta_i)^{e_i}\gamma_i}_{\C_i} p_i(\vec{u}_i),\quad\text{ thus in particular }\quad
p_{i-1}(\vec{u}_{i-1})
\xrightarrow{\alpha_i(\beta_i)^{e_i}\gamma_i}_{\L'} p_i(\vec{u}_i)\quad
$$
Consequently, we have
$$
p_0(\vec{u}_0) \xrightarrow{\alpha_1\beta_1^*\gamma_1}_{\C_1}
p_1(\vec{u}_1) \xrightarrow{\prod_{i=2}^{k-1}
  \alpha_i\beta_i^{e_i}\gamma_i}_{\L'} p_{k-1}(\vec{u}_{k-1})
\xrightarrow{\alpha_k\beta_k^*\gamma_k}_{\C_k} p_k(\vec{u}_k).
$$ 
Hence, we define 
$$\rho\quad \defeq\quad \alpha_1\beta_1^*\gamma_1 \cdot
  \left(\prod_{i=2}^{k-1} \alpha_i\beta_i^{e_i}\gamma_i\right) \cdot
  \alpha_k\beta_k^*\gamma_k$$ 
which has at most two cycles and for which we have
$p(\vec{u})\xrightarrow{\rho}_{\L'}q(\vec{v})$ and 
\begin{eqnarray*}
  |\rho| &\ \leq\ & k \cdot \max\{e_i : i \in [2,k-1]\} \cdot
  \max\{|\rho_i| : i \in [1,k]\} \\
  &\ \stackrel{\text{(iv)}}{\leq}\ & D^{O(1)} \cdot (|Q| + \norm {T} +
  E)^{O(1)} \cdot (|Q| + \norm{T} + D)^{O(1)} \\
  & = & (|Q| + \norm{T} + D)^{O(1)}.
\end{eqnarray*}
This concludes the proof of Proposition~\ref{P Belts}.

\subsection{Factorizing arbitrary runs: Proof of Theorem~\ref{main1}}\label{sec:glue}

By application of the results established in
Sections~\ref{sec:ultimately-flatness}
and~\ref{sec:bounded-component}, we will now prove
Theorem~\ref{main1}. In Section~\ref{sec:ultimately-flatness}, we
showed that the following two kinds of runs can be captured by small
linear path schemes:

\begin{itemize}
\item Type (1): Runs between two configurations $q(\vec{u})$ and
  $q(\vec{v})$ where both components of
  $\vec{u}$ and $\vec{v}$ are sufficiently large, but intermediate
  configurations could have small counter values.
\item Type (2): Runs on which for all configurations both counter
  values are sufficiently large.
\end{itemize}
Complementary, in Section~\ref{sec:bounded-component} we showed that
there are small linear path schemes with at most two cycles that
capture the following runs:
\begin{itemize}
\item Type (3): Runs on which for all configurations at least one counter value is not too large.
\end{itemize}

The goal of this section is to show that any run can be factorized
into few runs that are each of types~(1), (2) or (3). To this end, let
us fix a $2$-VASS $V=(Q,T)$. Let $D\leq (|Q| + \norm{T})^{O(1)}$ be
the constant from Proposition~\ref{P upper right}. Informally
speaking, we have hereby defined that ``sufficiently large'' means to
be greater or equal to $D$.  Moreover we set $\L \defeq
([0,D+\norm{T}] \times \N) \cup (\N \times [0,D+\norm{T}])$, $\O
\defeq [D,\infty)^2$, and $\B\defeq\L \cap
  \O=([D,D+\norm{T}]\times\N)\cup(\N\times[D,D+\norm{T}])$.  Again,
  informally speaking, we have hereby defined that ``not too large''
  means to be smaller or equal to $D+\norm{T}$.

Let us summarize what we have proven in
Sections~\ref{sec:ultimately-flatness}
and~\ref{sec:bounded-component}:
\begin{itemize}
\item Runs of type~(1) can be captured by a set of linear path schemes
  $R$, where each $\rho\in R$ has at most two cycles and length at
  most $(|Q|+\norm{T})^{O(1)}$ by Proposition \ref{P upper right}(a).
\item Runs of type~(2) can be captured by a set of linear path schemes
$X$, where each $\rho\in X$ has at most $2\cdot |Q|$ cycles and length
at most $(|Q|+\norm{T})^{O(1)}$ by Proposition \ref{P upper right}(b).
\item Runs of type~(3) can be captured by a set of linear path schemes
  $Y_\L$, where each $\rho\in Y_\L$ has at most two cycles and length
  at most $(|Q|+\norm{T}+D)^{O(1)}=(|Q|+\norm{T})^{O(1)}$ by
  Proposition \ref{P Belts}.
\end{itemize}
Given $p(\vec{u})$ and $q(\vec{v})$, let us fix an arbitrary run
$p(\vec{u}) \xrightarrow{\pi}_{\N^2} q(\vec{v})$, where $\pi=t_1\cdots
t_k\in T^k$ and
\begin{align*}
  p(\vec{u}) = q_0(\vec{u}_0) \xrightarrow{t_1}_{\N^2} q_1(\vec{u}_1)
  \cdots \xrightarrow{t_k}_{\N^2} q_k(\vec{u}_k) = q(\vec{v})\quad.
\end{align*}
 We will be interested in the
indices of configurations whose counter values lie in $\B$ and define
\begin{align*}
  I &\quad \defeq\quad \{ i\in [0,k] : \vec{u}_i \in \B \}\quad.
\end{align*}
Let us define the function $x:I\rightarrow I$ that maps each index
$i\in I$ to the smallest element in $I$ larger than $i$ (and $i$ if
$i=\max I$), i.e.\
$$
x(i)\quad\defeq\quad
\begin{cases}\min\{j\in I: j>i\} & \text{ if $i<\max I$,}\\
i &\text{ otherwise, i.e.\ $i=\max I$}\quad.
\end{cases}
$$ We also define the function $\ell:\{q_i\in Q: i\in I\}\rightarrow
I$ that maps each state $q$ that appears in a configuration in
$Q\times\B$ to the largest index in $I$ where it appears, i.e.\
$$
\ell(q)\quad\defeq\quad\max\{i\in I: q=q_i\}\quad.
$$ We are now interested in factorizing the run
$p(\vec{u})\xrightarrow{\pi}_{\N^2}q(\vec{v})$ into runs between
configurations that start and end in
$\B=\L\cap\O$.  More precisely, by the choice of $\O$, $\L$ and $\B$
and by the pigeonhole principle there exist indices $i_1,\ldots,i_h\in
I$ such that the run $p(\vec{u})\xrightarrow{\pi}_{\N^2}q(\vec{v})$
can be factorized as (cf. Figure~\ref{fig:thm1:decomp}):
\begin{multline*}
  q_0(\vec{u}_0)\xrightarrow{\pi_{0,1}}_{\D_{0,1}}
  q_{i_1}(\vec{u}_{i_1})\xrightarrow{\pi_1}_{\N^2}
  q_{\ell(q_{i_1})}(\vec{u}_{\ell(q_{i_1})})
  \xrightarrow{\pi_{1,2}}_{\D_{1,2}} q_{i_2}(\vec{u}_{i_2})
  \xrightarrow{\pi_2}_{\N^2}q_{\ell(q_{i_2})}(\vec{u}_{\ell(q_{i_2})})\quad\cdots\\
  \cdots\quad\xrightarrow{\pi_{h-1,h}}_{\D_{h,h-1}}q_{i_h}(\vec{u}_{i_h})
  \xrightarrow{\pi_h}_{\N^2}q_{\ell(q_{i_h})}(\vec{u}_{\ell(q_{i_h})})
  \xrightarrow{\pi_{h,h+1}}_{\D_{h,h+1}}
  q_k(\vec{u}_k)\quad,
\end{multline*}
where
\begin{enumerate}[(i)]
\item $h\leq |Q|$,
\item $i_t\in I$ and thus we have $\vec{u}_{i_t}\in\B$ and $q_{i_t}=q_{\ell(q_{i_t})}$ for each
  $t\in[1,h]$, 
\item $\D_{t,t+1}\in\{\O,\L\}$ for each $t\in[1,h]$, and
\item $i_{t+1}=x(\ell(q_{i_t}))$ for each $t\in[1,h-1]$.
\end{enumerate}

By (ii) each run of the form
$q_{i_t}(\vec{u}_{i_t})\xrightarrow{\pi_t}_{\N^2}q_{\ell(q_{i_t})}(\vec{u}_{\ell(q_{i_t})})$
is a run of type (1) and can hence be replaced by some linear path
scheme from $R$ (recall that $\B\subseteq\O$). By (iii) and (iv), each
run of the form $\xrightarrow{\pi_{t,t+1}}_{\D_{t,t+1}}$ is a run of
type~(2) or of type~(3) and can hence be replaced by some linear path
scheme from $X\cup Y_\L$. In summary, the run
$p(\vec{u})\xrightarrow{\pi}_{\N^2} q(\vec{v})$ can be replaced by a
linear path scheme that has at most $(h+1)\cdot2\cdot|Q|\leq O(|Q|^2)$
cycles and size at most
$(h+1)\cdot(|Q|+\norm{T})^{O(1)}=(|Q|+\norm{T})^{O(1)}$. This
concludes the proof of Theorem~\ref{main1}.

\begin{figure}[t]
  \begin{center}
    \includegraphics{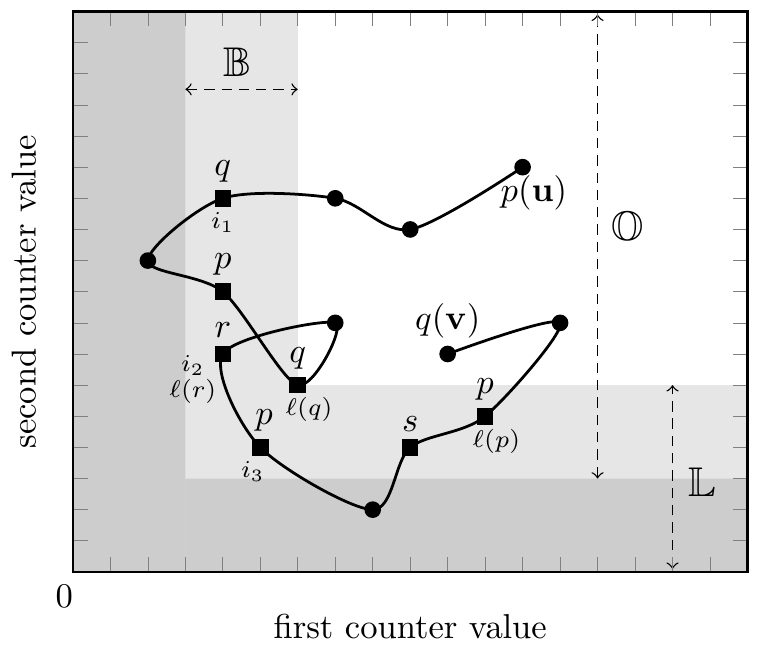}
  \end{center}
  \caption{Example of the decomposition of a path in the proof of
    Theorem~\ref{main1}. The region depicted is the positive quadrant
    in the Cartesian plane. Here, $I = \{3,5,6,8,9,11,12\}$ is marked
    with squares, and $i_1 = 3$, $\ell(q) = 6$, $i_2 = \ell(r) = 8$,
    $i_3 = 9$ and $\ell(p) = 12$.\label{fig:thm1:decomp}}
\end{figure}

\section{Complexity Results}{\label{geometry}}
Having established Theorem~\ref{main1}, it is now not difficult to
show that reachability in 2-VASS is in PSPACE by application of bounds
from integer linear programming. A complementary lower bound follows
via a reduction from reachability in bounded one-counter automata,
which is known to be PSPACE-complete~\cite{FJ13}.  This is the subject
of Section~\ref{sec:reachability-pspace} below which proves
Theorem~\ref{main2}. The PSPACE lower bound does, however, crucially
depend on binary encoding of numbers. In fact, we show in
Section~\ref{sec:reachability-unary} that reachability in unary 2-VASS
is in NP and NL-hard. The precise complexity of this problem remains
an open problem of this paper. Finally, for the sake of completeness,
in Section~\ref{sec:reachability-remarks} we briefly state some
corollaries of our results on the complexity of reachability in
$\Z$-VASS, and on coverability and boundedness in 2-VASS.

Before we begin, let us recall some definitions and results from
integer linear programming.  Let $A$ be a $d \times k$ integer matrix
and $\vec{c} \in \Z^d$. A \emph{system of linear Diophantine
  inequalities} (resp.\ a \emph{system of linear Diophantine
  equations}) is given as $\mathcal{I} : A\cdot\vec{x} \ge \vec{c}$
(resp.\ as $\mathcal{E}: A\vec{x} = \vec{c}$) and we say that
$\mathcal{I}$ (resp.\ $\mathcal{E}$) is {\em feasible} if there exists
some $\vec{e}\in \N^k$ such that $A\cdot \vec{e} \ge \vec{c}$
(resp.\ $A\cdot \vec{e} = \vec{c}$), i.e., every inequality
(resp.\ equality) holds in every row of $\mathcal{I}$
(resp.\ $\mathcal{E}$). Subsequently, we refer
to $\vec{e}$ as a {\em solution} of $\mathcal{I}$ or $\mathcal{E}$,
respectively. By $\eval{\mathcal{I}}\subseteq \N^k$ we denote the set
of all solutions of $\mathcal{I}$, the set of solutions $\eval{\mathcal{E}}\subseteq \N^k$
is defined analogously.

Let us now recall two bounds on solutions of systems of linear
Diophantine inequalities and equations that we subsequently rely
upon. The first bound we use in this paper concerns systems of linear
Diophantine inequalities.
\begin{proposition}[\cite{Schr86}, p.\ 239]
  \label{prop:schrijver-bound}
  Let $\mathcal{I} : A\cdot \vec{x} \ge \vec{c}$ be a feasible system
  of linear Diophantine inequalities, where $A$ is a $d \times k$
  matrix. Then there exists a solution $\vec{e} \in \N^k$ of
  $\mathcal{I}$ such that
  \begin{align*}
    \norm{\vec{e}}\quad \le\quad 2^{k^{O(1)}}\cdot O(\norm{A}+\norm{\vec{c}})\quad.
  \end{align*}  
\end{proposition}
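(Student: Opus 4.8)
The plan is to reduce the claim to the corresponding statement for systems of linear Diophantine \emph{equations} and then to bound a minimal nonnegative integer solution of such a system via the classical Cramer's-rule and pigeonhole estimates, taking care of how the parameters enter.

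The first step is to eliminate the inequalities by adjoining slack variables. A vector $\vec{e}\in\N^k$ is a solution of $\mathcal{I}: A\vec{x}\ge\vec{c}$ if, and only if, there is some $\vec{f}\in\N^d$ with $A\vec{e}-\vec{f}=\vec{c}$; equivalently, $\mathcal{I}$ is feasible exactly when the equation system $\mathcal{E}: A'\vec{z}=\vec{c}$ with $A'\defeq[\,A\mid -I_d\,]$ is feasible over the nonnegative integers. Two features of $\mathcal{E}$ are convenient: the $-I_d$ block forces $A'$ to have full row rank, so $\mathcal{E}$ carries no hidden dependencies among its equations, and $\norm{A'}$ is of the same order as $\norm{A}$. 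Moreover the first $k$ coordinates of any solution $\vec{z}$ of $\mathcal{E}$ form a solution $\vec{e}$ of $\mathcal{I}$ with $\norm{\vec{e}}\le\norm{\vec{z}}$, so it suffices to exhibit a small solution of $\mathcal{E}$.

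For the latter, two standard routes are available. In the \emph{vertex route} one uses that the pointed rational polyhedron $\{\vec{z}\ge\vec{0}: A'\vec{z}=\vec{c}\}$ has a vertex $\vec{v}$, supported on a set of coordinates indexing an invertible square submatrix $B$ of $A'$, with $\vec{v}_B=B^{-1}\vec{c}$; Cramer's rule expresses the entries of $\vec{v}$ as quotients of sub-determinants of $[\,A'\mid\vec{c}\,]$, and a clearing-of-denominators/correction-by-short-kernel-vectors argument turns $\vec{v}$ into an integer solution of comparable size. In the \emph{pigeonhole route} one takes a solution $\vec{z}^{\ast}$ of $\mathcal{E}$ of minimal $1$-norm, splits its coordinates into a few large ones and many small ones below a threshold, and observes that if the large coordinates contributed too many columns of $A'$ then two partial sums of those columns would coincide, producing a nonempty nonnegative kernel vector $\le\vec{z}^{\ast}$; subtracting it would contradict minimality. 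Either route, carried out with care, yields the stated bound $\norm{\vec{e}}\le 2^{k^{O(1)}}\cdot O(\norm{A}+\norm{\vec{c}})$ --- precisely the estimate of~\cite{Schr86}.

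I expect the real work to lie in the \emph{parameter bookkeeping} rather than in the existence of a small solution. A naive determinant estimate lets $\norm{A}$ appear with an exponent growing with the number of equations, so to isolate $\norm{A}+\norm{\vec{c}}$ inside a single $O(\cdot)$ --- with all combinatorial blow-up confined to the factor $2^{k^{O(1)}}$ --- one must keep the number of equations controlled after the slack transformation and, in the pigeonhole argument, separate the few potentially large coordinates from the many small ones so that the threshold is fixed by a clean one-step computation and the estimate closes with only a linear dependence on the magnitudes. As this is a well-documented classical bound, in the paper we simply invoke~\cite{Schr86} instead of reproducing the accounting.
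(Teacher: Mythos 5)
The paper offers no proof of this proposition at all: it is imported as a citation of Schrijver, p.~239, so your decision to ultimately fall back on \cite{Schr86} matches what the paper does, and your outline---slack variables to pass to $A'\vec{z}=\vec{c}$ with $A'=[A\mid -I_d]$, then either a vertex/Cramer argument or a pigeonhole argument on a minimal solution---is indeed the standard machinery behind that reference. Up to that point there is nothing to object to.

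The genuine gap is the final step, where you assert that careful bookkeeping confines all blow-up to the factor $2^{k^{O(1)}}$ and leaves only a \emph{linear} dependence on $\norm{A}+\norm{\vec{c}}$. No amount of care can achieve this, because the bound with linear dependence is false. Take $d=k$, let $A$ have $1$ on the diagonal and $-M$ on the subdiagonal, and let $\vec{c}=(M,0,\dots,0)$: the constraints read $x_1\ge M$ and $x_{i+1}\ge M\cdot x_i$, so every solution in $\N^k$ satisfies $x_k\ge M^k$, whereas the claimed bound is $2^{k^{O(1)}}\cdot O(k\cdot M)$---already violated for $k=2$ and $M$ large. What either of your two routes actually delivers is $\norm{\vec{e}}\le (k+d)\cdot\Delta$, where $\Delta$ is the largest absolute value of a subdeterminant of $[A\mid -I_d\mid \vec{c}]$, i.e.\ a bound of the form $(2+\norm{A}+\norm{\vec{c}})^{O(\min(d,k+1))}$, or uniformly $2^{k^{O(1)}}\cdot(\norm{A}+\norm{\vec{c}})^{O(k)}$; the exponent on the magnitudes is unavoidable, as the example shows. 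To be fair, the proposition as printed shares this defect---it is an over-aggressive paraphrase of Schrijver---but the downstream uses in Lemma~\ref{lem:runs:bounds} and Corollary~\ref{cor:2-vass-reachability} only require a bound of magnitude $2^{(k+\log(\norm{A}+\norm{\vec{c}}))^{O(1)}}$, which the correct estimate does supply; so the right fix is to weaken the stated bound, not to try to prove it as written.
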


Next, we consider a bound for feasible homogeneous systems of linear
Diophantine equations.
\begin{proposition}[\cite{Pot91}, Theorem~1]
  \label{prop:pottier-bound}
  Let $\mathcal{E} : A \cdot \vec{x} = \vec{0}$ be a system of linear
  Diophantine equations, where $A$ is a $d \times k$ integer matrix. Then
  there exists $P\subseteq \N^k$ such that $\norm{P}\le (\norm{A} +
  1)^d$ and
  \begin{align*}
    \eval{\mathcal{E}}\quad =\quad \cone_\N(P)\quad.
  \end{align*}
\end{proposition}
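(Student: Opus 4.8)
The plan is to take $P$ to be the \emph{Hilbert basis} of the solution monoid, that is, the set of all nonzero $\vec{x} \in \eval{\mathcal{E}}$ that cannot be written as $\vec{y} + \vec{z}$ with $\vec{y},\vec{z} \in \eval{\mathcal{E}}$ both nonzero, and to establish the two halves of the statement separately: that $P$ generates $\eval{\mathcal{E}}$, and that every element of $P$ has norm at most $(\norm{A}+1)^d$. For the first half, observe that $\eval{\mathcal{E}}$ is a submonoid of $(\N^k,+)$ and that $\cone_\N(P)$ is exactly the set of finite sums of elements of $P$; hence $\cone_\N(P) \subseteq \eval{\mathcal{E}}$ is immediate, while the reverse inclusion follows by induction on $\sum_i x_i$: a nonzero $\vec{x}$ is either in $P$, or splits as $\vec{x} = \vec{y} + \vec{z}$ with $\vec{y},\vec{z} \in \eval{\mathcal{E}}\setminus\{\vec{0}\}$ of strictly smaller component sum, to which the induction hypothesis applies. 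Dickson's lemma additionally guarantees that $P$ is finite, as the statement tacitly requires.

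The heart of the matter is bounding the norm of a single minimal solution $\vec{x} \in P$. I would build $\vec{x}$ up from $\vec{0}$ one unit at a time. Let $\vec{a}_1,\dots,\vec{a}_k$ be the columns of $A$, list column $j$ exactly $x_j$ times in some order $\vec{a}_{(1)},\dots,\vec{a}_{(N)}$ where $N \defeq \sum_i x_i$, and set $\vec{w}_t \defeq \sum_{i=1}^{t}\vec{a}_{(i)}$, so that $\vec{w}_t = A\vec{v}_t$ for the partial coordinate vector $\vec{v}_t$ with $\vec{0} = \vec{v}_0 \le \cdots \le \vec{v}_N = \vec{x}$. Since $A\vec{x} = \vec{0}$ we have $\vec{w}_0 = \vec{w}_N = \vec{0}$. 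The key consequence of minimality is that $\vec{w}_0,\dots,\vec{w}_{N-1}$ are pairwise distinct: if $\vec{w}_s = \vec{w}_t$ for some $0 \le s < t \le N$ with $(s,t) \neq (0,N)$, then $\vec{v}_t - \vec{v}_s$ is a solution (as $A(\vec{v}_t-\vec{v}_s)=\vec{0}$) that is nonzero and satisfies $\vec{v}_t - \vec{v}_s \le \vec{x}$ with $\vec{v}_t - \vec{v}_s \neq \vec{x}$, so that $\vec{x} = (\vec{v}_t-\vec{v}_s) + (\vec{x}-\vec{v}_t+\vec{v}_s)$ writes $\vec{x}$ as a sum of two nonzero solutions, contradicting $\vec{x} \in P$.

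It remains to confine these $N$ distinct points to a small region and count. Here I would invoke Steinitz's rearrangement lemma: since the multiset of chosen columns sums to $\vec{0}$ and each column has norm at most $\max_{i,j}|a_{ij}|$, one can choose the order $\vec{a}_{(1)},\dots,\vec{a}_{(N)}$ so that every partial sum $\vec{w}_t$ stays within a box around the origin whose side is governed by $\norm{A}$ rather than by $N$; moreover all $\vec{w}_t$ lie in the lattice spanned by the columns, whose rank is at most $d$. The $N$ distinct lattice points $\vec{w}_0,\dots,\vec{w}_{N-1}$ then number at most the count of lattice points in that box, which (after the precise constant bookkeeping) is at most $(\norm{A}+1)^d$. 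Consequently $\norm{\vec{x}} \le N \le (\norm{A}+1)^d$, which is exactly the required bound on $\norm{P}$.

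The main obstacle is the confinement step: without reordering, the partial sums $A\vec{v}_t$ can be as large as $\vec{x}$ itself and the pigeonhole argument yields nothing, so the entire bound rests on Steinitz's rearrangement keeping every intermediate $\vec{w}_t$ inside a region controlled by $\norm{A}$. Matching the precise base $\norm{A}+1$ and exponent $d$, as opposed to a larger $O(\norm{A})$-box and the rank $\operatorname{rank}(A)$ as exponent, is the delicate part and follows Pottier's original analysis; exploiting that the paper's definition $\norm{A} = k\cdot\max_{i,j}|a_{ij}|$ already carries a dimension factor is what absorbs the Steinitz constant. The generation half and the distinctness observation are routine by comparison.
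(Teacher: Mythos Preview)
The paper does not supply a proof of this proposition; it is quoted from Pottier~\cite{Pot91} and used as a black box, so there is no in-paper argument to compare against. Your sketch is essentially Pottier's own proof: take $P$ to be the Hilbert basis of the solution monoid, bound each minimal solution $\vec{x}$ by walking from $\vec{0}$ to $\vec{x}$ one unit coordinate at a time, observe that minimality forces the intermediate images $A\vec{v}_0,\dots,A\vec{v}_{N-1}$ to be pairwise distinct, and invoke a Steinitz-type rearrangement to confine those images to a small lattice box so that pigeonhole bounds $N$. The generation half and the distinctness argument are correct as written, and you correctly identify the confinement step as the only delicate point; since the paper's matrix norm $\norm{A}=k\cdot\max_{i,j}|a_{ij}|$ already carries a dimension factor, the Steinitz constant is indeed absorbed and the stated bound $(\norm{A}+1)^d$ follows.
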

From this proposition it is now easy to generalize to the non-homogeneous
case.

\begin{corollary}
  \label{cor:pottier-bound}
  Let $\mathcal{E} : A \cdot \vec{x} = \vec{c}$ be a feasible system
  of linear Diophantine equations such that $A$ is a $d \times k$
  matrix. Then there exists a solution $\vec{e} \in \N^k$ of
  $\mathcal{E}$ such that
  \begin{align*}
    \norm{\vec{e}}\quad \leq\quad (\norm{A} + \norm{\vec{c}})^{O(d)}\quad.
  \end{align*}
\end{corollary}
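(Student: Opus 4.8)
The plan is to reduce the inhomogeneous system to a homogeneous one by introducing a single fresh variable and then invoke Proposition~\ref{prop:pottier-bound}.

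\textbf{Step 1: Homogenize.} Form the $d \times (k+1)$ integer matrix $A' \defeq (A \mid -\vec{c})$ and consider the homogeneous system $\mathcal{E}' : A' \cdot \vec{y} = \vec{0}$ over $\vec{y} = (x_1, \dots, x_k, x_{k+1})$. The elementary observation driving the proof is that for all $\vec{e} \in \N^k$ and $m \in \N$ we have $(\vec{e}, m) \in \eval{\mathcal{E}'}$ if and only if $A \cdot \vec{e} = m \cdot \vec{c}$; in particular, solutions of $\mathcal{E}$ correspond exactly to solutions of $\mathcal{E}'$ whose last coordinate equals $1$.

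\textbf{Step 2: Apply Pottier's bound.} By Proposition~\ref{prop:pottier-bound} there is a finite $P \subseteq \N^{k+1}$ with $\eval{\mathcal{E}'} = \cone_\N(P)$ and $\norm{P} \le (\norm{A'} + 1)^d$. Since the largest absolute entry of $A'$ is at most $\max(\norm{A}, \norm{\vec{c}})$ and $A'$ has $k+1$ columns, the matrix-norm convention of the paper gives $\norm{A'} \le (\norm{A} + \norm{\vec{c}})^{O(1)}$, and hence $\norm{P} \le (\norm{A} + \norm{\vec{c}})^{O(d)}$.

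\textbf{Step 3: Extract a small solution using feasibility.} Since $\mathcal{E}$ is feasible, fix some $\vec{e}_0 \in \eval{\mathcal{E}}$. Then $(\vec{e}_0, 1) \in \eval{\mathcal{E}'} = \cone_\N(P)$, so $(\vec{e}_0, 1) = \sum_{\vec{p} \in P} \lambda_{\vec{p}} \cdot \vec{p}$ for some $\lambda_{\vec{p}} \in \N$. Reading off the last coordinate yields $1 = \sum_{\vec{p} \in P} \lambda_{\vec{p}} \cdot p_{k+1}$, a sum of nonnegative integers equal to $1$; therefore there is some $\vec{p}^\star \in P$ with $\lambda_{\vec{p}^\star} \cdot p^\star_{k+1} = 1$, forcing $p^\star_{k+1} = 1$. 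By Step~1, the projection $\vec{e}$ of $\vec{p}^\star$ onto its first $k$ coordinates is a solution of $\mathcal{E}$, and $\norm{\vec{e}} \le \norm{\vec{p}^\star} \le \norm{P} \le (\norm{A} + \norm{\vec{c}})^{O(d)}$, as claimed. The main point here, and the only place feasibility is genuinely needed, is that it guarantees one of the cone generators already has last coordinate exactly $1$, so that we may take $\vec{e}$ to be (a projection of) a single generator rather than a combination of several, which would lose control on the norm. A minor technical nuisance is the paper's matrix norm including the number of columns, but this is harmless and absorbed into the $O(1)$ and $O(d)$ bookkeeping.
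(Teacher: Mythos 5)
Your proof is correct and follows essentially the same route as the paper: homogenize via the matrix $(A \mid -\vec{c})$ with one fresh variable, apply Proposition~\ref{prop:pottier-bound}, and extract a generator whose last coordinate is $1$. Your Step~3 in fact spells out the extraction argument (that a single generator with last coordinate $1$ must occur) slightly more explicitly than the paper does.
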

\begin{proof}
  Define
  \begin{align*}
    \mathcal{E}'\quad:\quad \begin{bmatrix}A & -\vec{c}\end{bmatrix} 
    \begin{vmatrix} \vec{x} \\ y \end{vmatrix} \quad=\quad \vec{0}\quad,
  \end{align*}
  where $\vec{x}$ ranges over $\N^k$ and $y$ is a fresh variable
  ranging over $\N$. From Proposition~\ref{prop:pottier-bound} we have
  that $\eval{\mathcal{E}'}=\cone_\N(P)$ for some $P\subseteq
  \N^{k+1}$ such that $\norm{P}\le (\norm{A} + \norm{\vec{c}}+1)^{d}$.
  Now $\mathcal{E}$ is feasible if, and only if, there is some
  $\vec{p}\in\cone_\N(P)$ whose $(k+1)$-st component is equivalent to
  $1$. From such a $\vec{p}$ we obtain a solution of $\mathcal{E}$ with
  the desired bounds.
\end{proof}

\subsection{Reachability in 2-VASS is PSPACE-complete}
\label{sec:reachability-pspace}

In this section, we prove Theorem~\ref{main2} and show that
reachability in 2-VASS is PSPACE-complete. Given an instance
$p(\vec{u}) \xrightarrow{*}_{\N^2} q(\vec{v})$ of reachability, by
Theorem~\ref{main1} we have that $p(\vec{u}) \xrightarrow{\rho}_{\N^2}
q(\vec{v})$ for some linear path scheme $\rho$ such that $\abs{\rho}
\le (\abs{Q}+\norm{T})^{O(1)}$ and $\rho$ has $O(\abs{Q}^2)$
cycles. Writing $\rho = \alpha_0\beta_1^* \alpha_1 \cdots \beta_k^*
\alpha_k$, we have
\begin{align}
  \label{eqn:existential-reduction}
  p(\vec{u}) \xrightarrow{\rho}_{\N^2} q(\vec{v})\quad \iff\quad \text{there
    exist } e_1,\ldots,e_k\in\N \text{ such that } p(\vec{u})
  \xrightarrow{\alpha_0 \beta^{e_1}\alpha_1\cdots
    \beta^{e_k}\alpha_{k}}_{\N^2} q(\vec{v}).
\end{align}
Consequently, obtaining a PSPACE upper bound for reachability reduces
to bounding the binary representation of the $e_i$ polynomially in the
sizes of $V$, $\vec{u}$ and $\vec{v}$. Without loss of generality, in
the following we may assume that $e_i\ge 1$ for all $i\in[1,k]$.

Our approach is straightforward: we rephrase the existential question
from~(\ref{eqn:existential-reduction}) in terms of finding solutions
to a system of linear Diophantine inequalities and then apply standard
bounds from integer linear programming in order to bound the $e_i$.
For our reduction, let us first discuss the particular case when we
wish to decide whether the repetition of a cycle corresponds to a run.
In this case, it is sufficient to only check whether its initial and
final segments lead to counter values greater or equal to zero,
formalized by the following lemma.

\begin{lemma}
  \label{lem:ineq:cycles}
  Let $V=(Q,T)$ be $d$-VASS, $\vec{u} \in \N^d$ and let $\beta \in T^m$ be a
  cycle. Then there exists a system of linear Diophantine
  inequalities $\mathcal{I}: \vec{a} \cdot x \ge \vec{c}$ such that
  \begin{itemize}
  \item $e\in \eval{\mathcal{I}}$ if, and only if, $q(\vec{u}) \xrightarrow{\beta^e}_{\N^d} q(\vec{u} + e\cdot
    \delta(\beta))$ and $e\geq 1$ for every $e\in\N$,
  \item $\vec{a},\vec{c}\in \Z^{d+1}$, and
  \item $\norm{\vec{a}} \le \abs{\beta}\cdot \norm{T}$ and
    $\norm{\vec{c}}\le 2 \cdot \abs{\beta} \cdot  \norm{T} + \norm{\vec{u}}$.
  \end{itemize}
\end{lemma}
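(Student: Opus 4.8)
The plan is to express the condition ``$\beta^e$ is a valid $\N^d$-run from $q(\vec{u})$'' as finitely many linear inequalities in the single variable $e$, exploiting that the counter values occurring along $\beta^e$ are completely determined by $\vec{u}$, by $\delta(\beta)$, and by the displacements of the prefixes of $\beta$. Write $\beta = t_1 \cdots t_m$, put $\vec{z} \defeq \delta(\beta)$, and for each coordinate $\ell \in [1,d]$ let $m_\ell \defeq \min_{0 \le j \le m} \delta(\beta[1,j])_\ell$ be the lowest value (relative to the value at the start of a copy of $\beta$) that coordinate $\ell$ reaches while reading one copy of $\beta$. Since each transition changes a coordinate by at most $\norm{T}$ in absolute value, $-\abs{\beta}\cdot\norm{T} \le m_\ell \le 0$ for every $\ell$.

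First I would record the exact set of configurations visited by $\beta^e$ when started in $q(\vec{u})$ (for $e \ge 1$): their counter values are precisely the vectors $\vec{u} + i\vec{z} + \delta(\beta[1,j])$ for $0 \le i \le e-1$ and $0 \le j \le m$. Hence $q(\vec{u}) \xrightarrow{\beta^e}_{\N^d} q(\vec{u} + e\vec{z})$ holds if, and only if, for every coordinate $\ell$,
\[
  \min_{0 \le i \le e-1}\ \min_{0 \le j \le m}\ \bigl(u_\ell + i\, z_\ell + \delta(\beta[1,j])_\ell\bigr)\ \ge\ 0 .
\]
Because $i$ and $j$ range independently, the two minima separate; moreover $\min_{0 \le i \le e-1} i\,z_\ell$ equals $0$ if $z_\ell \ge 0$ and $(e-1)z_\ell$ if $z_\ell < 0$, i.e.\ it equals $(e-1)\cdot\min(z_\ell,0)$. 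So the condition for coordinate $\ell$ simplifies to $u_\ell + (e-1)\cdot\min(z_\ell,0) + m_\ell \ge 0$, which is linear in $e$.

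This yields the system directly. I would define $\mathcal{I} : \vec{a}\cdot x \ge \vec{c}$ with $\vec{a},\vec{c} \in \Z^{d+1}$ by setting, for $\ell \in [1,d]$, $a_\ell \defeq \min(z_\ell,0)$ and $c_\ell \defeq \min(z_\ell,0) - u_\ell - m_\ell$ (so that the $\ell$-th inequality $a_\ell e \ge c_\ell$ reads exactly $u_\ell + (e-1)\min(z_\ell,0) + m_\ell \ge 0$), together with a final row $a_{d+1} \defeq 1$, $c_{d+1} \defeq 1$ enforcing $e \ge 1$. Then $e \in \eval{\mathcal{I}}$ if, and only if, $e \ge 1$ and $\beta^e$ is a valid $\N^d$-run from $q(\vec{u})$ --- which, once $e \ge 1$, has endpoint $q(\vec{u} + e\vec{z})$ automatically. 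For the size bounds, $\abs{a_\ell} \le \abs{z_\ell} \le \abs{\beta}\cdot\norm{T}$ and (assuming w.l.o.g.\ $\norm{T}\ge 1$) $a_{d+1} = 1 \le \abs{\beta}\cdot\norm{T}$, so $\norm{\vec{a}} \le \abs{\beta}\cdot\norm{T}$; and $\abs{c_\ell} \le \abs{z_\ell} + u_\ell + \abs{m_\ell} \le 2\abs{\beta}\cdot\norm{T} + \norm{\vec{u}}$, with $c_{d+1} = 1$ also within that bound, so $\norm{\vec{c}} \le 2\abs{\beta}\cdot\norm{T} + \norm{\vec{u}}$.

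The only slightly delicate point --- and the step I would check most carefully --- is the separation of the double minimum together with the sign analysis of $z_\ell$: one must verify that the deepest dip of coordinate $\ell$ over the entire run $\beta^e$ is attained either during the first copy of $\beta$ (when $z_\ell \ge 0$) or during the last copy (when $z_\ell < 0$). This is precisely what makes a single inequality per coordinate sufficient no matter how large $e$ is; everything else is routine bookkeeping of norms.
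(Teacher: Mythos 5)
Your proof is correct and follows essentially the same route as the paper: both arguments rest on the observation that, for each coordinate, the value $u_\ell + i z_\ell + \delta(\beta[1,j])_\ell$ is monotone in the copy index $i$, so only the first and last traversals of $\beta$ need to be constrained, and these collapse into $d$ linear inequalities in $e$ plus the row enforcing $e \ge 1$. The only (harmless) packaging difference is that you fold the first-traversal constraints into the system by using the coefficient $\min(z_\ell,0)$ and the minimal prefix displacement $m_\ell$, whereas the paper keeps the coefficient $\delta(\beta)_\ell$, treats the $x$-independent first-traversal constraints as a separate feasibility precondition, and takes a maximum of the right-hand sides over prefixes.
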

\begin{proof}
  Consider the following linear Diophantine inequalities containing
  two rows for every $1 \le j \le m$:
  \begin{align}
    \label{eqn:cycle-inequalities-1}\vec{u} + \delta(\beta[1,j]) & \quad\ge\quad \vec{0} \\
    \label{eqn:cycle-inequalities-2}\vec{u} + (x-1)\cdot \delta(\beta) + 
    \delta(\beta[1,j]) & \quad\ge\quad \vec{0}
  \end{align}
  The first row expresses that on the first traversal of $\beta$ we do
  not drop below zero. This row is independent from $x$, and if the
  constraints are infeasible we can chose $\mathcal{I}$ to be any
  infeasible system of linear Diophantine inequalities. 

  Next, in (\ref{eqn:cycle-inequalities-2}) we assert that the last
  time we traverse $\beta$ no counter drops below zero. In particular,
  we have
  \begin{align*}
    \vec{u} + (x-1)\cdot \delta(\beta) + \delta(\beta[1,j]) & \quad\ge\quad \vec{0}
    \qquad\iff\qquad \delta(\beta)\cdot x \quad\ge\quad \underbrace{\delta(\beta) -
      \delta(\beta[1,j]) - \vec{u}}_{\defeq \vec{c}_j}\quad.
  \end{align*}
  Consequently, we define $\vec{a}$ required in the lemma as
  $\vec{a}\defeq (1,\delta(\beta))$. For every $j$, let
  $\vec{c}_j=(c_{1,j},\ldots,c_{d,j})$, we set $\vec{c}$ to
  \begin{align*}
    \vec{c}\quad \defeq\quad (1, \max\{c_{1,1}, \ldots, c_{1,m}\}, \ldots,
    \max\{c_{d,1},\ldots, c_{d,m} \})\quad.
  \end{align*}
  The first row of $\mathcal{I}: \vec{a}\cdot x \ge \vec{c}$ asserts
  that any solution $e\in \eval{\mathcal{I}}$ is greater-equal to one,
  and the subsequent rows that \emph{all} constraints of type
  (\ref{eqn:cycle-inequalities-2}) are fulfilled by our particular
  choice of $\vec{c}$. In particular $q(\vec{u})
  \xrightarrow{\beta^e}_{\N^d} q(\vec{u} + e\cdot \delta(\beta))$. It
  is easily checked that the norms of $\vec{a}$ and $\vec{c}$ fulfill
  the requirements of the lemma.
\end{proof}

The restriction to non-zero solutions in Lemma \ref{lem:ineq:cycles}
is due the fact that the inequality constraints on prefixes of $\beta$
could wrongly exclude zero from a solution.  Therefore we have to
consider the cases when cycles are taken at least once or not at all
separately.  In doing so, we generalize the previous lemma to
arbitrary linear path schemes.  The function
$\text{sign}:\N\rightarrow\{0,1\}$ of naturals is defined as expected,
$\text{sign}(n)=1$ if $n\geq 1$ and $\text{sign}(n)=0$ if $n=0$.

\begin{lemma}
  \label{lem:lps-reachability-sldi}
  Let $V = (Q,T)$ be a $d$-VASS, $\vec{u} \in \N$ and $\rho = \alpha_0
  \beta_1^*\alpha_1 \cdots \beta_k^*\alpha_k$ be a linear path scheme
  from $p$ to $q$ and let $\chi:[1,k]\rightarrow\{0,1\}$. Then there
  exists a system of linear Diophantine inequalities
  $\mathcal{I}=\mathcal{I}(\vec{u},\rho,\chi)$ of the form
  $\mathcal{I} : A\cdot \vec{x} \ge \vec{c}$ such that
\begin{itemize}
\item 
$\vec{e}\in \eval{\mathcal{I}}$
if, and only if,
  $\pi=\alpha_0\beta_1^{e_1}\alpha_1 \cdots \beta^{e_k}\alpha_k$ and
  $p(\vec{u}) \xrightarrow{\pi}_{\N^d} q(\vec{u} + \delta(\pi))$ and
$\chi(i)=\text{sign}(e_i)$ for every $\vec{e}=(e_1,\ldots,e_k)\in \N^k$,
\item $A$ is a $((d+1)\cdot k) \times k$-matrix,
\item $\norm{A} \le k\cdot \abs{\rho}
  \cdot \norm{T}$, and 
\item $\norm{\vec{c}} \le O(\norm{\vec{u}} +
  \abs{\rho} \cdot \norm{T})$.
\end{itemize}
\end{lemma}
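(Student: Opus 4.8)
The plan is to build the system $\mathcal{I}$ by concatenating, for each cycle block $\beta_i^{e_i}$ in $\rho$, a localized system obtained from a variant of Lemma~\ref{lem:ineq:cycles}, while carefully threading through the accumulated displacement of all preceding blocks and taking the function $\chi$ into account. First I would fix, for each $i \in [1,k]$, the configuration $p_i(\vec{u}_i)$ reached just after reading $\alpha_0 \beta_1^{e_1} \alpha_1 \cdots \beta_{i-1}^{e_{i-1}} \alpha_{i-1}$; note that $\vec{u}_i = \vec{u} + \delta(\alpha_0 \cdots \alpha_{i-1}) + \sum_{j<i} e_j \cdot \delta(\beta_j)$, so $\vec{u}_i$ is an affine function of $(e_1,\ldots,e_k)$ with coefficients bounded by $|\rho| \cdot \norm{T}$. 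Then the run $\pi = \alpha_0 \beta_1^{e_1} \alpha_1 \cdots \beta_k^{e_k} \alpha_k$ is a valid $\N^d$-run from $p(\vec{u})$ to $q(\vec{u}+\delta(\pi))$ precisely when, along each $\alpha_i$-segment and each $\beta_i^{e_i}$-segment, no counter drops below zero. For the $\alpha_i$-segments this gives, for each prefix $\alpha_i[1,j]$, the constraint $\vec{u}_i + \delta(\alpha_i[1,j]) \ge \vec{0}$ (when $\chi(i)=0$, i.e.\ $e_i = 0$, we replace $\vec{u}_{i+1}$ in the relevant place accordingly); for the $\beta_i^{e_i}$-segments, when $\chi(i)=1$ we reuse the analysis of Lemma~\ref{lem:ineq:cycles} relative to the base point $\vec{u}_i$, and when $\chi(i)=0$ the block contributes nothing.

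The key technical point is exactly the same trick that makes Lemma~\ref{lem:ineq:cycles} work: a run of the form $\beta_i^{e_i}$ from $p_i(\vec{u}_i)$ stays nonnegative throughout iff it stays nonnegative on its first and its last traversal of $\beta_i$, which in turn can be expressed by finitely many inequalities that are \emph{linear} in the variables $e_1,\ldots,e_k$ (the first-traversal constraints are linear in $e_1,\ldots,e_{i-1}$ only, and the last-traversal constraints involve $e_i$ linearly with coefficient $\delta(\beta_i)$ plus the affine dependence on $e_1,\ldots,e_{i-1}$ through $\vec{u}_i$). I would also need to add the sign constraints: for $i$ with $\chi(i)=1$ the row $e_i \ge 1$, and for $i$ with $\chi(i)=0$ the row $-e_i \ge 0$, so that $\eval{\mathcal{I}}$ consists exactly of tuples realizing the prescribed sign pattern. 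Each such sub-block contributes at most $2m_i + O(1)$ rows where $m_i = |\beta_i|$, so the total number of rows is $O(k \cdot |\rho|)$; the statement allows $(d+1)\cdot k$ rows, which I would match by taking, as in Lemma~\ref{lem:ineq:cycles}, coordinate-wise maxima over the $|\beta_i|$ inner prefixes (and over the $\alpha_i$ prefixes) to collapse each block down to $d+1$ rows — one sign row plus $d$ coordinate rows.

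For the norm bounds: every entry of $A$ is a partial sum of displacements of at most $|\rho|$ transitions of $\beta_j$'s, summed over at most $k$ blocks, so $\norm{A} \le k \cdot |\rho| \cdot \norm{T}$ as required; every entry of $\vec{c}$ is $\norm{\vec{u}}$ plus a sum of at most $O(|\rho|)$ transition displacements, giving $\norm{\vec{c}} \le O(\norm{\vec{u}} + |\rho| \cdot \norm{T})$. The correctness direction ``$\Leftarrow$'' follows by reading off from a solution $\vec{e}$ the run $\pi = \alpha_0 \beta_1^{e_1}\cdots\beta_k^{e_k}\alpha_k$ and checking segment by segment that all intermediate configurations are in $\N^d$ (using the first/last-traversal reduction in reverse), and ``$\Rightarrow$'' is immediate since a valid run satisfies every listed inequality. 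The main obstacle I expect is bookkeeping rather than conceptual: correctly handling the case $\chi(i)=0$ (so that $\vec{u}_{i+1} = \vec{u}_i + \delta(\alpha_i)$, not involving $e_i$, and the $\beta_i$-block truly drops out) and making sure the ``coordinate-wise maximum'' collapse is sound — it is sound precisely because all the constraints within a block are of the form ``(same left-hand side linear form) $\ge$ (constant vector depending on the prefix)'', so taking the maximum constant per coordinate yields an equivalent single vector inequality, exactly as in the proof of Lemma~\ref{lem:ineq:cycles}.
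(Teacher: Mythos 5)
Your proposal is correct and follows essentially the same route as the paper's proof: non-negativity constraints on the accumulated displacement along the $\alpha_i$-prefixes, the first/last-traversal reduction of Lemma~\ref{lem:ineq:cycles} for each cycle block (sound because the counter value is linear in the traversal index), sign rows for the $e_i$, and the collapse to $d$ rows per block by taking coordinate-wise maxima of right-hand sides sharing a common left-hand side. If anything, you treat the general sign pattern $\chi$ more explicitly than the paper, which only works out the case $\chi\equiv 1$ and declares the rest analogous.
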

\begin{proof}
  We only prove the lemma for the concrete function $\chi:[1,k]\rightarrow\{0,1\}$, where $\chi(i)=1$ for all $i\in[1,k]$.
  In the following, we write $\vec{x} = (x_1,\ldots,x_k)$. First, we
  assert that the solutions $e_i$ are greater or equal to $1$, i.e.,
  \begin{align}
    \label{eqn:greater-zero}
    I_k \cdot \vec{x} & \ge \vec{1},
  \end{align}
  where $I_k$ is the $k$-th unit matrix and $\vec{1}=(1,\ldots,
  1)$. Next, informally speaking, we have to construct $\mathcal{I}$
  in a way such that we assert that the counter value does not drop
  below zero on any infix of $\rho$ in any dimension. For segments of
  $\rho$ between cycles, this can be ensured by the following
  constraints for every $j \in [0,k]$ and $\ell \in
  [1,\abs{\alpha_j}]$, which simply enforce the accumulated counter
  value to be non-negative:
  \begin{align}
    \notag & \vec{u} + \sum_{0\le i<j} \left( \delta(\alpha_i) + \delta(\beta_{i+1})
    \cdot x_{i+1} \right) + \delta(\alpha_j[1,\ell]) \quad\ge\quad \vec{0}\\
    \label{eqn:in-between}
    \iff\qquad & \sum_{1\le i \le j} \delta(\beta_{i})\cdot x_{i} \quad\ge\quad 
    -\vec{u} - \sum_{0\le i<j} \delta(\alpha_i) - \delta(\alpha_j[1,\ell])
  \end{align}

  For counter values which, informally speaking, occur along cycles
  $\beta_j$ of $\rho$, we follow the construction from
  Lemma~\ref{lem:ineq:cycles} and assert the following constraints for
  every $j \in [1,k]$ and $\ell \in [1,\abs{\beta_j}]$:
  \begin{align}
    \notag & \vec{u} +\delta(\alpha_0) +  \sum_{1 \le i< j} \left(\delta(\beta_{i})
    \notag \cdot x_i + \delta(\alpha_i) \right) + \delta(\beta_j[1,\ell])\quad \ge\quad
    \vec{0} \\
    \notag & \vec{u} + \delta(\alpha_0) +  \sum_{1 \le i< j} \left(\delta(\beta_{i}) 
    \cdot x_i + \delta(\alpha_i) \right) + \delta(\beta_j) \cdot (x_j - 1) +  
    \delta(\beta_j[1,\ell]) \quad\ge\quad \vec{0}\\
    \label{eqn:first-traversal}
    \iff\qquad & \sum_{1 \le i\le j-1} \delta(\beta_{i})
    \cdot x_i \quad\ge\quad -\vec{u} - \sum_{0\le i < j}\delta(\alpha_i) - 
    \delta(\beta_j[1,\ell])\\ 
    \label{eqn:last-traversal}
    & \sum_{1 \le i\le j} \delta(\beta_{i}) 
    \cdot x_i \quad\ge\quad -\vec{u} - \sum_{0 \le i < j} \delta(\alpha_i)
    + \delta(\beta_j) - \delta(\beta_j[1,\ell])
  \end{align}
  By our construction, it is easily verified that for every
  $\vec{e}=(e_1,\ldots,e_k)\N^k$ we have
$\chi(i)=1$ for all $i\in[1,k]$ and
  $p(\vec{u}) \xrightarrow{\alpha_0\beta_1^{e_1}\alpha_1 \cdots \beta^{e_k}\alpha_k}_{\N^d} q(\vec{u} + \delta(\pi))$ if,
  and only if, $\vec{e}$ fulfills \emph{all} constraints defined in
  (\ref{eqn:greater-zero}), (\ref{eqn:in-between}),
  (\ref{eqn:first-traversal}) and (\ref{eqn:last-traversal}). It thus
  remains to, informally speaking, extract the required system
  $\mathcal{I}$ of linear Diophantine inequalities from those
  constraints.

  For every fixed $j\in [1,k]$, by combining the constraints from
  (\ref{eqn:in-between}), (\ref{eqn:first-traversal}) and
  (\ref{eqn:last-traversal}), we obtain systems of linear Diophantine
  inequalities $\mathcal{I}_j': B_j\cdot \vec{x} \ge \vec{d}_j$ such
  that $B_j$ consists of at most $d$ \emph{different} rows, since
  every $x_i$ is multiplied by the \emph{same} $\delta(\beta_i)$. Let
  $A_j$ be the following $(d\times k)$-matrix: $A_j
  \defeq \begin{bmatrix} \delta(\beta_1) \cdots \delta(\beta_j) \;
    \vec{0} \cdots \vec{0}) \end{bmatrix}$.  For the $i$-th row of
  $A_i$, let $c_{j,i}\in \Z$ be the maximum value in $\vec{d}_j$ of
  the rows with the same coefficients in $\mathcal{I}_j'$, similar as
  in the construction of $\vec{c}_j$ in
  Lemma~\ref{lem:ineq:cycles}. We define $\vec{c}_j\defeq (c_{j,1},
  \ldots, c_{j,d})$ and set $\mathcal{I}_j: A_j\cdot \vec{x} \ge
  \vec{c}_j$. By construction, we now have that $\vec{e}\in \N^k$ is a
  solution of $\mathcal{I}_j$ if, and only if, $\vec{e}$ is a solution
  to $\mathcal{I}_j'$ and in particular fulfills all relevant
  constraints in (\ref{eqn:in-between}), (\ref{eqn:first-traversal})
  and (\ref{eqn:last-traversal}).

  In order to obtain the matrix $A$ and $\vec{c}$ required in the
  lemma, we define
  \begin{align*}
    A \quad\defeq\quad \begin{bmatrix} I_k \\ A_1 \\ \vdots \\ A_k \end{bmatrix}\qquad
    \text{ and }\qquad \vec{c} \quad\defeq\quad \begin{vmatrix} \vec{1}\\\vec{c}_1 \\
    \vdots\\ \vec{c}_k \end{vmatrix}\qquad.
  \end{align*}
  The dimension of $A$ and $\vec{c}$ is as required. It thus remains
  to estimate the norm of $A$ and $\vec{c}$. We have
  \begin{align*}
    \norm{A} \quad\le\quad \sum_{1\le i \le k} \norm{\delta(\beta_i)} \quad\le\quad k
    \cdot \abs{\rho} \cdot \norm{T}\quad.
  \end{align*}
  For $\vec{c}$, the following inequality bounds the norm of the
  right-hand sides of (\ref{eqn:in-between}),
  (\ref{eqn:first-traversal}) and (\ref{eqn:last-traversal}):
  \begin{align*}
    \norm{\vec{c}} & \quad\le\quad \norm{\vec{u}} + 2 \cdot \abs{\rho} \cdot \norm{T}
  \end{align*}
\end{proof}

By application of Proposition~\ref{prop:schrijver-bound}, this lemma
now enables us to give bounds on the length of a run witnessing
reachability for two given configurations.

\begin{lemma}\label{lem:runs:bounds}
  Let $V = (Q,T)$ be a $d$-VASS, let $p(\vec{u})$ and $q(\vec{v})$ be
  configurations of $V$, and let $\rho = \alpha_0 \beta_1^*\alpha_1 \cdots
  \beta_k^*\alpha_k$ be a linear path scheme from $p$ to $q$. Then $p(\vec{u})
  \xrightarrow{\rho}_{\N^d} q(\vec{v})$ if, and only if, $p(\vec{u})
  \xrightarrow{\pi}_{\N^d} q(\vec{v})$ for some $\pi = \alpha_0
  \beta_1^{e_1}\alpha_1 \cdots \beta_k^{e_k}\alpha_k$ such that $e_i
  \le 2^{k^{O(1)}}\cdot O(\norm{\vec{u}} + \norm{\vec{v}} + \abs{\rho}
  \cdot \norm{T})$ for each $i\in[1,k]$.
\end{lemma}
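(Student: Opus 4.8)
The plan is as follows. The ``if'' direction is immediate, since any $\pi$ of the prescribed shape $\alpha_0\beta_1^{e_1}\alpha_1\cdots\beta_k^{e_k}\alpha_k$ lies in the language of $\rho$, so $p(\vec{u})\xrightarrow{\pi}_{\N^d}q(\vec{v})$ entails $p(\vec{u})\xrightarrow{\rho}_{\N^d}q(\vec{v})$. For the ``only if'' direction I would phrase the question ``is there a choice of exponents $e_1,\ldots,e_k$ making $\alpha_0\beta_1^{e_1}\alpha_1\cdots\beta_k^{e_k}\alpha_k$ a run from $p(\vec{u})$ to $q(\vec{v})$?'' as feasibility of a single system of linear Diophantine inequalities, and then extract a small solution via Proposition~\ref{prop:schrijver-bound}.

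Concretely, I would start from an arbitrary witnessing run $p(\vec{u})\xrightarrow{\alpha_0\beta_1^{f_1}\alpha_1\cdots\beta_k^{f_k}\alpha_k}_{\N^d}q(\vec{v})$ and let $\chi(i)\defeq\text{sign}(f_i)$ be its sign pattern. Applying Lemma~\ref{lem:lps-reachability-sldi} with this $\chi$ gives a system $\mathcal{I}=\mathcal{I}(\vec{u},\rho,\chi):A\cdot\vec{x}\ge\vec{c}$ whose solutions are exactly the exponent vectors $\vec{e}$ for which $\alpha_0\beta_1^{e_1}\alpha_1\cdots\beta_k^{e_k}\alpha_k$ is a valid run (never dropping below $\vec{0}$) with sign pattern $\chi$. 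Since Lemma~\ref{lem:lps-reachability-sldi} only guarantees validity of the run, not that it ends in $\vec{v}$, I would augment $\mathcal{I}$ with the two inequality systems encoding the target equation $\sum_{i=1}^k\delta(\beta_i)\cdot x_i=\vec{v}-\vec{u}-\delta(\alpha_0\cdots\alpha_k)$, obtaining a system $\mathcal{I}^*:A^*\cdot\vec{x}\ge\vec{c}^*$ with $2d$ additional rows whose solutions are precisely the exponent vectors of runs from $p(\vec{u})$ to $q(\vec{v})$ with sign pattern $\chi$. The vector $(f_1,\ldots,f_k)$ is such a solution, so $\mathcal{I}^*$ is feasible.

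Finally I would feed $\mathcal{I}^*$ into Proposition~\ref{prop:schrijver-bound}: the system has $k$ variables, so it yields a solution $\vec{e}$ with $\norm{\vec{e}}\le 2^{k^{O(1)}}\cdot O(\norm{A^*}+\norm{\vec{c}^*})$; the corresponding $\pi=\alpha_0\beta_1^{e_1}\alpha_1\cdots\beta_k^{e_k}\alpha_k$ is then a run from $p(\vec{u})$ to $q(\vec{v})$ and $e_i\le\norm{\vec{e}}$ for each $i$. It then remains to bound the norms. From Lemma~\ref{lem:lps-reachability-sldi}, $\norm{A}\le k\cdot\abs{\rho}\cdot\norm{T}$ and $\norm{\vec{c}}\le O(\norm{\vec{u}}+\abs{\rho}\cdot\norm{T})$; the $2d$ extra rows have entries among $\pm\delta(\beta_i)$ (absolute value at most $\abs{\beta_i}\cdot\norm{T}\le\abs{\rho}\cdot\norm{T}$) and right-hand sides bounded by $\norm{\vec{u}}+\norm{\vec{v}}+\abs{\rho}\cdot\norm{T}$, so $\norm{A^*}\le k\cdot\abs{\rho}\cdot\norm{T}$ and $\norm{\vec{c}^*}\le O(\norm{\vec{u}}+\norm{\vec{v}}+\abs{\rho}\cdot\norm{T})$. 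Substituting, using $k\le\abs{\rho}$, and noting that the polynomial factor $k$ in the product $2^{k^{O(1)}}\cdot k\cdot\abs{\rho}\cdot\norm{T}$ is absorbed into $2^{k^{O(1)}}$, yields $e_i\le 2^{k^{O(1)}}\cdot O(\norm{\vec{u}}+\norm{\vec{v}}+\abs{\rho}\cdot\norm{T})$ for every $i$, as claimed. None of this is deep; the one point to be careful about is identifying the target constraint as exactly what Lemma~\ref{lem:lps-reachability-sldi} does not provide, and then verifying that adding it keeps the norm of the system within the same bound so that Schrijver's estimate still gives the stated expression after the $k$-factor absorption.
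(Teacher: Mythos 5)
Your proposal is correct and follows essentially the same route as the paper: fix the sign pattern $\chi$ of a witnessing run, invoke Lemma~\ref{lem:lps-reachability-sldi} for the non-negativity constraints, adjoin the target equation $\sum_i\delta(\beta_i)\cdot x_i=\vec{v}-\vec{u}-\delta(\alpha_0\cdots\alpha_k)$ as a pair of inequality blocks, and apply Proposition~\ref{prop:schrijver-bound} to the combined feasible system, with the same norm bookkeeping. Your writeup is if anything slightly more explicit than the paper about why the combined system is feasible (the original exponent vector is a solution) and about absorbing the factor $k$ into $2^{k^{O(1)}}$.
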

\begin{proof}
  The set of those $e_1,\ldots,e_k \in \N$ that achieve
  $\vec{u}+\delta(\pi)=\vec{v}$ can be obtained from the set of
  solutions of the system $\mathcal{E}:B\cdot \vec{x} = \vec{d}$ of
  linear Diophantine equations with unknowns $\vec{x} =
  (x_1,\ldots,x_k)$, where
  \begin{align*}
    A \quad\defeq\quad \begin{bmatrix}\delta(\beta_1) \cdots
      \delta(\beta_k) \end{bmatrix} \qquad\text{ and }\qquad \vec{d} \quad\defeq\quad \vec{v} -
    \vec{u} - \sum_{0 \le i \le k} \delta(\alpha_i)\qquad.
  \end{align*}
  The constraint matrix of $\mathcal{E}$ is of dimension $d\times k$
  and has norm bounded by $\abs{\rho}\cdot \norm{T}$. The norm of the
  right-hand side of $\mathcal{E}$ is bounded by $\norm{\vec{u}} +
  \norm{\vec{v}} + \abs{\rho} \cdot \norm{T}$.
  Let us fix an arbitrary $\chi:[1,k]\rightarrow\{0,1\}$.
  Lemma~\ref{lem:lps-reachability-sldi} yields a system of linear
  Diophantine inequalities $\mathcal{I}=\mathcal{I}(\vec{u},\rho,\chi)$
of the form $\mathcal{I} : A\cdot \vec{x} \ge \vec{c}$
  whose set of solutions $\vec{e}=(e_1,\ldots,e_k)\in\N$ corresponds to all runs $p(\vec{u})\xrightarrow{\pi}_{\N^2}q(\vec{u}+\delta(\pi))$, 
where $\pi=\alpha_0\beta_1^{e_1}\alpha_1\cdots\beta_k^{e_k}\alpha_k$
and $\chi(i)=\text{sign}(e_i)$ for all $i\in[1,k]$.
  Consequently, for any $(e_1,\ldots,e_k) \in
  \eval{\mathcal{I}} \cap \eval{\mathcal{E}}$ and $\pi = \alpha_0
  \beta_1^{e_1}\alpha_1 \cdots \beta_k^{e_k}\alpha_k$, we have
  $p(\vec{u}) \xrightarrow{\pi}_{\N^d} q(\vec{v})$ and $\chi(i)=\text{sign}(e_i)$ for all $i\in[1,k]$. 
 Now we obtain
  $\mathcal{I}\cap \mathcal{E}$ as
  \begin{align*}
    \mathcal{I}\cap \mathcal{E}\quad:\quad \begin{bmatrix} A\\ B\\ -B \end{bmatrix}
    \cdot \vec{x} \quad\ge\quad \begin{vmatrix} \vec{c}\\ \vec{d}\\ -\vec{d} \end{vmatrix}\quad.
  \end{align*}
  From Lemma~\ref{lem:lps-reachability-sldi} and our observations
  above we conclude that the norm of the constraint matrix of
  $\mathcal{I}\cap \mathcal{E}$ is bounded by $k\cdot \abs{\rho} \cdot
  \norm{T}$. Moreover, the norm on right-hand side is bounded by
  $O(\norm{\vec{u}} + \norm{\vec{v}} + \abs{\rho} \cdot \norm{T})$. By
  application of Proposition~\ref{prop:schrijver-bound}, the bounds on
  the solutions of $\mathcal{I}\cap\mathcal{E}$ follow.
\end{proof}

An immediate corollary of Lemma~\ref{lem:runs:bounds} and
Theorem~\ref{main1} is that reachability in 2-VASS is in PSPACE.
\begin{corollary}
  \label{cor:2-vass-reachability}
  Reachability in 2-VASS is in PSPACE.
\end{corollary}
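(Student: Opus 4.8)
The plan is to combine the structural result of Theorem~\ref{main1} with the length bound of Lemma~\ref{lem:runs:bounds} and then observe that a run of exponential length can be guessed and verified on the fly in polynomial space. First I would invoke Theorem~\ref{main1} to obtain, for the given instance $p(\vec u)\xrightarrow{*}_{\N^2}q(\vec v)$, that reachability holds if and only if $p(\vec u)\xrightarrow{\rho}_{\N^2}q(\vec v)$ for some linear path scheme $\rho=\alpha_0\beta_1^*\alpha_1\cdots\beta_k^*\alpha_k$ with $\abs\rho\le(\abs Q+\norm T)^{O(1)}$ and $k\le O(\abs Q^2)$. Crucially, the \emph{set} $S$ of such linear path schemes need not be written down explicitly: each candidate $\rho\in S$ has polynomial length, so a nondeterministic machine can guess $\rho$ symbolically using only polynomially many bits (it guesses a polynomial-length sequence of transitions together with the positions of the $k$ starred cycles and checks that the cycles are indeed cycles and that the concatenation is a genuine path).

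Next I would apply Lemma~\ref{lem:runs:bounds} with $d=2$: if $p(\vec u)\xrightarrow{\rho}_{\N^2}q(\vec v)$ then there exist exponents $e_1,\ldots,e_k$ with $e_i\le 2^{k^{O(1)}}\cdot O(\norm{\vec u}+\norm{\vec v}+\abs\rho\cdot\norm T)$ realising the run. Since $k$ is polynomial in $\abs Q$ and $\abs\rho$, $\norm{\vec u}$, $\norm{\vec v}$ are all polynomial in the size of the input (numbers encoded in binary), each $e_i$ has a binary representation of polynomial length, and the total length of the witnessing run $\pi=\alpha_0\beta_1^{e_1}\alpha_1\cdots\beta_k^{e_k}\alpha_k$ is at most exponential in $\abs V+\norm{\vec u}+\norm{\vec v}$. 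The algorithm then guesses the exponents $e_1,\ldots,e_k$ in binary (polynomial space) and verifies that $\pi$ is an $\N^2$-run from $p(\vec u)$ to $q(\vec v)$. Verification is done by simulating $\pi$ while maintaining only the current configuration: the two counter values never exceed roughly $\norm{\vec u}+\norm{\vec v}+\abs\pi\cdot\norm T$, which is at most exponential, hence representable in polynomial space, and we keep a polynomial-size counter telling us how far along each block $\beta_i^{e_i}$ we have progressed. Thus the whole simulation runs in nondeterministic polynomial space, and by Savitch's theorem this gives a (deterministic) $\PSPACE$ procedure.

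I do not expect any real obstacle here: the genuinely hard work — the flattening into short linear path schemes and the integer-programming length bound — has already been carried out in Theorem~\ref{main1} and Lemma~\ref{lem:runs:bounds}. The only points requiring a little care are (i) making precise that one guesses $\rho$ symbolically rather than enumerating the (possibly exponentially large, but never explicitly constructed) set $S$, and (ii) checking that the counter values stay at most exponentially large along the witnessing run so that the on-the-fly simulation stays within polynomial space; both follow immediately from the stated polynomial bound on $\abs\rho$ and the exponential bound on the $e_i$. Having assembled these pieces, Corollary~\ref{cor:2-vass-reachability} follows at once.

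\medskip

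As a remark, combined with the $\PSPACE$ lower bound obtained in Section~\ref{sec:reachability-pspace} via the reduction from reachability in bounded one-counter automata~\cite{FJ13}, Corollary~\ref{cor:2-vass-reachability} yields Theorem~\ref{main2}.
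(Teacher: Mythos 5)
Your proposal is correct and follows essentially the same route as the paper: invoke Theorem~\ref{main1} to restrict attention to polynomial-length linear path schemes, apply Lemma~\ref{lem:runs:bounds} to bound the exponents $e_i$ exponentially, and guess and simulate the witnessing run on the fly in nondeterministic polynomial space (the paper leaves the symbolic guessing of $\rho$ and the appeal to Savitch implicit). One small wording slip: under binary encoding $\norm{\vec u}$ and $\norm{\vec v}$ are at most \emph{exponential} in the input size, not polynomial, but since the bound on each $e_i$ is then still singly exponential, its binary representation remains polynomial and your conclusion stands.
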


\begin{proof}
  Let $V = (Q, T)$ be a 2-VASS and $p(\vec{u}), q(\vec{v})$ be
  configurations of $V$. By Theorem~\ref{main1}, there exists a set
  $S$ of linear path schemes such that
  \begin{itemize}
  \item $p(\vec{u}) \xrightarrow{*}_{\N^2} q(\vec{v})$ if, and only
    if, $p(\vec{u}) \xrightarrow{S}_{\N^2} q(\vec{v})$,
  \item $|\rho| \leq (|Q| + \norm{T})^{O(1)}$ for every $\rho \in S$,
    and
  \item each $\rho \in S$ has at most $O(|Q|^2)$ cycles.
  \end{itemize}

  By Lemma~\ref{lem:runs:bounds}, if $p(\vec{u})
  \xrightarrow{\rho}_{\N^2} q(\vec{v})$ for some $\rho=\alpha_0\beta_1^*\alpha_1\cdots\beta_k^*\alpha_k \in S$ then
  $p(\vec{u}) \xrightarrow{\pi}_{\N^2} q(\vec{v})$ for some $\pi=\alpha_0\beta_1^{e_1}\alpha_1\cdots\beta_k^{e_k}\alpha_k \in
  \rho$ such that $e_1,\ldots,e_k\in[0,e]$, where $e$ can be bounded as
  \begin{align*}
    e & \quad\leq\quad 2^{|Q|^{O(1)}}\cdot 
    O\left(\norm{\vec{u}} + \norm{\vec{v}} + (\abs{Q} + \norm{T})^{O(1)}\cdot \norm{T}
    \right)\\
    & \quad\leq\quad 2^{{(\abs{V} + \log \norm{\vec{u}} + \log \norm{\vec{v}})}^{O(1)}}\quad.
  \end{align*}
 Since $\abs{\pi}\le \abs{\rho}\cdot e$, the run
 $p(\vec{u})\xrightarrow{\pi}_{\N^2}q(\vec{v})$ can be guessed
 nondeterministically in polynomial space by storing only the
 intermediate configurations in an on-the-fly manner.  Consequently,
 reachability in 2-VASS in PSPACE.
\end{proof}

In order to complete the proof of Theorem~\ref{main2}, it remains to
show hardness for PSPACE. We reduce from reachability in bounded
one-counter automata, which is known to be
PSPACE-complete~\cite{FJ13}. A bounded one-counter automaton is given
by a tuple $V=(Q,T,b)$, where $(Q,T)$ is a 1-VASS and $b\in \N$ is a
bound encoded in binary. Let $\B=[0,b]$, given configurations
$p(u),q(u)$ of $V$ such that $u,v\in \B$, reachability is to decide
whether $p(u) \xrightarrow{*}_\B q(v)$.
\begin{lemma}
  \label{lem:2-vass-pspace-hard}
  Reachability in 2-VASS is PSPACE-hard.
\end{lemma}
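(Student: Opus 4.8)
The plan is to give a polynomial-time reduction from reachability in bounded one-counter automata, which is PSPACE-complete by~\cite{FJ13}, to {\sc 2-VASS Reachability}. Given a bounded one-counter automaton $V = (Q,T,b)$ with $\B = [0,b]$ and configurations $p(u), q(v)$ with $u,v \in \B$, the key idea is to simulate the upper bound $b$ on the single counter of $V$ by a second counter that always holds the \emph{complement} value $b - c$, where $c$ is the current value of the counter of $V$. Concretely, I would construct the 2-VASS $V' = (Q, T')$ over the same state set, replacing every transition $t = (p, z, q) \in T$ by $t' = (p, (z, -z), q) \in T'$, and ask whether $p(u, b-u) \xrightarrow{*}_{\N^2} q(v, b-v)$ in $V'$. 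By construction the quantity $x_1 + x_2$ is invariant along any run of $V'$, so every run started from $p(u, b-u)$ keeps $x_1 + x_2 = b$.

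For correctness I would show that $p(u) \xrightarrow{*}_{\B} q(v)$ in $V$ if, and only if, $p(u, b-u) \xrightarrow{*}_{\N^2} q(v, b-v)$ in $V'$, by induction on run length. For the forward direction, if $r(c) \xrightarrow{t}_{\B} r'(c')$ with $t = (r,z,r')$ then $c' = c + z \in [0,b]$, hence $c' \ge 0$ and $b - c' = (b-c) - z \ge 0$, so $r(c, b-c) \xrightarrow{t'}_{\N^2} r'(c', b-c')$ in $V'$. For the backward direction, along any $\N^2$-run of $V'$ from $p(u,b-u)$ the invariant $x_1 + x_2 = b$ holds, and the constraint $x_1 \ge 0 \wedge x_2 \ge 0$ is equivalent to $0 \le x_1 \le b$; projecting onto the first coordinate therefore yields a $\B$-run of $V$ from $p(u)$ to $q(v)$.

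Finally, the reduction runs in polynomial time: since $b$ is encoded in binary and $u, v \in [0,b]$, the numbers $b - u$ and $b - v$ are computable in polynomial time and have polynomial size, and $V'$ has the same number of states and transitions as $V$ with $\norm{T'} = \norm{T}$ under the standard binary encoding. I expect no substantive obstacle here: the only thing to verify is that the complement trick faithfully encodes the range restriction to $\B$, which is immediate from the counter-sum invariant, and whether $T$ uses unary or binary updates is irrelevant since the construction preserves the magnitude of every update. Together with Corollary~\ref{cor:2-vass-reachability}, this completes the proof of Theorem~\ref{main2}.
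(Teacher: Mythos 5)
Your reduction is exactly the one the paper uses: the same map $(p,z,q)\mapsto(p,(z,-z),q)$ and the same embedding $q(z)\mapsto q(z,b-z)$, with correctness via the counter-sum invariant. The proposal is correct and matches the paper's proof of Lemma~\ref{lem:2-vass-pspace-hard}.
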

\begin{proof}
  Let $V=(Q,T,b)$ be a bounded one-counter automaton, and let
  $V'\defeq (Q,T')$ be the 2-VASS obtained from $V$ by setting
  $T'\defeq \{ h(t) : t\in T\}$, where $h(p,z,q)\defeq
  (p,(z,-z),q)$. We define an injection $\varphi$ from 
  configurations of $V$ to configurations of $V'$ as follows:
  \begin{align*}
    \varphi(q(z)) & \quad\defeq\quad q(z,b-z)
  \end{align*}
  For any path $\pi$, it is now easily checked by induction on
  $\abs{\pi}$ that
  \begin{align*}
    p(u) \xrightarrow{\pi}_\B q(v) \text{ in $V$}\quad \iff\quad \varphi(p(u))
    \xrightarrow{h(\pi)}_{\N^2} \varphi(q(v)) \text{ in $V'$}\quad.
  \end{align*}
\end{proof}

This concludes the proof of Theorem~\ref{main2} and shows that
reachability in 2-VASS is PSPACE-complete.

\subsection{Reachability in 2-VASS with Unary Updates}
\label{sec:reachability-unary}
For unary 2-VASS we can show that reachability is in NP and NL-hard. 

Given a unary 2-VASS $V$, by Theorem~\ref{main1} whenever $p(\vec{u})
\xrightarrow{*}_{\N^2} q(\vec{v})$ then there exists a linear path
scheme $\rho = \alpha_0\beta_1^*\alpha_1 \cdots \beta_k^*\alpha_k$
whose length is \emph{polynomial} in $\abs{V}$ such that $p(\vec{u})
\xrightarrow{\rho}_{\N^2} q(\vec{v})$. Moreover, the proof of
Corollary~\ref{cor:2-vass-reachability} shows that there exist
$e_1,\ldots,e_k\leq 2^{(\abs{V} + \log \norm{\vec{u}} + \log
  \norm{\vec{v})})^{O(1)}}$ such that for
$\pi=\alpha_0\beta^{e_1}\alpha_1 \cdots \beta^{e_k}\alpha_k$, we have
$p(\vec{u}) \xrightarrow{\pi}_{\N^2} q(\vec{v})$. In particular, every
$e_i$ can be represented using a polynomial number of bits. Hence,
$(\rho,e_1,\ldots,e_k)$ may serve as a certificate that can be guessed
in polynomial time. It remains to show that this certificate can be
verified in polynomial time. Checking that $\rho$ is a linear path
scheme is easily verified in polynomial time. In order to check if
$p(\vec{u}) \xrightarrow{\pi}_{\N^2} q(\vec{v})$ in polynomial time we
can construct the system of linear Diophantine equations from
Lemma~\ref{lem:lps-reachability-sldi} and verify that
$\vec{e}=(e_1,\ldots,e_k)$ is a solution to this system. This shows
that reachability in unary 2-VASS is in NP.

NL-hardness of reachability trivially follows from NL-hardness of
reachability in directed graphs. Here, we wish to slightly strengthen
this result and remark that reachability is NL-hard already for unary
2-VASS, whose underlying graph corresponds structurally to a linear
path scheme (formally, every state lies on at most one cycle and the
deletion of all cycles yields a union of isolated vertices and a
cycle-free path, cf.~Figure~\ref{fig:lps} at the beginning of this
document). Let $G=(U,E)$ be a directed graph such that
$U=\{u_0,\ldots, u_{m-1}\}$ and $E=\{ e_0,\ldots,e_{n-1} \} \subseteq
U\times U$. We define an injection $h:U\to [0,m-1]^2$ as
$h(u_i)=(i,m-1-i)$ that relates vertices of $G$ with vectors from
bounded intervals.  Let $\ell\defeq m\cdot n-1$, the flat unary 2-VASS
$V=(Q,T)$ can now be defined as follows:
\begin{align*}
  Q & \quad\defeq\quad \{ q_0, q_0',\ldots, q_{\ell}, q_{\ell}' \}\\
  T & \quad\defeq\quad \phantom{\cup} \{ (q_j, \vec{0}, q_{j+1}) : j\in [0,\ell-1] \}\\
  & \quad\phantom{\defeq}\quad \cup \{ (q_j, -h(u_i), q_j'), (q_j', h(u_k), q_j) :
  e_j= (u_i,u_k), i=j\bmod n, i\in [0, \ell] \}\quad.
\end{align*}
Suppose we wish to decide whether $u_{m-1}$ is reachable from $u_0$, we claim that
this is the case if, and only if, $q_0(h(u_0)) \xrightarrow{*}_{\N^2}
q_{\ell}(h(u_{m-1}))$. Informally speaking, the vertex currently visited
along a path is encoded in the counter values of $V$. Every loop
between $q_j$ and $q_j'$ allows for simulating the edge $e_{(j \bmod
  n)}=(u_i,u_k)$ of $G$. The transition from $q_j$ to $q_j'$ can only
be traversed if the vertex encoded into the current counter values
corresponds to $u_i$. If we are able to reach $q_j'$, the transition
back to $q_j$ then updates the currently visited vertex to
$u_k$. Since a path from $u_0$ to $u_{m-1}$ of minimal length in $G$ traverses at most $m$ vertices, $\ell+1 =
m\cdot n$ states $q_j$ suffice.
\begin{theorem}
  Reachability in unary 2-VASS is in NP and NL-hard.
\end{theorem}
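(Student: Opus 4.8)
The plan is to establish the two halves of the statement separately; both reduce to results already in hand.

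\textbf{Membership in NP.} First I would invoke Theorem~\ref{main1}: whenever $p(\vec{u}) \xrightarrow{*}_{\N^2} q(\vec{v})$ there is a linear path scheme $\rho = \alpha_0\beta_1^*\alpha_1\cdots\beta_k^*\alpha_k$ with $p(\vec{u}) \xrightarrow{\rho}_{\N^2} q(\vec{v})$, $|\rho| \le (|Q|+\norm{T})^{O(1)}$, and $k = O(|Q|^2)$. Under unary encoding $\norm{T} \le |V|$, so both $|\rho|$ and $k$ are polynomial in $|V|$. Next, the argument already carried out in the proof of Corollary~\ref{cor:2-vass-reachability} (via Lemma~\ref{lem:runs:bounds} and Proposition~\ref{prop:schrijver-bound}) yields exponents $e_1,\ldots,e_k \le 2^{(|V|+\log\norm{\vec{u}}+\log\norm{\vec{v}})^{O(1)}}$ with $p(\vec{u}) \xrightarrow{\alpha_0\beta_1^{e_1}\alpha_1\cdots\beta_k^{e_k}\alpha_k}_{\N^2} q(\vec{v})$, so each $e_i$ is representable with polynomially many bits. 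A nondeterministic machine thus guesses the certificate $(\rho, e_1,\ldots,e_k)$ in polynomial time. For verification, checking that $\rho$ is a linear path scheme over $V$ is immediate, and checking that $(e_1,\ldots,e_k)$ genuinely induces a run from $p(\vec{u})$ to $q(\vec{v})$ is done by assembling the system of linear Diophantine (in)equalities of Lemma~\ref{lem:lps-reachability-sldi} for $\chi(i)=\text{sign}(e_i)$, together with the displacement equation $\vec{u}+\delta(\pi)=\vec{v}$, and evaluating it at $\vec{e}$; all of this is polynomial-time integer arithmetic.

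\textbf{NL-hardness.} I would reduce, in logarithmic space, from reachability in directed graphs. Given $G=(U,E)$ with $U = \{u_0,\ldots,u_{m-1}\}$ and $E = \{e_0,\ldots,e_{n-1}\}$, I encode vertex $u_i$ by $h(u_i) = (i,\, m-1-i)$; the point of this complemented encoding is that $h$ is coordinate-wise monotone into a common interval, so $h(u_{i'}) \ge h(u_i)$ coordinate-wise forces $i'=i$. I then build the unary 2-VASS $V$ described in the excerpt: states $q_0,q_0',\ldots,q_\ell,q_\ell'$ with $\ell = m\cdot n - 1$, zero-update transitions $(q_j,\vec{0},q_{j+1})$, and for each $j$ a loop $(q_j,-h(u_i),q_j')$, $(q_j',h(u_k),q_j)$ where $e_j = (u_i,u_k)$ with $i = j\bmod n$. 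This $V$ is computable in logarithmic space and is structurally a linear path scheme. The claim is $q_0(h(u_0)) \xrightarrow{*}_{\N^2} q_\ell(h(u_{m-1}))$ iff $u_{m-1}$ is reachable from $u_0$ in $G$, which I would prove through the invariant that the set of counter values reachable at any state $q_j$ has the form $\{h(u): u \in R_j\}$ for some $R_j \subseteq U$; in particular it never leaves the lattice $h(U)$. The invariant is preserved because $(q_j,-h(u_i),q_j')$ is enabled from $h(u)$ precisely when $u=u_i$ (by monotonicity of $h$), landing at $\vec{0}$ at the transient state $q_j'$, whence $(q_j',h(u_k),q_j)$ moves deterministically to $h(u_k)$. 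Thus each round of $n$ slots lets one traverse any single edge of $G$ in the appropriate congruence class, and since a shortest $u_0$-to-$u_{m-1}$ path uses at most $m-1$ edges, the $m$ rounds provided by $\ell+1 = m\cdot n$ states suffice; running the argument backwards reads a walk of $G$ off any accepting run.

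\textbf{Main obstacle.} The substantial ingredients---Theorem~\ref{main1} and the integer-programming bounds---are already available, so no deep new idea is needed. The part demanding the most care is the ``no spoofing'' step of the NL-hardness reduction: arguing that counter values can never escape the lattice $h(U)$, so that the edge-simulating transition can fire only when the counters hold the genuine encoding of its source vertex, together with the bookkeeping that $m\cdot n$ state pairs are enough because a minimal path in $G$ has at most $m-1$ edges. For NP the only subtlety is confirming that, once updates are unary, the length of $\rho$, the number of its cycles, and the bit-length of each $e_i$ are all genuinely polynomial in $|V| + \log\norm{\vec{u}} + \log\norm{\vec{v}}$, and that verification reduces to polynomially sized arithmetic via Lemma~\ref{lem:lps-reachability-sldi}.
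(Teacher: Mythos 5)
Your proposal is correct and follows essentially the same route as the paper: NP membership via Theorem~\ref{main1}, the exponent bounds from Lemma~\ref{lem:runs:bounds}/Corollary~\ref{cor:2-vass-reachability}, and verification through the Diophantine system of Lemma~\ref{lem:lps-reachability-sldi}; NL-hardness via the same flat gadget with the complemented encoding $h(u_i)=(i,m-1-i)$. Your explicit lattice invariant for the ``no spoofing'' step merely spells out what the paper leaves informal.
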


\subsection{Derived Results}
\label{sec:reachability-remarks}

In this section, we explicitly state and remark some results that can
additionally be derived from the technical results established in this
paper.

\subsubsection{$\Z$-Reachability in Unary $d$-VASS is NL-complete for each fixed $d$}

The decomposition estabished in Proposition~\ref{P zreach} enables us
to obtain a new result on $\Z$-reachability of $d$-VASS when $d$ is
fixed. The complexity of this problem depends on the encoding of
numbers as well as the dimension $d$. When numbers are encoded in
binary, reachability is NP-complete even when
$d=1$~\cite{HH14,HKOW09}, and reachability is also NP-complete when
numbers are encoded in unary and $d$ is part of the input to the
problem~\cite{HH14}. By application of Proposition~\ref{P zreach} and
Corollary~\ref{cor:pottier-bound}, we can solve the case of
reachability under unary encoding of numbers for each fixed dimension
$d$.

\begin{theorem}
  For every fixed $d\geq 1$, $\Z$-reachability in unary $d$-VASS is
  NL-complete.
\end{theorem}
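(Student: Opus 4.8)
The plan is to prove the two inclusions separately, the lower bound being the easy half. For NL-hardness I would reduce from directed graph reachability: given a graph $G=(U,E)$ and vertices $s,t$, build the $d$-VASS $V=(U,T)$ that has the same underlying graph but labels every edge with the zero vector $\vec{0}\in\Z^d$. Then $\delta(\pi)=\vec{0}$ for every path $\pi$, so $s(\vec{0})\xrightarrow{*}_{\Z^d}t(\vec{0})$ holds if and only if $t$ is reachable from $s$ in $G$. This is a logarithmic-space reduction, so $\Z$-reachability in unary $d$-VASS is NL-hard already for $d=1$.

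For the upper bound, fix $d$ and let $V=(Q,T)$ be a unary $d$-VASS, so that $\norm{T},|Q|,|T|$ are polynomial in $|V|$, and let $p(\vec{u}),q(\vec{v})$ be configurations with $\norm{\vec{u}},\norm{\vec{v}}$ polynomial in the input size $n$. By Proposition~\ref{P zreach}, $p(\vec{u})\xrightarrow{*}_{\Z^d}q(\vec{v})$ holds if and only if there is a linear path scheme $\rho=\alpha_0\beta_1^*\alpha_1\cdots\beta_k^*\alpha_k$ from $p$ to $q$ with $|\rho|\le 2\cdot|Q|\cdot|T|$ and $k\le|T|$ such that $\vec{v}-\vec{u}\in\delta(\rho)$, i.e.\ such that $\vec{v}-\vec{u}=\delta(\alpha_0\cdots\alpha_k)+\sum_{i=1}^{k}e_i\cdot\delta(\beta_i)$ for some $e_1,\ldots,e_k\in\N$. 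Crucially, in the $\Z$-setting there are no counter-nonnegativity constraints along $\rho$, so (unlike in the proof of Theorem~\ref{main2}) no sign case distinction is needed and the existence of suitable exponents is just feasibility of a system $\mathcal{E}:A\vec{x}=\vec{d}$ of linear Diophantine equations, where $A=[\delta(\beta_1)\cdots\delta(\beta_k)]$ has $\norm{A}\le|\rho|\cdot\norm{T}$ and $\vec{d}=\vec{v}-\vec{u}-\delta(\alpha_0\cdots\alpha_k)$ has $\norm{\vec{d}}\le\norm{\vec{u}}+\norm{\vec{v}}+|\rho|\cdot\norm{T}$; both norms are polynomial in $n$. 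Hence Corollary~\ref{cor:pottier-bound} gives, whenever $\mathcal{E}$ is feasible, a solution $\vec{e}$ with $\norm{\vec{e}}\le(\norm{A}+\norm{\vec{d}})^{O(d)}$, which for fixed $d$ is again polynomial in $n$.

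It follows that $p(\vec{u})\xrightarrow{*}_{\Z^d}q(\vec{v})$ holds iff there are a linear path scheme $\rho$ from $p$ to $q$ of the above bounded shape and exponents $e_1,\ldots,e_k$ each bounded by some fixed polynomial $N(n)$ with $\vec{v}-\vec{u}=\delta(\alpha_0\cdots\alpha_k)+\sum_i e_i\delta(\beta_i)$, and this can be tested by a nondeterministic logarithmic-space algorithm. Starting in $p$, the algorithm guesses $\rho$ one transition at a time, interleaving segments $\alpha_j$ (guessed walks) with optional cycles: to process a cycle at the current state $r$ it guesses $e_j\le N(n)$, walks one loop from $r$ back to $r$ accumulating its displacement into a temporary register, and then adds $e_j\cdot\delta(\beta_j)$ to a global displacement register $\vec{s}\in\Z^d$. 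Throughout it stores only the current state, the accumulated displacement $\vec{s}$, a counter for $|\rho|$ (to enforce $|\rho|\le2|Q||T|$), a counter for the number of cycles used (to enforce $\le|T|$), and, while inside a cycle, the cycle's start state, a length counter, the multiplier $e_j$ and the partial cycle displacement; it accepts iff it finishes in state $q$ with $\vec{s}=\vec{v}-\vec{u}$. Soundness is immediate because an accepting run exhibits a genuine path $\alpha_0\beta_1^{e_1}\cdots\beta_k^{e_k}\alpha_k$ from $p$ to $q$ with displacement $\vec{v}-\vec{u}$; completeness is exactly Proposition~\ref{P zreach} together with Corollary~\ref{cor:pottier-bound}.

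The only real work — and the main thing I expect to need care — is the accounting that keeps \emph{every} quantity tracked by the algorithm polynomially bounded, so that each register fits in $O(\log n)$ bits and the whole computation runs in logarithmic space: the length of $\rho$ is polynomial by Proposition~\ref{P zreach}, the exponents are polynomial by Corollary~\ref{cor:pottier-bound} \emph{because $d$ is fixed}, and all partial displacements are then polynomial because $\norm{T}$, $\norm{\vec{u}}$, $\norm{\vec{v}}$ are polynomial \emph{because numbers are encoded in unary}. Once these bounds are established, the NL algorithm and its correctness argument are routine, and combined with the lower bound this yields NL-completeness.
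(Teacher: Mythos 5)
Your proof is correct and follows essentially the same route as the paper: the lower bound is the trivial reduction from directed graph reachability, and the upper bound combines Proposition~\ref{P zreach} with Corollary~\ref{cor:pottier-bound} to obtain polynomially bounded cycle exponents (using that $d$ is fixed and the encoding is unary), after which a polynomial-size witness is guessed in logarithmic space. The only immaterial difference is at the very end: the paper simply observes that the resulting path $\alpha_0\beta_1^{e_1}\cdots\beta_k^{e_k}\alpha_k$ has polynomial length and guesses it transition by transition, whereas you guess the linear path scheme together with its exponents and multiply out the cycle displacements --- both give the same NL bound.
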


\begin{proof}
  NL-hardness trivially follows from NL-hardness of reachability in
  directed graphs. Let $d\geq 1$ be fixed.
 Let $V = (Q, T)$ be a $d$-VASS and $p(\vec{u}),
  q(\vec{v}) \in Q \times \Z^d$ be two configurations as input to the $\Z$-reachability problem. 
By Proposition~\ref{P zreach}, there exists a finite set $S$ of linear
  path schemes such that
  \begin{itemize}
  \item $p(\vec{u}) \xrightarrow{*}_{\Z^d} q(\vec{v})$ if, and only
    if, $p(\vec{u}) \xrightarrow{S}_{\Z^d} q(\vec{v})$,
  \item $|\rho|\leq 2\cdot |Q|\cdot|T|$ for each $\rho \in S$, and
  \item each $\rho\in S$ has at most $|T|$ cycles.
  \end{itemize}

  Suppose $p(\vec{u}) \xrightarrow{*}_{\Z^d} q(\vec{v})$, then
  $p(\vec{u}) \xrightarrow{\rho}_{\Z^d} q(\vec{v})$ for some $\rho =
  \alpha_0 \beta_1^* \alpha_1 \cdots \beta_k^* \alpha_k \in S$ with
  $k\leq|T|$.  Let $\mathcal{E} : A \cdot \vec{x} = \vec{c}$ be the
  system of linear Diophantine equations, where
  $$A \defeq
  \begin{bmatrix}
    \delta(\beta_1) & \cdots & \delta(\beta_k)
  \end{bmatrix} \in \Z^{d \times k} \qquad \text{and} \qquad
  \vec{c} \defeq \vec{v} - \left(\vec{u} +
  \delta(\alpha_0\alpha_1\cdots\alpha_k)\right) \in \Z^d\qquad.
  $$ Then, we have
  \begin{eqnarray*}
    p(\vec{u}) \xrightarrow{\rho}_{\Z^d} q(\vec{v}) &\quad\iff\quad&
    p(\vec{u}) \xrightarrow{\alpha_0 \beta_1^{e_1} \alpha_1 \cdots
      \beta_k^{e_k} \alpha_k}_{\Z^d} q(\vec{v}) \text{ for some } \vec{e}=(e_1,
    \ldots, e_k) \in \N^k \\
    &\iff&  \vec{e} \in \eval{\mathcal{E}}\text{ for some }\vec{e}\in\N^{k}\quad.
  \end{eqnarray*}
  By Corollary~\ref{cor:pottier-bound}, if $\eval{\mathcal{E}}\neq
  \emptyset$ then $\mathcal{E}$ has a solution $\vec{e}$ such that
  $\norm{\vec{e}} \leq (\norm{A} + \norm{\vec{c}})^{O(d)}$. Hence, by
  definition of $A$ and $\vec{c}$, the norm of solutions can be
  bounded by some $b$, where 
  $$b \quad\leq\quad (|T| \cdot |\rho| \cdot \norm{T} + |\rho| \cdot \norm{T} +
  \norm{\vec{u}} + \norm{\vec{v}})^{O(d)} \quad\leq\quad ((|T| +
  \norm{T})^{O(1)} + \norm{\vec{u}} + \norm{\vec{v}})^{O(d)}\quad.$$ Since
  $\norm{T}, \norm{\vec{u}}$ and $\norm{\vec{v}}$ are encoded in unary
  (i.e. $|V|=|Q|+|T|\cdot d\cdot\norm{T}$)
  and $d$ is fixed, we obtain $b \leq |V|^{O(1)}$.

  Thus, $p(\vec{u}) \xrightarrow{\rho}_{\Z^d} q(\vec{v})$ implies that
  $p(\vec{u}) \xrightarrow{\pi}_{\Z^d} q(\vec{v})$ for some $\pi
  \in T^*$, where $|\pi| \leq b \cdot |\rho| \leq
  |V|^{O(1)}$. Therefore, in order to decide reachability it suffices
  to guess on-the-fly the intermediate configurations of a path of
  polynomial length from $p(\vec{u})$ to $q(\vec{v})$, which can be
  done nondeterministically in logarithmic space.
\end{proof}

\subsubsection{Boundedness and Coverability in $d$-VASS}
For the sake of completeness, here we wish to discuss some
consequences of PSPACE-hardness of reachability in 2-VASS to the
complexity of coverability and boundedness in $d$-VASS that were left
open in the literature.

The boundedness problem can be stated as follows.\\
\problemx{$d$-VASS Boundedness} {A $d$-VASS $V=(Q,T)$ and a 
  configuration $p(\vec{u})$.}  
         {Is $\{ q(\vec{v}) : p(\vec{u}) \xrightarrow{*}_{\N^d} q(\vec{v}) \}$ an infinite set?\\}
\medskip 

\noindent
The coverability problem can be stated as follows.\\

\problemx{$d$-VASS Coverability} {A $d$-VASS $V=(Q,T)$ and
  configurations $p(\vec{u})$ and $q(\vec{v})$.}  
           {Does there exist $\vec{w}\ge\vec{v}$ such that
           $p(\vec{u}) \xrightarrow{*}_{\N^d} q(\vec{w})$?\\}
\medskip 

\noindent
The complexity of boundedness and coverability for $d$-VASS in a fixed
dimension $d$ has been studied by Rosier \& Yen in~\cite{RY86}. They
show that both problems are PSPACE-complete for any fixed $d \ge
4$. Chan~\cite{Chan88} later noted that boundedness is already
PSPACE-complete for $d = 3$, leaving the case $d=2$ as an open
problem.

\begin{theorem}[\cite{RY86,Chan88}]
  Boundedness and coverability in $d$-VASS are PSPACE-complete for any
  fixed $d \ge 3$.
\end{theorem}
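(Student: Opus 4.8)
The plan is to treat membership in $\PSPACE$ and $\PSPACE$-hardness separately, and to split the hardness argument according to the dimension. Membership for every \emph{fixed} $d$ is exactly the Rackoff-style analysis of Rosier and Yen~\cite{RY86}: whenever a target is coverable (resp.\ a $d$-VASS is unbounded) there is a witnessing run (resp.\ a witnessing path in the Karp--Miller tree) whose length is singly exponential in the input size when $d$ is fixed, and such a run can be guessed on the fly while storing only the current configuration and a step counter, hence nondeterministically in polynomial space; $\PSPACE$ then follows by Savitch's theorem. For $d\ge 4$ both $\PSPACE$ lower bounds are likewise already contained in~\cite{RY86}, so the only genuinely new points are the two lower bounds at $d=3$, after which a trivial padding argument extends everything to all $d>3$.

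For boundedness at $d=3$ I would simply invoke the $\PSPACE$-hardness proof of Chan~\cite{Chan88}. To lift a $\PSPACE$-hard $3$-VASS boundedness instance to any dimension $d>3$, adjoin $d-3$ \emph{inert} coordinates, i.e.\ coordinates that no transition ever modifies, with initial value $0$: the reachability set of the padded VASS is then the reachability set of the original one times the singleton $\{(0,\ldots,0)\}\subseteq\N^{d-3}$, so it is infinite if and only if the original one is. Hence boundedness in $d$-VASS is $\PSPACE$-hard, and therefore $\PSPACE$-complete, for every $d\ge 3$.

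For coverability at $d=3$ the plan is a polynomial-time reduction from $2$-VASS reachability, which is $\PSPACE$-hard by Theorem~\ref{main2}. Given a $2$-VASS $V=(Q,T)$ and configurations $p(\vec{u})$ and $q(\vec{v})=q(v_1,v_2)$, I would first reduce to the case of target $\vec{0}$: add a fresh state $q'$ and the single transition $(q,-\vec{v},q')$; since this is the only edge into $q'$, reaching $q(v_1,v_2)$ in $V$ is equivalent to reaching $q'(0,0)$ in the extended $2$-VASS $V'$. Next, out of $V'$ build a $3$-VASS $V''$ on the same states in which a third coordinate $c_3$ obeys the invariant $c_1+c_2+c_3=K$: replace every transition $(r,(a,b),s)$ of $V'$ by $(r,(a,b,-a-b),s)$ and start $c_3$ at $K-u_1-u_2$, where $K$ is chosen strictly larger than any value of $c_1+c_2$ attainable along some shortest run witnessing $p(\vec{u})\xrightarrow{*}_{\N^2}q'(0,0)$ in $V'$. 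Corollary~\ref{cor:2-vass-reachability} (via the length bound of Lemma~\ref{lem:runs:bounds}) supplies such a $K$ with polynomially many bits: if a witness exists at all, there is one of length at most some $N\le 2^{(\abs{V}+\log\norm{\vec{u}})^{O(1)}}$, along which $c_1+c_2\le 2\norm{\vec{u}}+2N\norm{T}$, so one may take $K\defeq 2\norm{\vec{u}}+2N\norm{T}+1$. Then the first two coordinates of any $V''$-run form a $V'$-run (the guard $c_3\ge 0$ only adds the constraint $c_1+c_2\le K$), and conversely the chosen short witness respects $c_1+c_2\le K$ throughout and hence lifts to $V''$; moreover, because $c_1,c_2$ are always non-negative and $c_1+c_2+c_3=K$ is invariant, covering $q'(0,0,K)$---which demands $c_3\ge K$, i.e.\ $c_1+c_2\le 0$---is the same as reaching $q'(0,0)$. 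Thus $p(\vec{u})\xrightarrow{*}_{\N^2}q(\vec{v})$ in $V$ if and only if $q'(0,0,K)$ is coverable from $p(u_1,u_2,K-u_1-u_2)$ in $V''$. (A slicker variant avoids computing $K$ altogether: Theorem~\ref{main2} was itself obtained in Lemma~\ref{lem:2-vass-pspace-hard} by reducing bounded one-counter automaton reachability~\cite{FJ13} to a $2$-VASS in which $c_1+c_2=b$ is already invariant, and on such a $2$-VASS covering $q(v,b-v)$ coincides with reaching it; this already lands in dimension two, and a single inert coordinate raises it to three.) Padding with inert coordinates as before handles all $d>3$, so coverability in $d$-VASS is $\PSPACE$-complete for every $d\ge 3$.

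The step I expect to be the crux is keeping the coverability reduction within \emph{three} coordinates while still certifying that \emph{both} $2$-VASS counters reach their targets \emph{exactly}. Shadowing each of the two counters by its own complementary counter would work but would cost a fourth coordinate and merely re-derive the $d\ge 4$ bound of~\cite{RY86}; tracking only the single quantity $c_1+c_2$ and exploiting non-negativity to pin both counters to $0$ is what keeps the dimension at three, at the price of needing an effective length bound---precisely the bound that Theorem~\ref{main1} and Corollary~\ref{cor:2-vass-reachability} provide.
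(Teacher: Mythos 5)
This theorem is stated in the paper purely as a citation of \cite{RY86} and \cite{Chan88}; the paper supplies no proof of its own, so there is nothing to compare your argument against line by line. Your proposal is correct, and it is worth recording how it relates to what the paper actually does. The parts you delegate to the literature (the Rosier--Yen singly-exponential witness bounds giving \PSPACE\ membership for fixed $d$, the $d\ge 4$ lower bounds, and Chan's boundedness lower bound at $d=3$) match the paper's attribution exactly. The one piece the paper's own prose does not explicitly account for is \PSPACE-hardness of \emph{coverability} at $d=3$, and you correctly isolate it and prove it. Your main reduction (enforcing the invariant $c_1+c_2+c_3=K$ with $K$ extracted from the run-length bound of Lemma~\ref{lem:runs:bounds}) is sound and non-circular, since Theorem~\ref{main2} is established independently of this theorem; it is, however, anachronistic relative to the attribution to \cite{RY86,Chan88}, as it leans on this paper's own machinery. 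Your ``slicker variant'' is the more telling observation: because the 2-VASS produced in Lemma~\ref{lem:2-vass-pspace-hard} keeps $c_1+c_2=b$ invariant, covering $q(v,b-v)$ already coincides with reaching it, so coverability is \PSPACE-hard in dimension two --- this is precisely the gadget the paper deploys immediately after this theorem to prove its final Corollary strengthening the bound to $d\ge 2$. In effect you have re-derived that stronger corollary (for coverability) while proving the weaker cited statement; the only thing you would still need for the full $d\ge2$ corollary is the analogous boundedness gadget (the extra state $r$ with an incrementing self-loop), which you do not need here because Chan's result already settles boundedness at $d=3$.
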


It is moreover known that for $d=1$ those problems are
NP-complete~\cite{Haa12}. From the results in~\cite{FJ13} and
Lemma~\ref{lem:2-vass-pspace-hard}, it now easy to improve the lower
bounds from~\cite{RY86,Chan88} and show that reachability and
coverability are PSPACE-complete for every fixed $d\ge 2$. An instance
of reachability between $p(u)$ and $q(v)$ in a bounded one-counter
automaton with bound $b$ can be reduced to boundedness and
coverability in 2-VASS by using the construction provided in
Lemma~\ref{lem:2-vass-pspace-hard} as a gadget and adding an extra
transition from $q$ to a fresh control state $r$. This transition
simply checks whether the current counter values are equal to
$(v,b-v)$ by subtracting this value from the counter, and $r$ has a
single self-loop which increments both counters by one, say. Together
with the upper bounds established in~\cite{RY86}, the above-mentioned
proof sketch yields the following theorem as a corollary.

\begin{corollary}
  Boundedness and coverability in $d$-VASS are PSPACE-complete for any
  fixed $d \ge 2$.
\end{corollary}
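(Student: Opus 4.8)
The plan is to combine a pre-existing PSPACE upper bound with a new PSPACE lower bound in dimension two, the latter being a mild extension of the reduction underlying Lemma~\ref{lem:2-vass-pspace-hard}. For the upper bound there is nothing to do beyond citing~\cite{RY86}: their polynomial-space procedure for boundedness and coverability applies to every fixed dimension, in particular to $d = 2$. Hence the entire content of the corollary is to push PSPACE-hardness down from $d = 3$ to $d = 2$, and then to observe that hardness trivially lifts to all $d \ge 2$.

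For the lower bound in dimension two I would reduce from reachability in bounded one-counter automata, which is PSPACE-complete by~\cite{FJ13}. Given a bounded one-counter automaton $V = (Q,T,b)$ and configurations $p(u), q(v)$ with $u,v \in [0,b]$, first apply Lemma~\ref{lem:2-vass-pspace-hard} to obtain the 2-VASS $V' = (Q,T')$ with $T' = \{h(t) : t \in T\}$ and $h(r,z,r') = (r,(z,-z),r')$, together with the injection $\varphi(r(z)) = r(z,b-z)$; recall that $p(u) \xrightarrow{\pi}_{[0,b]} q(v)$ in $V$ if, and only if, $\varphi(p(u)) \xrightarrow{h(\pi)}_{\N^2} \varphi(q(v))$ in $V'$. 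Now augment $V'$ to a 2-VASS $V''$ by adding a fresh state $r$, a \emph{verification} transition $(q,(-v,-(b-v)),r)$, and a self-loop $(r,(1,1),r)$.

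The key facts to check are the following. First, since every transition of $V'$ preserves the sum of the two counters, and we start from $\varphi(p(u))$ where this sum equals $b$, all configurations reachable in $V''$ before the verification transition is taken lie in $Q \times \{(a,b-a) : 0 \le a \le b\}$, a finite set. Second, the verification transition is the unique transition entering $r$, and at a configuration $q(a,b-a)$ it is enabled precisely when $a = v$, i.e.\ at $\varphi(q(v))$; hence $r$ is reachable from $\varphi(p(u))$ in $V''$ if, and only if, $p(u) \xrightarrow{*}_{[0,b]} q(v)$ in $V$. Third, once $r(0,0)$ is reached, the self-loop yields $r(n,n)$ for every $n \ge 0$, so the reachability set becomes infinite and the configuration $r(1,1)$ becomes coverable. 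Combining these, the reachability set of $\varphi(p(u))$ in $V''$ is infinite if, and only if, $p(u) \xrightarrow{*}_{[0,b]} q(v)$ in $V$, and $r(1,1)$ is coverable from $\varphi(p(u))$ in $V''$ if, and only if, $p(u) \xrightarrow{*}_{[0,b]} q(v)$ in $V$. This proves PSPACE-hardness of boundedness and of coverability for $d = 2$. For arbitrary fixed $d \ge 2$ one regards $V''$ as a $d$-VASS in which the extra $d-2$ counters are updated by $0$ on every transition; the two reductions carry over verbatim, and together with the PSPACE upper bound of~\cite{RY86} this gives PSPACE-completeness for every fixed $d \ge 2$.

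I expect the only delicate point to be the invariant in the first and third facts above: one must be sure that the simulation inside $V'$ genuinely cannot leave $[0,b]^2$ before the verification step, and that adding the self-loop at $r$ does not create a spurious path satisfying the guard of the verification transition. Both are immediate once the invariant ``counter $2$ $=$ $b$ $-$ counter $1$'' from Lemma~\ref{lem:2-vass-pspace-hard} is spelled out and one notes that $r$ has no outgoing transition other than its self-loop.
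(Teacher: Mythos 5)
Your proposal is correct and follows essentially the same route as the paper: reduce from reachability in bounded one-counter automata via the gadget of Lemma~\ref{lem:2-vass-pspace-hard}, add a fresh state $r$ reached by a transition subtracting $(v,b-v)$ (enabled exactly at $\varphi(q(v))$ thanks to the invariant that the two counters sum to $b$), equip $r$ with a self-loop incrementing both counters, and invoke the upper bounds of~\cite{RY86}. You merely spell out the sum-preservation invariant and the padding to dimension $d\ge 2$ in more detail than the paper's sketch does.
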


\section{Conclusion and Future Work} \label{sec:conclusion}

In this paper, we have located the complexity, i.e.,
PSPACE-completeness, of the reachability problem for $2$-VASS. We have
also noted that the coverability and boundedness problems for $2$-VASS
are PSPACE-complete. When numbers are encoded in unary we showed that
$\Z$-reachability in $d$-VASS is NL-complete for any fixed
$d$. Reachability for unary $2$-VASS was shown to be $\NL$-hard and in
$\NP$. Our approach does not immediately lead to a better upper bound
than $\NP$ mainly due to the following reason.
Our proof showed that the reachability relation can be captured by a
set of linear path schemes whose number of cycles is quadratically
bounded.  The matrix of the resulting system of linear Diophantine
inequalities thus has quadratically many columns and its smallest
solutions can thus become exponentially large.  The latter correspond
to the exponents of the cycles of the linear path scheme and hence of
the length of the run.

It could be interesting to study the reachability problem in $d$-VASS
with a single control state, known as $d$-VAS, for $2 \leq d \leq 5$,
since $5$-VAS (resp. $2$-VAS) are slightly more (resp. less) general
than $2$-VASS and have semi-linear reachability sets ~\cite{HP79}.  A
more challenging problem seems to be to obtain a first complexity
upper bound for reachability in 3-VASS.

\bibliographystyle{abbrv}
\bibliography{bibliography}

\end{document}